\newtheorem{theorem}{Theorem} 
\newtheorem{corollary}{Corollary}
\newtheorem{lemma}{Lemma} 
\newtheorem{proposition}{Proposition} 
\theoremstyle{definition}
\newtheorem{remark}{Remark}  
\newtheorem{condition}{Condition}
\newcommand{\E}{\mathbb{E}}
\newcommand{\R}{\mathbb{R}}
\renewcommand{\P}{\mathbb{P}}
\newcommand{\bs}{\boldsymbol}
\begin{document}

\title{Inference in a generalized Bradley-Terry model for paired comparisons with covariates and a growing number of subjects} 

\author{Ting Yan\thanks{Department of Statistics, Central China Normal University, Wuhan, 430079, China.
\texttt{Email:} tingyanty@mail.ccnu.edu.cn.}
\\
Central China Normal University
}
\date{}

\maketitle

\begin{abstract}
Motivated by the home-field advantage in sports, we propose a generalized Bradley--Terry model that incorporates covariate information
for paired comparisons. It has an $n$-dimensional merit parameter $\bs{\beta}$ and a fixed-dimensional regression coefficient
$\bs{\gamma}$ for covariates.
When the number of subjects $n$ approaches infinity and  the number of comparisons between any two subjects is fixed,
we show the uniform consistency of the  maximum likelihood estimator (MLE) $(\widehat{\bs{\beta}}, \widehat{\bs{\gamma}})$ of $(\bs{\beta}, \bs{\gamma})$
Furthermore, we derive the asymptotic normal distribution of the MLE by characterizing its asymptotic representation.
The asymptotic distribution of $\widehat{\bs{\gamma}}$ is biased, while that of $\widehat{\bs{\beta}}$ is not.
This phenomenon can be attributed to the different convergence rates of $\widehat{\bs{\gamma}}$ and $\widehat{\bs{\beta}}$.
To the best of our knowledge, this is the first study to
explore the asymptotic theory in paired comparison models with covariates in a high-dimensional setting.
The consistency result is further extended to an Erd\H{o}s--R\'{e}nyi comparison graph with a diverging number of covariates.
Numerical studies and a real data analysis demonstrate our theoretical findings.

\vskip 5 pt \noindent
\textbf{Key words}: Asymptotic normality, Bradley-Terry model, Consistency, Covariate, Growing number of parameters. \\

{\noindent \bf Mathematics Subject Classification:} 60F05, 62J15, 62F12, 62E20.
\end{abstract}

\vskip 5pt


\section{Introduction}
When it was difficult for subjects to rank simultaneously based on the judgment of one person, they were arranged as follows:
repeatedly compared to each other in pairs.
Paired comparison data also arise in situations in which there are natural win-loss results between two subjects without the presence of a judge.
Subjects could be
teams, players, beverages, journals, and products.
One of the fundamental problems in paired comparisons is the production of a ranking for all subjects.
Because global rankings are not easily obtained for non round-robin tournaments,
developing a statistical model to estimate rankings is desirable.
The Bradley--Terry model \citep{bradley1952rank} is one of the most popular models for this purpose, dating back to at least 1929 \citep{zermelo1929berechnung}.
This is occasionally referred to as the Bradley--Terry--Luce model \citep{Luce-1959}.
It assigns one merit parameter $\beta_i$ to each subject and postulates that subject $i$ beats subject $j$ with a probability
$e^{\beta_i-\beta_j}/( 1 + e^{\beta_i-\beta_j})$, independent of other comparisons. The ranking of all the subjects is determined by
their maximum likelihood estimator (MLE).
Since the work of \cite{bradley1952rank}, it has had numerous applications, ranging from rankings of
classical sports teams \citep{masarotto2012the, Sire_2008baseball, Whelan-2020-Hockey}
and scientific journals \citep{stigler1994citation, Varin-2016-jrsa}
to the quality of product brands \citep{radlinski2007active},
such as two brands of wine of some type,
for multiclass classification \citep{hastie1998classification,clmenon2020multiclass}
and crowdsourcing \citep{chen2016overcoming}.

Among many paired comparison models, the Bradley--Terry model is unique,
satisfying the Bradley--Terry--Luce system \citep{colonius1980representation}.
\cite{hajek2014minimax} and \cite{shah2016estimation} demonstrated that the MLE in the Bradley--Terry model
is minimax-optimal for estimating the merit parameters in terms of the mean squared error.
A detailed investigation of the Bradley--Terry model, including
maximum likelihood estimation, hypothesis testing and
goodness-of-fit tests of the model can be found in Section 4 of \cite{David1988}.
To facilitate a wide range of applications, some generalized models have been proposed  \cite[e.g.][]{Luce-1959,rao1967ties,davidson1970on,huang2006generalized}.
Algorithms for solving MLEs in these models have been established \cite[e.g.][]{Ford1957, hunter2003mm, vojnovic2019convergence}.

As highlighted by \cite{agresti2012} (p. 455), most sports have
home-field advantage: A team is more likely to win when playing in its home city.
He introduced  a ``home-field advantage" model by assuming the logit of the probability of  home $i$ beating away $j$ is
the merit difference  $\beta_i-\beta_j$ plus an effect parameter $\gamma$, where $\gamma$ indicates a home-field advantage if $\gamma>0$.
The home team of the two evenly matched teams has the probability $\exp(\gamma)/(1 + \exp(\gamma))$ of winning.
We extend this model to a general form.

Let $Z_{ijk}$ be a deterministic $p$-dimensional vector denoting covariate information
associated with the $k$th comparison between subjects $i$ and $j$, where the dimension $p$ is fixed.
It is suitable to require $Z_{ijk}=-Z_{jik}$, because
if something is advantageous to $i$ then it is disadvantageous to $j$.
We incorporate the covariate information into the Bradley--Terry model by specifying the winning probability of $i$ against $j$ as
\begin{equation}
\label{model}
\P( i \mbox{~wins~} j |Z_{ijk}, \bs{\gamma}, \beta_i, \beta_j ) = \frac{ \exp( \beta_i - \beta_j + Z_{ijk}^\top \bs{\gamma}) }{
1 + \exp( \beta_i - \beta_j + Z_{ijk}^\top \bs{\gamma}) },
\end{equation}
where $\bs{\gamma}$ is a $p$-dimensional regression coefficient of the covariates and $\beta_i$ is the merit parameter of $i$.
Under the restriction $Z_{ijk}=-Z_{jik}$, the probability above is well defined.
We call it  the covariate-Bradley--Terry model (abbreviated as ``CBTM") hereafter.

The covariate $Z_{ijk}$ can be formalized according to the situations of the teams or the attributes of the subjects.
If $X_{ik}$ and $X_{jk}$ denote $p$-dimensional attributes of $i$ and $j$ in the $k$th comparison, respectively, they can be used to construct vector $Z_{ijk}=\mathbf{g}(X_{ik}, X_{jk})$ for an asymmetric vector function, where
$\mathbf{g}(\mathbf{x}, \mathbf{y})=-\mathbf{g}(\mathbf{y},\mathbf{x})$. For instance, if we let $\mathbf{g}(X_{ik}, X_{jk})$ be equal to $X_{ik} - X_{jk}$, then we can measure the dissimilarity between the two subjects.
As an example, if the game is played in the city of team $i$ or subject $i$ is listed first, then we let $X_{ik}=1$ and $X_{jk}=0$ ($p=1$), such that $Z_{ijk}=1$ and $Z_{jik}=-1$.
In this case, the CBTM reduces to the home-field advantage model in \cite{agresti2012}.

In several paired comparison situations,
the number of subjects $n$ is typically large, whereas the number of comparisons $m_{ij}$ for any pair $(i,j)$ is relatively small
\citep[e.g.][]{simons-yao1999}.
Specifically, all $m_{ij}$ are bounded by a fixed constant (e.g., each pair of teams in the NBA plays at most four games in a regular season).
Theoretical analysis of the Bradley--Terry model with a diverse number of subjects has received wide attention in recent years.
We have elaborated on these after stating our main results.
However, the existing high-dimensional paired comparison literature has little involvement in
additional information (e.g., covariate), but for win-loss outcomes.
As shown in \cite{agresti2012}, this can significantly influence the ranking.

To determine how covariates influence the estimation of merit parameters in the high dimension,
we drew  Figure \ref{figure-intu} to evaluate the $\ell_\infty$-error $\|\widehat{\bs{\beta}} - \bs{\beta}\|_\infty$ via a simulation study here,
where  $\widehat{\bs{\beta}}$ was fitted using the maximum likelihood estimation in the original BTM (shorthand of Bradley--Terry model) and the CBTM with the correct model specification for comparison.
The figure on the left shows   
that the  error in the BTM increases rapidly with $\gamma$ whereas the error in the CBTM changes only slightly.
The former became increasingly larger than the latter for $\gamma$.
From the right figure, we can see the following:
the error in the BTM is larger than that in the CBTM when $n$ increases and $\gamma$ is fixed.
Even when $n$ increased to a sufficiently large value, the error in the BTM did not decrease.
This indicates that the CBTM has a significant improvement over the BTM when there is covariate effect.
This partly motivated the present study.

\begin{figure}[h]
  \centering
  \subfigure{
  \centering
    \includegraphics[width=0.45\linewidth, height=2.0in]{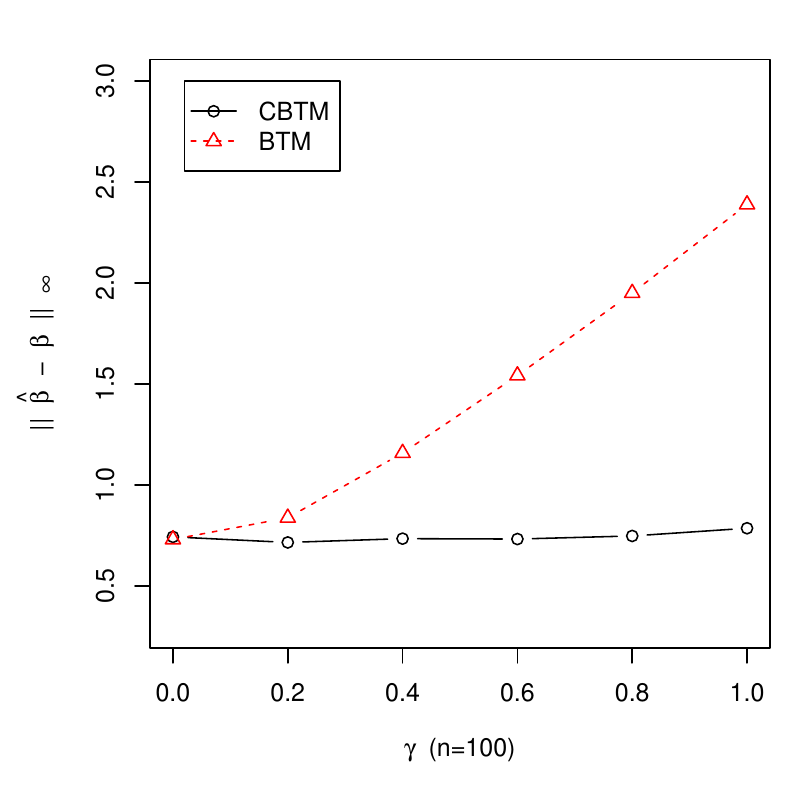}}
  \subfigure{
    \centering
    \includegraphics[width=0.45\linewidth, height=2.0in]{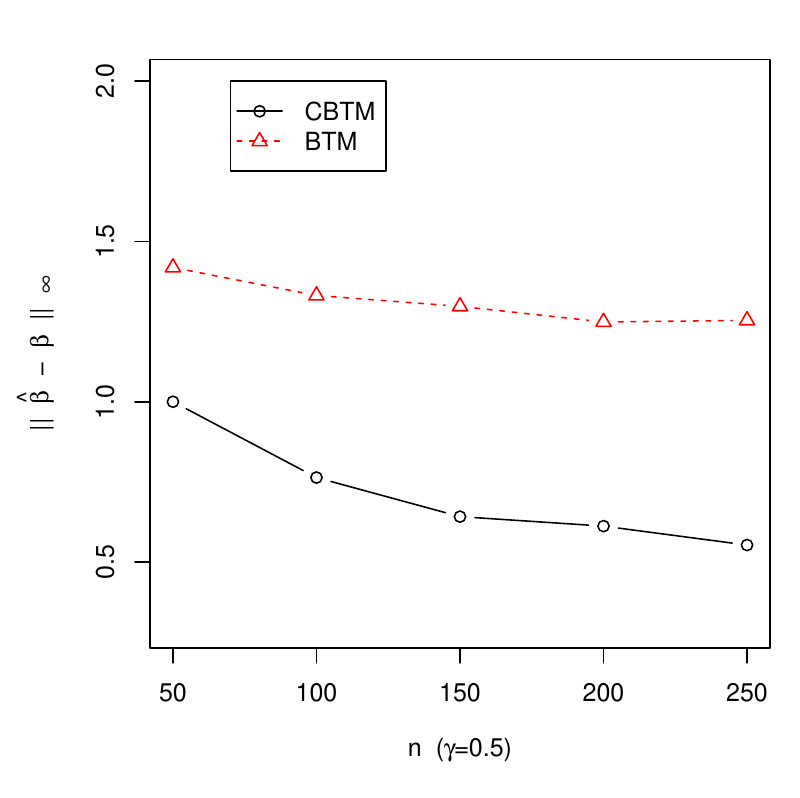}}
   \caption{ The plots of the average values of $\|\widehat{\bs{\beta}} - \bs{\beta} \|_\infty$ changing with $\gamma$ (when $n$ is fixed) in the left
   and changing with $n$ (when $\gamma$ is fixed) in the right.
   The red color indicates the error with fitted values in the BTM without covariates while
   the black color indicates the error in the CBTM. All $\beta_i$s were independently generated from the uniform distribution $U(0,1)$. Each pair had only one comparison.
We set $Z_{ijk}=1$ when $i<j$ and $Z_{ijk}=-1$ when $i>j$. 
The win-loss outcomes were generated according to the CBTM.
The average value of $\|\widehat{\bs{\beta}} - \bs{\beta}\|_\infty$ was recorded out of $100$ repetitions.
}
\label{figure-intu}
\end{figure}

The contributions of this study are as follows.
\begin{itemize}
\item
When $n$ goes to infinity and all $m_{ij}(>0)$ are fixed, we establish
the upper bounds of $\| \widehat{\bs{\beta}} - \bs{\beta}\|_\infty$
and $\| \widehat{\bs{\gamma}} - \bs{\gamma}\|_\infty$ under mild conditions,
where $(\widehat{\bs{\beta}}, \widehat{\bs{\gamma}})$ is the MLE of $(\bs{\beta}, \bs{\gamma})$.
Roughly speaking, the former is in the order of $O_p( (\log n/n)^{1/2} )$ while
the latter is in the order of $O_p(\log n/n)$. 
This leads to the uniform consistency of the MLE.
A key idea for the proof is that we use a two-stage method
that alternatively obtains the $\ell_\infty$-error between an estimator $\widehat{\bs{\beta}}_{\gamma}$ and $\bs{\beta}$ for a given $\bs{\gamma}$ and
 the $\ell_\infty$-error between an estimator $\widehat{\bs{\gamma}}_{\beta}$ and $\bs{\gamma}$ for a given $\bs{\beta}$.

\item
We derive the asymptotic normal distribution of the MLE by characterizing its asymptotic representation.
This is proved by applying Taylor's expansions to a series of functions constructed from likelihood equations and showing
remainder terms in the expansions are  asymptotically neglect.
The asymptotic distribution of the MLE $\widehat{\bs{\gamma}}$ contains a bias term
while there is no bias for $\widehat{\bs{\beta}}$. 
This is because of different convergence rates for $\widehat{\bs{\beta}}$ and $\widehat{\bs{\gamma}}$.

\item
We further extend the consistency result to  an Erd\H{o}s-R\'{e}nyi random graph
with a diverging number of $p_n$, where the sampling probability is allowed to be close to the Erd\H{o}s-R\'{e}nyi lower bound \citep{erd6s1960evolution}.

\end{itemize}

Simulation studies and a real data analysis are conducted to illustrate the theoretical results.

\subsection{Related work}
\label{section:related}

Studies on the Bradley--Terry model in high-dimensional settings have recently attracted significant interest.
In a pioneering study, \cite{simons-yao1999} 
proved the uniform consistency and asymptotic normal distribution of the MLE 
when the number of subjects approaches infinity and each pair has a fixed number of comparisons.
To relax the dense comparison assumption, \cite{yan2012sparse} extended their results to a fixed sparse comparison graph by controlling the length from one
subject to another subject. 
\cite{Han-chen2020} further extended \citeauthor{simons-yao1999}'s results to an Erd\"{o}s--R\'{e}nyi comparison graph under a weak sparsity condition on $q_n$,
where $q_n$ is the probability that any two subjects will be compared.
\cite{chen2019} established the $\ell_\infty$-error bounds for the spectral estimator 
and regularized the MLE, which leads to
sample complexity for the top-$K$ rankings.
\cite{chen2021optimal} further studied the $\ell_\infty$-error of the MLE and obtained the minimax rate for top-$k$ ranking.
However, covariate information was not considered in these studies, which is the focus of this study.

While revising this paper\footnote{An original version of  was submitted to some journal on April 25, 2020.
I make this manuscript public on ArXiv until now.}, a new related work appears.
\cite{fan2024uncertainty} extend the Bradley--Terry model to incorporate the covariate information,
where the covariate term is $(X_i - X_j)^\top \bs{\gamma}$ and
$X_i$ denotes the individual-level attribute $X_i$ of subject $i$. In contrast,
the covariate term in model \eqref{model} is $Z_{ijk}^\top \bs{\gamma}$, which contains the special case $(X_i - X_j)^\top \bs{\gamma}$.
It is clear that  \cite{fan2024uncertainty} characterize only the individual level covariate information and do not address
such covariates associated with each paired comparision (e.g.,  home-field advantage).
In addition, our proof strategy is different from theirs, where \cite{fan2024uncertainty} analyze the consistency of the MLE by using a constrained maximum
likelihood technique with a projected gradient descent algorithm and derive asymptotic distributions of the MLE by approximating the MLE via the minimizer of the quadratic approximation of the likelihood function.
In this study, we use a two-stage technique that alternatively obtains the $\ell_\infty$-error between an estimator $\widehat{\bs{\beta}}_{\gamma}$ and $\bs{\beta}$ and
 the $\ell_\infty$-error between an estimator $\widehat{\bs{\gamma}}_{\beta}$ and $\bs{\gamma}$, to show the consistency of the MLE and characterize asymptotic representations to obtain its asymptotic distributions.

Note that the CBTM can be recast into a logistic regression model.
The ``large $N$, diverging $p_N$" framework in generalized linear models (GLMs) has been explored, where
$N$ is the sample size, and $p_N$ is the dimension of the parameter space.
\cite{portnoy1988} showed the asymptotic normality of the MLE in exponential family of distributions on independent and identically distributed samples when $p_N^2=o(N)$.
\cite{HE2000120} built the asymptotic normality of $M$-estimators when $p_N^2\log p_N = o(N)$.
\cite{wang2011} established the consistency of the generalized estimating equations estimator when
$p^2_N = o(N)$ and its asymptotic normality when  $p^3_N = o(N)$.
In our asymptotic framework for the CBTM, $p_N^2/N \to 1/2$, not $0$, where $p_N = p+n$ and $N=n(n+1)/2$ if each pair has only one comparison.
Therefore, these asymptotic results are not applicable in this case.

A relevant work to GLMs is \cite{liang2012}, who study the asymptotic regime $p_N=o(N)$ in a logistic regression model.
Let $\lambda_{\min}(S_N)$ and $\lambda_{\max}(S_N)$ denote the minimum and maximum eigenvalues of $S_N$,
where $S_N=\sum_{i=1}^N x_i x_i^\top$ and $x_i$ is the $p_N$-dimensional covariate vector of individual $i$.
Assuming that
$c_1 N \le \lambda_{\min}(S_N) \le \lambda_{\max} (S_N) \le c_2 N$ for two constants $c_1$ and $c_2$, they show the asymptotical normality of  the MLE
by extending the proof strategy in \cite{yin2006asymptotic} for GLMs with fixed dimensions to an
increasing dimension.
In CBTM, 
the first $n$ diagonal entries of $S_N$ are of the order of $n$, because of the special structure
of the design matrix for the merit parameters $\bs{\beta}$, whereas the last $p$ diagonal entries of $S_N$ are on the order of $n^2$.
Because of the different orders of the diagonal elements of $S_N$, the ratio $\lambda_{\max} (S_N)/\lambda_{\min}(S_N)$ is not constant. In a broad simulation study, we found the following:
$\lambda_{\max} (S_N)/\lambda_{\min}(S_N)$ is of the order of $O(N)$, far from the assumption that
$ \lambda_{\max} (S_N)/\lambda_{\min}(S_N) \le c_2/c_1$, Therefore, the conditions in \cite{liang2012} cannot be applied to CBTM.
Interestingly, a recent study reported the following:
\cite{ZHOU2021107154} extended \citeauthor{yin2006asymptotic}'s proof to GLMs
with a diverging number of covariates, which requires, except for the same conditions for $S_N$ as in \cite{liang2012},
 the condition $p_N^2/N\to 0$ to guarantee asymptotic normality rather than the weaker condition $p_N/N\to 0$. 
In addition,
the asymptotic distribution of the MLE in the aforementioned literature is not biased  \cite[e.g.][]{haberman1977,portnoy1988,wang2011,liang2012,ZHOU2021107154}.
In sharp contrast to these studies, the asymptotic distribution of MLE $\widehat{\bs{\gamma}}$
has a bias term, whereas that of the MLE $\widehat{\bs{\beta}}$ does not.
This phenomenon is referred to as the incidental parameter problem in
econometric literature \cite[e.g.,][]{Graham2017} 
which is caused by different convergence rates  of $\widehat{\bs{\gamma}}$ and $\widehat{\bs{\beta}}$.

In the network setting, the degree heterogeneity and the homophily have been modelled in a similar logistic regression form
\cite[e.g.][]{Graham2017,Yan-Jiang-Fienberg-Leng2018}. However, their focus are network features, which is different.
In addition, the case with the increasing dimension of covariates is not studied in their works.
Model \eqref{model} can also be represented as a log-linear model.
Although the conditions for
the existence of an MLE  have been established \citep{ifenberg2012-Rinaldo},
asymptotic theories remain lacking in high-dimensional cases \cite[e.g.][]{Fienberg:Rinaldo:2007, ifenberg2012-Rinaldo}.

The remainder of this paper is organized as follows. In Section \ref{section:model}, we present the maximum likelihood estimation.
In Section \ref{section:asymptotic}, we present the consistency and asymptotic normality of the MLE.
In Section \ref{section:extension}, we extend the consistency result to an Erd\H{o}s--R\'{e}nyi comparison graph
with a diverging number of covariates.
In Section \ref{section:simulation}, we perform a simulation and provide a real data analysis.
We provide a summary and further discussion in section \ref{section:sd}.
The proofs of these theorems are provided in Section \ref{section:appendix}.
The proofs of the supported lemmas and the proof of Theorem \ref{Theorem-con-incr} are presented in the supplementary material A.
Supplementary material B contains some additional result.

\section{Maximum likelihood estimation}
\label{section:model}

Consider a set of $n+1$ subjects  labelled by ``$0, \ldots, n$".
Let $m_{ij}$ be the number of comparisons between $i$ and $j$ and
$a_{ijk}$ be the outcome in the $k$th comparison, $k=1, \ldots, m_{ij}$, where
$a_{ijk}$ ($\in \{0, 1\}$) is an indictor variable denoting whether $i$ beats $j$ in the $k$th comparison. 
That is, if $i$ wins $j$, then $a_{ijk}=1$; otherwise, $a_{ijk}=0$.
We assume that $m_{ij}\le m_*$ for all $i\neq j$ and $m_*$ is a fixed constant.
The win-loss results are recorded in a matrix:
 $A=(a_{ij})_{n\times n}$, where
$a_{ij}$ is the number of $i$ beating $j$ and the diagonal elements $a_{ii}$ are set to zero by
default, i.e., $a_{ii}=0$.
Let $d_i= \sum_{j \neq i} a_{ij}$ be the total number of wins for subject $i$ and $\bs{\beta}=(\beta_1, \ldots, \beta_n)^\top$.

Because adding the same constant to all $\beta_i$ results in the invariance of probability \eqref{model},
we set $\beta_0=0$ for model identification, as in \cite{simons-yao1999}. Other restrictions are possible; for example, $\sum_i \beta_i=0$.
In model \eqref{model}, the log-likelihood function is
\renewcommand{\arraystretch}{1.3}
\[
\begin{array}{rcl}
\ell(\bs{\beta}, \bs{\gamma})
& = & \sum\limits_{0 \le i < j \le n} \sum\limits_{k=1}^{m_{ij}} \{ a_{ijk}(\beta_i - \beta_j + Z_{ijk}^\top \bs{\gamma} )
- \log ( 1 + e^{\beta_i - \beta_j + Z_{ijk}^\top \bs{\gamma} } ) \} \\
& = &
\sum\limits_{i}  \beta_i d_i + \sum\limits_{ i<j } \sum\limits_{k} a_{ijk} (Z_{ijk}^\top \bs{\gamma})
 - \sum\limits_{i<j} \sum\limits_{k} \log ( 1 + \exp( \beta_i - \beta_j + Z_{ijk}^\top \bs{\gamma})).
\end{array}
\]
Write $\mu_{ijk}(\bs{\beta}, \bs{\gamma})$ as the expectation of $a_{ijk}$, where
$\mu_{ijk}(\bs{\beta}, \bs{\gamma})$ is equal to the probability of $i$ winning $j$ in the $k$th comparison given in \eqref{model}.
The maximum likelihood equation is as follows:
\renewcommand{\arraystretch}{1.3}
\begin{equation}\label{eq:moment}
\begin{array}{rcl}
d_i & = & \sum\limits_{j=0,j\neq i}^n \sum\limits_{k=1}^{m_{ij}}\mu_{ijk}(\bs{\beta}, \bs{\gamma}),~~ i=1, \ldots, n, \\
\sum\limits_{0\le i<j \le n}  \sum\limits_{k=1}^{m_{ij}} a_{ijk}Z_{ijk} & = & \sum\limits_{0\le i<j \le n}  \sum\limits_{k=1}^{m_{ij}} Z_{ijk} \mu_{ijk}(\bs{\beta}, \bs{\gamma}).
\end{array}
\end{equation}
It should be noted that the above equations do not contain $d_0$. This is because $\sum_{i=0}^n d_i = \sum_{i<j} m_{ij}$,
The MLE of the parameter $(\bs{\beta},\bs{\gamma})$, denoted as $(\widehat{\bs{\beta}}, \widehat{\bs{\gamma}})$, is the solution to the above
equations due to the convex of the log-likelihood function, where $\hat{\beta}_0=0$.

Let $\mathcal{K}$ be the convex hull of
set
\[
\{(d_0, \ldots, d_{n}, \sum_{i<j} \sum_k Z_{ijk}^\top a_{ijk})^\top: a_{ijk}\in \{0,1\}, 0\le i<j \le n, k=1, \ldots, m_{ij} \}.
\]
As the  normalizing function $\sum_{i<j}\sum_k \log ( 1 + \exp (\beta_i -\beta_j + Z_{ijk}^\top \bs{\gamma}))$ is steep and strictly
convex, by the properties of exponential family of distributions [e.g., Theorem 5.5 in \cite{Brown-1986} (p. 148)], we have the following result.

\begin{proposition}
\label{pro-existence}
The MLE $(\widehat{\bs{\beta}}, \widehat{\bs{\gamma}})$ exists if and only if
$(d_0, \ldots, d_{n}, \sum_{i<j} \sum_k Z_{ijk}^\top a_{ijk})^\top$
lies in the interior of $\mathcal{K}$.
\end{proposition}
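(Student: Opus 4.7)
The plan is to cast the model as a canonical exponential family and invoke the classical existence theorem for the MLE in such families (Theorem 5.5 of Brown, 1986), which is exactly what the excerpt cites.

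First, I would rewrite the log-likelihood in canonical form. With $\beta_0=0$ fixed for identifiability, the natural parameter is $\bs{\theta}=(\beta_1,\dots,\beta_n,\bs{\gamma}^\top)^\top\in\R^{n+p}$, and the sufficient statistic vector is $T=(d_1,\dots,d_n,\sum_{i<j}\sum_k Z_{ijk}^\top a_{ijk})^\top$, with cumulant-generating function $\psi(\bs{\theta})=\sum_{i<j}\sum_k\log(1+\exp(\beta_i-\beta_j+Z_{ijk}^\top\bs{\gamma}))$. Since the paper states the condition in terms of the $(n+1+p)$-vector that also contains $d_0$, I would note that $\sum_{i=0}^n d_i=\sum_{i<j}m_{ij}$ is deterministic, so the convex hull $\mathcal{K}$ lies in an affine hyperplane and ``interior of $\mathcal{K}$'' is to be read as relative interior. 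Under this reading the $(n+1+p)$-statement is equivalent to the $(n+p)$-statement obtained by dropping $d_0$.

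Next, I would verify the two hypotheses needed to apply Brown's theorem: (i) $\psi$ is strictly convex on $\R^{n+p}$, and (ii) $\psi$ is steep, meaning $\|\nabla\psi(\bs{\theta}_k)\|\to\infty$ along any sequence $\bs{\theta}_k$ approaching the boundary of the natural parameter space. For (ii), the natural parameter space is all of $\R^{n+p}$ (each term $\log(1+e^x)$ is finite everywhere), so the boundary is empty and steepness is automatic. For (i), strict convexity of each $\log(1+e^x)$ makes $\psi$ convex; strict convexity requires that the design has full column rank, which follows from the identifiability constraint $\beta_0=0$ together with the absence of degenerate covariates (implicit in taking $\bs{\gamma}\in\R^p$ as a free parameter). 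I would spell out this full-rank check by writing out the gradient and Hessian and checking that the Hessian is positive definite on $\R^{n+p}$.

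With (i) and (ii) in hand, Brown's Theorem~5.5 yields: the MLE of $\bs{\theta}$ exists (and is unique) iff the observed $T$ lies in the interior of the convex support of $T$, equivalently, in the interior of the convex hull of its finite sample space. Finally, I would translate back to the notation of the proposition by reinserting $d_0$ and passing to the relative interior of $\mathcal{K}$.

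The main obstacle is mostly bookkeeping rather than genuine difficulty: one must reconcile the redundancy between the $(n+1+p)$-vector appearing in the proposition's $\mathcal{K}$ and the $(n+p)$-dimensional minimal sufficient statistic, and carefully state that ``interior'' means relative interior with respect to the affine hull $\{x_0+\cdots+x_n=\sum_{i<j}m_{ij}\}$. Once this is handled, the result is a direct corollary of the general exponential-family existence theorem.
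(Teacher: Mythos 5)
Your proposal is correct and follows essentially the same route as the paper, which likewise treats the model as a canonical exponential family and invokes Theorem~5.5 of Brown (1986) via steepness and strict convexity of the normalizing function. Your additional care about the redundancy of $d_0$ (so that ``interior'' must be read as relative interior within the affine hull determined by $\sum_i d_i=\sum_{i<j}m_{ij}$) is a point the paper glosses over, but it does not change the argument.
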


If the vector $(d_0, \ldots, d_{n})$ contains zero elements (corresponding to subjects without wins),
or values being equal to the total number of comparisons of some subjects (corresponding to subjects without losses), this condition did not exist.
If we do not consider covariate information, then this condition can be easily explained in terms of graph language.
The win-loss matrix $A$ can be represented in a directed graph $\mathcal{G}_n$ with nodes denoting subjects
and directed edges denote the number of wins for one subject against another.
The necessary and sufficient condition for the existence of $\widehat{\bs{\beta}}$
is that the directed graph $\mathcal{G}_n$ is strongly connected. That is,  for every partition of subjects into two nonempty sets, a
subject in the second set beats the subject in the first set at least once [\cite{Ford1957}].

We discuss computational issues.
For small $n$,
we can simply use the package ``glm" in the R language to solve the MLE.
For relatively large $n$, it might not have large enough memory to store the design matrix for $\bs{\beta}$ required by the ``glm."
In this case,
we recommend using a two-step iterative algorithm by alternating between solving the first equation in  \eqref{eq:moment} using the fixed-point method in \cite{Ford1957}
and solving the second equation in \eqref{eq:moment} using the Newton-Raphson method.  

\section{Theoretical properties}
\label{section:asymptotic}
In this section, we present the consistency and asymptotic normality of the MLE.
First, we introduce certain notation. For a subset $C\subset \R^n$, let $C^0$ and $\overline{C}$ denote the interior and closure of $C$, respectively.
For a vector $\mathbf{x}=(x_1, \ldots, x_n)^\top\in \R^n$, we denote
$\|\mathbf{x}\|_\infty = \max_{1\le i\le n} |x_i|$ and $\|\mathbf{x}\|_1=\sum_i |x_i|$ by the $\ell_\infty$- and $\ell_1$-norms of $\mathbf{x}$, respectively.
Let $B(\mathbf{x}, \epsilon)=\{\mathbf{y}: \| \mathbf{x}-\mathbf{y}\|_\infty \le \epsilon\}$ be the $\epsilon$-neighborhood of $\mathbf{x}$.
For an $n\times n$ matrix $J=(J_{ij})$, let $\|J\|_\infty$ denote the matrix norm induced by the $\ell_\infty$-norm on the vectors in $\R^n$; that is,
\[
\|J\|_\infty = \max_{\mathbf{x}\neq 0} \frac{ \|J\mathbf{x}\|_\infty }{\|\mathbf{x}\|_\infty}
=\max_{1\le i\le n}\sum_{j=1}^n |J_{ij}|,
\]
where $\|J\|$ denotes a general matrix norm.
Define the maximum absolute entry-wise norm: $\|J\|_{\max}=\max_{i,j}|J_{ij}|$.
We use the superscript ``*" to denote the true parameter under which the data are generated.
When there is no ambiguity, we omit the superscript ``*".
The notation $\sum_{i<j}$  is a shorthand for $\sum_{i=0}^n \sum_{j=i+1}^n$.
Define
\begin{equation}\label{definition-pi}
\mu(x):=\frac{e^x}{1+e^x}, ~~\pi_{ijk}:= \beta_i - \beta_j + Z_{ijk}^\top \bs{\gamma}, ~~\pi_{ijk}^*:= \beta_i^* - \beta_j^* + Z_{ijk}^\top \bs{\gamma}^*.
\end{equation}
The dependence of the expectation of $a_{ijk}$ on these parameters is
through $\pi_{ijk}$.
We can also write $\mu(\pi_{ijk})$ as the expectation of $a_{ijk}$.
We will use the notations $\mu(\pi_{ijk})$ and $\mu_{ijk}(\bs{\beta}, \bs{\gamma})$ interchangeably.
$c, c_0, c_1$, $C, C_0, C_1$, $\ldots$, refer to universal constants.
The specific values may vary from place to place.

We assume that all covariates $Z_{ijk}$ are bounded by a constant; that is,
$\sup_{i,j,k} \|Z_{ijk} \|_2 \le c_1$  for a fixed constant $c_1$.
In this section, we assume that the dimension of $Z_{ijk}$ is fixed.
This condition is presented in \cite{Graham2017}, \cite{Dzemski2019} and \cite{Yan-Jiang-Fienberg-Leng2018}.
We do not consider unbounded covariates here, although our results can be extended to situations with a slow-increasing rate of $z_*$.
If $Z_{ijk}$ is not bounded, we can adopt the logistic transformation
($f(x) = \exp(x)/ (1 + \exp(x)$) to bound it.

\subsection{Consistency}
\label{subsec-consis}

To establish the consistency of the MLE, we introduce a system of score functions based on the maximum likelihood equations:
\begin{equation}\label{eqn:def:F}
\begin{array}{rcl}
H_i(\bs{\beta}, \bs{\gamma}) & = & \sum\nolimits_{ j\neq i}\sum\nolimits_{k} \mu_{ijk}(\bs{\beta}, \bs{\gamma}) - d_i,~~i= 0, \ldots, n, \\
H(\bs{\beta}, \bs{\gamma}) & = & (H_1(\bs{\beta}, \bs{\gamma}), \ldots, H_{n}(\bs{\beta}, \bs{\gamma}))^\top.
\end{array}
\end{equation}
Furthermore, we define $H_{\gamma,i}( \bs{\beta} )$ as the value of $H_i(\bs{\beta}, \bs{\gamma})$ for an arbitrarily fixed $\bs{\gamma}$, and
\[
H_\gamma(\bs{\beta})=(H_{\gamma,1}(\bs{\beta}), \ldots, H_{\gamma,n}(\bs{\beta}))^\top.
\]
Let $\widehat{\bs{\beta}}_\gamma$ be the solution to $H_\gamma(\bs{\beta})=0$.
Correspondingly, we define two additional score functions:
\begin{eqnarray}
\label{definition-Q}
Q(\bs{\beta}, \bs{\gamma}) & = & \sum\nolimits_{i<j} \sum\nolimits_k Z_{ijk} \{   \mu_{ijk}(\bs{\beta}, \bs{\gamma}) - a_{ijk} \}, \\
\label{definition-Qc}
Q_c(\bs{\gamma}) & = & \sum\nolimits_{i<j} \sum\nolimits_k Z_{ijk} \{ \mu(\widehat{\beta}_{\gamma,i} - \widehat{\beta}_{\gamma,j} +  Z_{ijk}^\top \bs{\gamma}) - a_{ijk} \}.
\end{eqnarray}
$Q_c(\bs{\gamma})$ can be viewed as a concentrated or profiled function of
$Q(\bs{\beta}, \bs{\gamma})$, where the merit parameter $\bs{\beta}$ was profiled.
Clearly, if $(\widehat{\bs{\beta}}, \widehat{\bs{\gamma}})$ exist, then
\begin{equation*}\label{equation:FQ}
H(\widehat{\bs{\beta}}, \widehat{\bs{\gamma}})=0,~~H(\widehat{\bs{\beta}}_\gamma, \bs{\gamma})=H_\gamma(\widehat{\bs{\beta}}_\gamma)=0,~~Q(\widehat{\bs{\beta}}, \widehat{\bs{\gamma}})=0,
~~Q_c(\widehat{\bs{\gamma}})=0.
\end{equation*}

Note that model \eqref{model} contains two sets of parameters: a merit vector parameter $\bs{\beta}$ with a growing dimension, and
regression coefficient $\bs{\gamma}$ of covariates with fixed dimensions.
If we employ the classical strategy for the proof of consistency that aims to show  the log-likelihood function $\ell( \bs{\beta}, \bs{\gamma} )$
has its maximum value in an $\epsilon$-neighborhood around the true parameter,
we face two significant challenges: addressing an increasing
dimension problem and addressing the non-identical distribution across observations.
It is unclear which techniques can be used to address them.

In the absence of covariates, \cite{simons-yao1999} proved the consistency of the MLE through two key steps that first bound the probability that
the strong connection condition in the win-loss comparison graphs failed.
 Then, we find a set of common neighborhoods with ratios close to the maximum ratio $\hat{u}_{i_0}/u_{i_0}$ and
the minimum ratio $\hat{u}_{i_1}/u_{i_1}$, where $\hat{u}_i = e^{\hat{\beta}_i}$ and $u_i = e^{\beta_i}$,
$i_0 = \arg \max_i \hat{u}_i/ u_i$ and  $i_1 = \arg \min_i \hat{u}_i/u_i$.
The first step establishes the existence of an MLE with a high probability.
In the presence of covariates, it is difficult to verify the existence of the MLE.
In addition, it is unclear how to find such neighborhoods because the appearance of covariates will make
some key inequalities in \cite{simons-yao1999} be difficult to generalize.

We exploit the convergence rate of the Newton iterative algorithm to solve the equation $F(\mathbf{x})=0$ for
showing consistency.
Under the well-known Newton-Kantorovich conditions [\cite{Kantorovich1948Functional}],
the algorithm converges and exhibits a high geometric convergence rate.
As a result, a solution to the equation exists, and an $\ell_p$-error between the initial and limiting points
is obtained.
Because the dimension increases with $n$
it is difficult to obtain in a single step for the full parameter vector $(\bs{\beta}, \bs{\gamma})$.
To overcome this limitation,
we use a two-stage process that alternatively obtains the upper bound of the error between $\widehat{\bs{\beta}}_{\gamma}$ and $\bs{\beta}^*$ with a given $\bs{\gamma}$, and
derives the upper bound of the error between $\widehat{\bs{\gamma}}_\beta$ and
$\bs{\gamma}^*$  with a given $\bs{\beta}$.
From the likelihood perspective, the two-stage process corresponds to maximizing
$\ell(\bs{\beta}, \bs{\gamma})$ in two steps: 
First, we maximize $\ell(\bs{\beta}, \bs{\gamma})$ with respect to $\bs{\beta}$ for fixed $\bs{\gamma}$.
then insert the maximizing value of $\bs{\beta}$ back into $\ell$ and
maximize $\ell$ with respect to $\bs{\gamma}$. 

We need a condition on the design matrix for the regression coefficient $\bs{\gamma}$ of the covariates.

\begin{condition}
\label{condi-eigen}
There exists a constant $c_0$ such that
\begin{equation}
\label{condition-design}
\lambda_{\min}(\sum_{i<j}\sum_k Z_{ijk}Z_{ijk}^\top) \ge c_0 n^2,
\end{equation}
where $\lambda_{\min}(A)$ denotes the minimum eigenvalue of a general matrix $A$.
\end{condition}

The above condition is widely used in high-dimensional GLMs \citep{haberman1977,portnoy1988,wang2011,liang2012,ZHOU2021107154}.
If $Z_{ijk}$ are independently generated from some non-degenerate multivariate distribution, then the condition holds.

\begin{condition}
\label{condi-para}
The true vector parameters $\bs{\beta}^*$ and  $\bs{\gamma}^*$ lie in a compact set.
\end{condition}

Condition \ref{condi-para} implies that $\| \bs{\beta}^* \|_\infty$ and $\|\bs{\gamma}^*\|_\infty$ are bounded above by a positive constant.
In high dimensional GLMs, it is generally assumed that the model parameter is bounded above by a constant in terms of $\ell_2$-norm
\citep[e.g.,][]{wang2011}. In addition, $\| \bs{\beta}^* \|_\infty\le c$ is made in \cite{chen2020partial}.

We now formally state the consistency.

\begin{theorem}
\label{Theorem:con}
If Conditions \ref{condi-eigen} and \ref{condi-para} hold, 
then, with a probability of at least $1-O(n^{-1})$,
 the MLE $(\widehat{\bs{\beta}}, \widehat{\bs{\gamma}})$ exists, and satisfies
\begin{equation}
\label{eq-theorem1-beta}
\| \widehat{\bs{\beta}} - \bs{\beta}^* \|_\infty = O\left(  \sqrt{\frac{\log n}{n}} \right), \quad
\| \widehat{\bs{\gamma}} - \bs{\gamma}^{*} \|_2= O\left( \sqrt{\frac{\log n}{n}} \right).
\end{equation}
\end{theorem}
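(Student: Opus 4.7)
The plan is to execute the two-stage Newton--Kantorovich strategy sketched by the author. In the first stage, with the regression coefficient $\bs{\gamma}$ held fixed in a shrinking $\ell_2$-neighborhood of $\bs{\gamma}^*$, I solve the $n$-dimensional equation $H_\gamma(\bs{\beta})=0$ in $\bs{\beta}$ and produce a partial solution $\widehat{\bs{\beta}}_\gamma$ close to $\bs{\beta}^*$ in $\ell_\infty$-norm. In the second stage, I substitute this partial solution into $Q$ to form the concentrated score $Q_c(\bs{\gamma})$ and solve $Q_c(\bs{\gamma})=0$ in the $p$-dimensional coordinate. Taken together, the two stages yield both existence of the full MLE with probability at least $1-O(n^{-1})$ and the claimed error bounds.

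For stage one I verify the three Newton--Kantorovich preconditions at the reference point $\bs{\beta}^*$. A Hoeffding bound applied coordinate-wise to $H_{\gamma^*}(\bs{\beta}^*)$, whose $i$th entry is a sum of $O(n)$ independent bounded centered Bernoulli terms, together with a union bound over $i$, gives $\|H_{\gamma^*}(\bs{\beta}^*)\|_\infty = O_p(\sqrt{n\log n})$ with exceptional probability at most $O(n^{-1})$. The Jacobian $V_\gamma(\bs{\beta}) = -\partial H_\gamma/\partial \bs{\beta}$ is symmetric, diagonally dominant, and has off-diagonal entries uniformly bounded above and below in absolute value on any compact $(\bs{\beta},\bs{\gamma})$-region by virtue of Condition \ref{condi-para}; this places it in the matrix class for which $\|V_\gamma^{-1}\|_\infty = O(1/n)$, following inverse-Laplacian bounds of the type developed in Simons--Yao and Yan--Jiang--Fienberg--Leng. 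Entrywise bounds on the third derivatives of the log-likelihood give an $\ell_\infty$-operator Lipschitz constant of order $n$ for $V_\gamma$. The Kantorovich product is then of order $\sqrt{\log n/n} = o(1)$, so $\widehat{\bs{\beta}}_{\gamma^*}$ exists in an $\ell_\infty$-ball of that radius around $\bs{\beta}^*$; adding a deterministic drift of size $O(n\|\bs{\gamma}-\bs{\gamma}^*\|_2)$ produced by a first-order Taylor expansion of $H_\gamma(\bs{\beta}^*)$ in $\bs{\gamma}$ and multiplying by $\|V_\gamma^{-1}\|_\infty$ yields
\begin{equation*}
\|\widehat{\bs{\beta}}_\gamma - \bs{\beta}^*\|_\infty \le C\sqrt{\frac{\log n}{n}} + C\|\bs{\gamma}-\bs{\gamma}^*\|_2
\end{equation*}
with high probability, uniformly over that $\bs{\gamma}$-neighborhood.

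For stage two, implicit differentiation of $H_\gamma(\widehat{\bs{\beta}}_\gamma)=0$ gives $\nabla_\gamma\widehat{\bs{\beta}}_\gamma = -V_\gamma^{-1}\,\partial H/\partial\bs{\gamma}$, and the chain rule identifies $\nabla Q_c(\bs{\gamma})$ as the $p\times p$ Schur complement of the full Hessian of $\ell$ relative to the $\bs{\beta}$-block. Condition \ref{condi-eigen}, combined with the uniform positive lower bound on $\mu'$ from Condition \ref{condi-para}, forces $\lambda_{\min}(\nabla Q_c) \ge cn^2$ on the relevant neighborhood. Taylor expanding $Q_c(\widehat{\bs{\gamma}})=0$ around $\bs{\gamma}^*$ then reduces the problem to bounding $\|Q_c(\bs{\gamma}^*)\|_2$, since $\|\widehat{\bs{\gamma}}-\bs{\gamma}^*\|_2 \le Cn^{-2}\|Q_c(\bs{\gamma}^*)\|_2$. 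Expanding $\mu$ about $\pi_{ijk}^*$ in the increments $b_i-b_j$ with $b=\widehat{\bs{\beta}}_{\gamma^*}-\bs{\beta}^*$ decomposes $Q_c(\bs{\gamma}^*) = Q(\bs{\beta}^*,\bs{\gamma}^*) + R_1 + R_2$. The raw score $Q(\bs{\beta}^*,\bs{\gamma}^*)$ is a $p$-dimensional sum of $O(n^2)$ bounded centered vectors with $\ell_2$-norm $O_p(n\sqrt{\log n})$ by Hoeffding, the quadratic remainder satisfies $\|R_2\|_2 = O(n^2\cdot \log n/n) = O(n\log n)$, and the linear correction $R_1$ is handled by invoking the partial stationarity $H_{\gamma^*}(\widehat{\bs{\beta}}_{\gamma^*})=0$ to replace $b$ by its leading representation $V_{\gamma^*}^{-1}s$, where $s$ is the raw $\bs{\beta}$-score; this rewrites $R_1$ as a linear functional of $s$ with $O(1)$-sized weights and hence also of $\ell_2$-size $O_p(n\sqrt{\log n})$. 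Combining the three pieces gives $\|Q_c(\bs{\gamma}^*)\|_2 = O_p(n\sqrt{\log n})$ and therefore $\|\widehat{\bs{\gamma}}-\bs{\gamma}^*\|_2 = O_p(\sqrt{\log n}/n)$, which is of the claimed order. Substituting this rate back into the stage-one bound recovers the claim for $\widehat{\bs{\beta}}$.

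The principal technical obstacle is the sharp analysis of $R_1$: the crude bound $\|R_1\|_2 \lesssim n^2\|b\|_\infty = O_p(n\sqrt{n\log n})$ is too loose by a factor of $\sqrt{n}$ and would yield an estimation rate for $\widehat{\bs{\gamma}}$ that diverges with $n$. The necessary cancellation must come from exploiting the first-order condition $H_{\gamma^*}(\widehat{\bs{\beta}}_{\gamma^*})=0$, which is exactly the mechanism underlying the asymptotic normality argument that follows. A secondary difficulty is extending the inverse-Jacobian bound $\|V_\gamma^{-1}\|_\infty = O(1/n)$ uniformly over the shrinking $\bs{\gamma}$-neighborhood rather than just at the truth, which requires controlling $\mu'$ from below on a slightly larger region and is the CBTM counterpart of the matrix-inverse lemmas standard in the covariate-free Bradley--Terry literature.
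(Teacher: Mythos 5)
Your proposal is correct, and stage one (Newton--Kantorovich for $\widehat{\bs{\beta}}_\gamma$ at the reference point $\bs{\beta}^*$, with Hoeffding controlling $\|H_{\gamma}(\bs{\beta}^*)\|_\infty$ and the Simons--Yao-type approximation of $V^{-1}$ supplying $\|V^{-1}\|_\infty=O(1/n)$) is essentially the paper's Lemma \ref{lemma-consistency-beta}. Stage two, however, takes a genuinely different route. The paper does not profile out $\bs{\beta}$ at this point: it fixes $\bs{\beta}$ in the $\ell_\infty$-ball of radius $O(\sqrt{\log n/n})$, shows via Ortega's boundary-crossing criterion (Lemma \ref{lemma-root}) that $(\bs{\gamma}-\bs{\gamma}^*)^\top Q_\beta(\bs{\gamma})>0$ on a sphere of radius $\Delta\sqrt{\log n/n}$, and accepts the crude bound $|I_3|\lesssim n^2\|\bs{\gamma}-\bs{\gamma}^*\|_2\|\bs{\beta}-\bs{\beta}^*\|_\infty$ on the cross term; it then obtains the joint MLE by an alternating block-coordinate iteration and a compactness/subsequence argument. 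You instead invert the Schur-complement Hessian of the profiled score $Q_c$ and extract the cancellation in the linear remainder $R_1$ from the stationarity $H_{\gamma^*}(\widehat{\bs{\beta}}_{\gamma^*})=0$, i.e.\ from the boundedness of $V_{\gamma\beta}V^{-1}T_{ij}$ --- which is precisely the machinery the paper defers to its Theorem \ref{theorem-central-b} (Lemmas \ref{lemma:th4-b} and \ref{lemma-thereom2-gamma-a}). Your route buys a one-pass existence argument for the joint root and the sharper rate $O_p(\sqrt{\log n}/n)$ for $\widehat{\bs{\gamma}}$, at the cost of fronting that delicate analysis. It is worth noting that the ``principal obstacle'' you flag is not actually needed for the theorem as stated: the crude bound $\|R_1\|_2\lesssim n^2\|\widehat{\bs{\beta}}_{\gamma^*}-\bs{\beta}^*\|_\infty=O_p(n^{3/2}\sqrt{\log n})$, divided by $\lambda_{\min}(\nabla Q_c)\gtrsim n^2$, already yields $\|\widehat{\bs{\gamma}}-\bs{\gamma}^*\|_2=O_p(\sqrt{\log n/n})$, which is exactly the rate claimed in \eqref{eq-theorem1-beta}; the sharp $R_1$ cancellation is required only for the stronger $O_p(\sqrt{\log n}/n)$ rate you additionally assert, which the paper likewise only establishes later via its asymptotic-normality argument.
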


\begin{remark}
We compared our $\ell_\infty$-error bound with \cite{simons-yao1999} in the case of no covariates.
They show that $\| \widehat{\bs{\beta}} - \bs{\beta}\|_\infty = O_p( (\log n/n)^{1/2} )$ when $\|\beta^*\|_\infty$ is bounded by a constant,
our result is consistent with the minimax error bound in \cite{simons-yao1999} and \cite{chen2020partial}, up to some constant factor.
\end{remark}

\begin{remark}
The error bound for $\widehat{\bs{\beta}}$ match
the minimax optimal bound $\|\widehat{\bs{\beta}}- \bs{\beta}\|_\infty =
O_p((\log p_N/N)^{1/2})$ for the LASSO estimator in a linear model with
$p_N$-dimensional parameter $\beta$ and
sample size $N$ in \cite{lounici2008sup-norm}.
 In our case,
there are $N=n(n-1)/2$ observed edges and a $p_N=(p+n)$-dimensional parameter space.
However,  the error bound for $\widehat{\bs{\gamma}}$ is much slower than
the optimal convergence rate $N^{-1/2}$ in classical large-sample theory.
The asymptotic distribution result in Theorem \ref{theorem-central-b}
shows that  the convergence rate of $\widehat{\bs{\gamma}}$
is in the order of $O_p(1/n)$ being optimal.
\end{remark}

We apply the consistency result to the top-$K$ recovery problem, which identifies a set of $K$
subjects with the highest ranks. This problem has received considerable attention in machine learning research; see \cite{chen2019} and references therein.
We assume that there is a ground-truth order $\beta_0^* > \beta_1^* > \cdots > \beta_n^*$.
The aim is to find subjects with $K$ largest estimates   in accordance with their true orders.
It suffices to
demonstrate that
\[
\widehat{\beta}_i - \widehat{\beta}_j > 0,~~i=0, \ldots, K-1; j=K, \ldots, n.
\]
As in \cite{chen2019}, we require a separation measure $\Delta_K = \beta^*_{K-1} - \beta^*_{K}$ to distinguish between the $(K-1)$th and $K$th subjects.
From the triangle inequality, we obtain:
\[
\widehat{\beta}_i - \widehat{\beta}_j \ge \beta_i^* - \beta_j^* - |\widehat{\beta}_i- \beta_i^*|
- |\widehat{\beta}_j-\beta_j^*| \ge \Delta_K - O_p\left( \sqrt{\frac{\log n}{n}} \right).
\]
Therefore, we have the following corollary:
\begin{corollary}\label{corollary-topk}
We assume that the condition in Theorem \ref{Theorem:con} holds. If $\Delta_K \gg  (\frac{\log n}{n})^{1/2}$, with a probability of at least $1-O(n^{-1})$, the
set of top-$K$-ranked items can be recovered exactly by using MLE.
\end{corollary}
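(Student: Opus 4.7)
The plan is to reduce the top-$K$ recovery claim to an entrywise comparison between estimated merits and then invoke the $\ell_\infty$-consistency bound already established in Theorem \ref{Theorem:con}. Concretely, under the ground-truth ordering $\beta_0^* > \beta_1^* > \cdots > \beta_n^*$, exact recovery of the top-$K$ set is equivalent to
\[
\widehat{\beta}_i - \widehat{\beta}_j > 0 \quad \text{for all } i\in\{0,\ldots,K-1\},\ j\in\{K,\ldots,n\}.
\]
So I would first translate the corollary into showing this simultaneous inequality on the event where Theorem \ref{Theorem:con} holds.

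Next, I would fix any such pair $(i,j)$ and apply a two-term triangle bound. Because $\beta_i^* \ge \beta_{K-1}^*$ and $\beta_j^* \le \beta_K^*$ by the assumed ordering, we have $\beta_i^*-\beta_j^* \ge \Delta_K$. Hence
\[
\widehat{\beta}_i - \widehat{\beta}_j \;\ge\; (\beta_i^*-\beta_j^*) - |\widehat{\beta}_i-\beta_i^*| - |\widehat{\beta}_j-\beta_j^*| \;\ge\; \Delta_K - 2\|\widehat{\bs{\beta}}-\bs{\beta}^*\|_\infty.
\]
The key point is that the bound on the right-hand side is uniform in $(i,j)$, so no union bound is needed beyond the single high-probability event from Theorem \ref{Theorem:con}.

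Finally, I would invoke Theorem \ref{Theorem:con} to conclude that on an event of probability at least $1-O(n^{-1})$, $\|\widehat{\bs{\beta}}-\bs{\beta}^*\|_\infty = O(\sqrt{\log n/n})$, so under the hypothesis $\Delta_K \gg (\log n/n)^{1/2}$ the right-hand side above is positive for all sufficiently large $n$ and all pairs $(i,j)$ simultaneously. This yields exact top-$K$ recovery on the same event.

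There is essentially no real obstacle: the separation hypothesis $\Delta_K \gg (\log n/n)^{1/2}$ is specifically calibrated to dominate the entrywise rate in Theorem \ref{Theorem:con}, and the bound is already uniform over indices, so no extra probabilistic tools are required. The only subtlety worth writing out carefully is that the bound $|\widehat{\beta}_i-\beta_i^*|+|\widehat{\beta}_j-\beta_j^*|\le 2\|\widehat{\bs{\beta}}-\bs{\beta}^*\|_\infty$ holds for the index $i=0$ as well, which is trivial because $\widehat{\beta}_0 = \beta_0^* = 0$ by the identification constraint.
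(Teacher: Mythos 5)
Your proposal matches the paper's own argument essentially verbatim: the paper also reduces top-$K$ recovery to showing $\widehat{\beta}_i-\widehat{\beta}_j>0$ for all $i<K\le j$, applies the same triangle inequality to get $\widehat{\beta}_i-\widehat{\beta}_j\ge \Delta_K - O\bigl(\sqrt{\log n/n}\bigr)$ uniformly, and concludes on the event of Theorem \ref{Theorem:con}. Your extra remark about the index $i=0$ (fixed by the identification constraint) is a harmless additional care the paper does not bother to spell out.
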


\subsection{Asymptotic normality of $\widehat{\gamma}$}
\label{subsec-asymp-gamma}

Let $\ell_c(\bs{\gamma})$ be the concentrated log-likelihood function of
$\ell(\bs{\beta}, \bs{\gamma})$ by replacing $\bs{\beta}$ with $\widehat{\bs{\beta}}_\gamma$.
It is easy to verify that the Hessian matrix of $-\ell_c(\bs{\gamma})$ (i.e., the Jacobian matrix $Q_c^\prime(\bs{\gamma})$) is
$\Sigma( \bs{\widehat{\beta}}, \bs{\gamma})$, where
\begin{equation}
\label{definition-sigma}
\Sigma(\bs{\beta}, \bs{\gamma}) := \frac{ \partial Q(\bs{\beta}, \bs{\gamma})
}{ \partial \bs{\gamma}^\top} - \frac{ \partial Q(\bs{\beta}, \bs{\gamma})
}{\partial \bs{\beta}^\top} \left[ \frac{\partial H(\bs{\beta}, \bs{\gamma})}{\partial \bs{\beta}^\top} \right]^{-1}
\frac{\partial H(\bs{\beta}, \bs{\gamma})}{\partial \bs{\gamma}^\top}.
\end{equation}
Note that $Q_c^\prime(\bs{\gamma})$ is the Fisher information on $\bs{\gamma}$, which measures
the amount of information on $\bs{\gamma}$ provided by win-loss outcomes.
Therefore, the asymptotic distribution of $\widehat{\bs{\gamma}}$ depends crucially on $Q_c^\prime(\bs{\gamma})$.

Note that \eqref{definition-sigma} involves with the inverse of $\partial H(\bs{\beta}, \bs{\gamma})/\partial \bs{\beta}^\top$, which is denoted as
$H'_\gamma( \bs{\beta} )$ for convenience.
In general, the inverse of $H'_\gamma( \bs{\beta} )$ does not have a closed form. We use a simple matrix
to approximate it.
The Jacobian matrix $H'_\gamma( \bs{\beta} )$ has a special structure that can be characterized in the form of a matrix class.
Given $b_0, b_1>0$, we say that an $n\times n$-matrix $V=(v_{ij})_{i,j=1}^{n}$ belongs to the matrix class $\mathcal{L}_{n}(b_0, b_1)$ if
$V$ is a diagonally dominant matrix with negative nondiagonal elements bounded by $b_0$ and $b_1$; that is,
\begin{equation*}\label{eq1}
\begin{array}{l}
b_0 \le v_{ii} + \sum_{j=1, j\neq i}^{n} v_{ij} \le b_1, ~~i=1,\ldots, n, \\
b_0 \le - v_{ij} \le b_1, ~~ i,j=1,\ldots,n; i\neq j.
\end{array}
\end{equation*}
Define $v_{0n}=v_{n0}=  \sum_{j=1, j\neq i}^{n} v_{ij} -v_{ii}$ for $i=1, \ldots, n$ and $v_{00}=-\sum_{i=1}^n v_{in}$.
\cite{Simons1998Approximating} proposed to approximate the inverse of $V$,  $V^{-1}$, by a simple matrix
$S=(s_{ij})_{n\times n}$, where
\begin{equation}\label{definition-s}
s_{ij} = \frac{\delta_{ij}}{v_{ii}} + \frac{1}{v_{00}}.
\end{equation}
In the above equation, $\delta_{ij}=1$ if $i=j$; otherwise, $\delta_{ij}=0$.
It is clear that $H'_\gamma(\bs{\beta})$ belongs to this matrix class.
Hereafter, we denote $V=(v_{ij})$ by $H'_{\gamma^*}( \bs{\beta}^* )$.

Let $N=(n+1)n/2$ and
\[
\bar{\Sigma}:=\lim_{n\to\infty} \frac{1}{N} \Sigma( \bs{\beta}^*, \bs{\gamma}^*),
\]
where $\Sigma(\bs{\beta}, \bs{\gamma})$ is defined in \eqref{definition-sigma}.
We assume that the limit $\bar{\Sigma}$ exists, which was considered in \cite{Graham2017}.
By using $S$ in \eqref{definition-s} to approximate $V^{-1}$, 
we have
\begin{eqnarray}
\label{eq-approximation-Sigma}
\frac{1}{N}\Sigma( \bs{\beta}^*, \bs{\gamma}^*)  =  \frac{1}{N}\sum\limits_{i<j} \sum\limits_k Z_{ijk}Z_{ijk}^\top \mu^\prime (\pi_{ijk}^*)
- \frac{1}{N}\sum\limits_{i} \frac{ \tilde{Z}_i \tilde{Z}_i^\top }{ v_{ii} } + o(1),
\end{eqnarray}
where
\[
\tilde{Z}_i = \sum_{j\neq i}\sum_k Z_{ijk}\mu^\prime(\pi_{ijk}^*).
\]
If $Z_{ijk}$ is independently draw from some multivariate distribution, then
$N^{-1}\Sigma( \bs{\beta}^*, \bs{\gamma}^*)$ converges in probability to some non-random matrix.

The idea of establishing  the asymptotic normality of $\widehat{\bs{\gamma}}$ is briefly described as follows:
First, we use the mean-value expansion to derive the explicit expression of $\widehat{\bs{\gamma}}-\bs{\gamma}^*$, which contains
term $Q_c(\bs{\gamma}^*)$ multiplied by $\bar{\Sigma}^{-1}$.
Then, we apply a third-order Taylor expansion to $Q_c(\bs{\gamma}^*)$ to characterize its limiting distribution.
In the expansion, the first-order term is
asymptotically normal; the second-order term is the asymptotic bias term and
the first-order term is the remainder term.
The asymptotic normality of $\widehat{\bs{\gamma}}$ is described as follows.

\begin{theorem}
\label{theorem-central-b}
Suppose that the conditions in Theorem \ref{Theorem:con} hold.
For a  
nonzero constant vector $\mathbf{c}=(c_1, \ldots, c_p)^\top$, $\sqrt{N}\mathbf{c}^\top (\widehat{\bs{\gamma}} - \bs{\gamma})$ converges in distribution to
normal distribution with mean $\bar{\Sigma}^{-1}B_*$ and variance $c^\top \bar{\Sigma} \mathbf{c}$,
\begin{equation}\label{defintion-Bias}
B_*=\lim_{n\to\infty} \frac{1}{2\sqrt{N}} \sum_{i=0}^n \frac{  \sum_{j\neq i} \sum_k Z_{ijk} \mu^{\prime\prime}(\pi^*_{ijk})  }
{  \sum_{j\neq i} \sum_k \mu^\prime(\pi^*_{ijk})  }.
\end{equation}
\end{theorem}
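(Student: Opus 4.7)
The plan is to derive an asymptotic linear representation for $\widehat{\bs{\gamma}} - \bs{\gamma}^*$ by combining a mean-value expansion of the concentrated score $Q_c$ with a careful expansion of $Q_c(\bs{\gamma}^*)$ around $\bs{\beta}^*$, using the Simons--Yao matrix approximation to handle $V^{-1}$.

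First I would apply the mean-value theorem componentwise to $Q_c$ at $\widehat{\bs{\gamma}}$, using $Q_c(\widehat{\bs{\gamma}})=0$, to obtain $\widehat{\bs{\gamma}} - \bs{\gamma}^* = -[Q_c^\prime(\tilde{\bs{\gamma}})]^{-1} Q_c(\bs{\gamma}^*)$ for some $\tilde{\bs{\gamma}}$ between $\widehat{\bs{\gamma}}$ and $\bs{\gamma}^*$. Theorem \ref{Theorem:con} forces $\tilde{\bs{\gamma}} \to \bs{\gamma}^*$, and since $Q_c^\prime(\bs{\gamma}) = \Sigma(\widehat{\bs{\beta}}_\gamma, \bs{\gamma})$ depends continuously on its arguments, combined with \eqref{eq-approximation-Sigma} and the assumed existence of $\bar{\Sigma}$, I would show $N^{-1} Q_c^\prime(\tilde{\bs{\gamma}}) = \bar{\Sigma} + o_p(1)$. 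Consequently, $\sqrt{N}(\widehat{\bs{\gamma}} - \bs{\gamma}^*) = -\bar{\Sigma}^{-1} \cdot N^{-1/2} Q_c(\bs{\gamma}^*) + o_p(1)$, so the problem reduces to finding the limiting distribution of $N^{-1/2} Q_c(\bs{\gamma}^*)$.

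Second, I would Taylor-expand $Q_c(\bs{\gamma}^*)$ around $\bs{\beta}^*$ to second order in $\delta_i := \widehat{\beta}_{\gamma^*,i} - \beta_i^*$, writing $\hat{\pi}_{ijk} - \pi_{ijk}^* = \delta_i - \delta_j$:
\[
Q_c(\bs{\gamma}^*) = Q(\bs{\beta}^*,\bs{\gamma}^*) + \sum_{i<j}\sum_k Z_{ijk}\mu^\prime(\pi_{ijk}^*)(\delta_i-\delta_j) + \tfrac12 \sum_{i<j}\sum_k Z_{ijk}\mu^{\prime\prime}(\pi_{ijk}^*)(\delta_i-\delta_j)^2 + R,
\]
and show $R = o_p(\sqrt{N})$ using the uniform bound $\|\widehat{\bs{\beta}}_{\gamma^*} - \bs{\beta}^*\|_\infty = O_p(\sqrt{\log n/n})$ from Theorem \ref{Theorem:con} together with boundedness of $\mu^{\prime\prime\prime}$. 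The first term is a sum of independent bounded centered random variables and satisfies a Lyapunov CLT with variance matrix $N\bar{\Sigma}$ after combination with the linearization. The linear-in-$\delta$ term must be combined with the first via an asymptotic representation of $\widehat{\bs{\beta}}_{\gamma^*}-\bs{\beta}^*$.

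Third, I would obtain this representation by Taylor expanding the defining equation $H_{\gamma^*}(\widehat{\bs{\beta}}_{\gamma^*})=0$ around $\bs{\beta}^*$, giving $\widehat{\bs{\beta}}_{\gamma^*} - \bs{\beta}^* = -V^{-1} H(\bs{\beta}^*,\bs{\gamma}^*) + r$ with $V = H^\prime_{\gamma^*}(\bs{\beta}^*)$ and $\|r\|_\infty$ of order $\log n/n$. Since $V\in\mathcal{L}_n(b_0,b_1)$, I would invoke \cite{Simons1998Approximating} to replace $V^{-1}$ by the explicit matrix $S$ in \eqref{definition-s}, incurring an entrywise error of order $O(n^{-2})$. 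This lets me write $\delta_i-\delta_j \approx H_{\gamma^*,j}(\bs{\beta}^*)/v_{jj} - H_{\gamma^*,i}(\bs{\beta}^*)/v_{ii}$ (the common $1/v_{00}$ piece cancels in the difference). Substituting into the linear term of the $Q_c$ expansion and recognizing the resulting quantity as $Q(\bs{\beta}^*,\bs{\gamma}^*) - M V^{-1} H(\bs{\beta}^*,\bs{\gamma}^*)$, where $M = \partial Q/\partial\bs{\beta}^\top$, the zero-mean part becomes a sum over independent indicators $a_{ijk}-\mu(\pi_{ijk}^*)$ with an explicit coefficient whose variance converges to $N\bar{\Sigma}$ by virtue of \eqref{eq-approximation-Sigma}.

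Finally, the bias $B_*$ comes from the quadratic term. Using $E[H_{\gamma^*,i}^2(\bs{\beta}^*)] = v_{ii}$ and $v_{ii} = \sum_{j\neq i}\sum_k \mu^\prime(\pi_{ijk}^*)$, one computes
\[
E\Bigl[\tfrac12 \sum_{i<j}\sum_k Z_{ijk}\mu^{\prime\prime}(\pi_{ijk}^*)(\delta_i-\delta_j)^2\Bigr] = \tfrac{1}{2} \sum_i \frac{\sum_{j\neq i}\sum_k Z_{ijk}\mu^{\prime\prime}(\pi_{ijk}^*)}{v_{ii}} + o(\sqrt{N}),
\]
which divided by $\sqrt{N}$ yields $B_*$; the concentration of the quadratic term around its mean is handled by a variance calculation, using that $(\delta_i-\delta_j)^2$ concentrates sharply around $E[(\delta_i-\delta_j)^2]\asymp 1/n$. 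The main obstacles I expect are (i) controlling the $S$-versus-$V^{-1}$ approximation error uniformly enough to carry through the quadratic calculation, where cross-correlations among $H_{\gamma^*,i}$ and $H_{\gamma^*,j}$ across shared pairs must not spoil the leading bias, and (ii) verifying that the combined remainders from the Taylor expansions of $Q_c$ and of $H_{\gamma^*}$ are $o_p(\sqrt{N})$, which comes down to multiplying the $\ell_\infty$ rate $\sqrt{\log n/n}$ from Theorem \ref{Theorem:con} by the proper powers of $n$ arising from the double sum over pairs.
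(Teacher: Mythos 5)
Your proposal follows essentially the same route as the paper: a mean-value expansion of the profiled score $Q_c$ to reduce the problem to $N^{-1/2}Q_c(\bs{\gamma}^*)$, a Taylor expansion of $Q_c(\bs{\gamma}^*)$ in $\widehat{\bs{\beta}}_{\gamma^*}-\bs{\beta}^*$ whose linear part combines with $Q(\bs{\beta}^*,\bs{\gamma}^*)$ into the martingale-free sum $Q - V_{\gamma\beta}V^{-1}H$ (the paper's Lemma \ref{lemma:th4-b}), an asymptotic representation of $\widehat{\bs{\beta}}_{\gamma^*}-\bs{\beta}^*$ via the Simons--Yao approximation $S$ of $V^{-1}$ (the paper's Lemma \ref{lemma-thereom2-gamma-a}), and the bias $B_*$ extracted as the expectation of the quadratic term using $\E[H_i^2]=v_{ii}$. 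The decomposition, the key lemmas, and the source of the bias all coincide with the paper's proof, so the approach is correct and not materially different.
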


\begin{remark}
\label{remark-gamma}
The bias term is bounded above by a constant. This is due to that $\mu^\prime(\pi^*_{ijk})\ge c_1$
and $|\mu^{\prime\prime}(\pi^*_{ijk})|\le c_2$ for some constants $c_1$ and $c_2$ under the conditions in Theorem \ref{Theorem:con}.
If $\lambda_{\min}( \Sigma( \bs{\beta}^*, \bs{\gamma}^*) ) \ge c_0 n^2$, then $\widehat{\bs{\gamma}}$ has a convergence rate $O(n^{-1})$.
If all $Z_{ijk}$ are centered and independently generated from subeponential distributions (or bounded random vectors), then
$\sum_{j\neq i} \sum_k Z_{ijk} \mu^{\prime\prime}(\pi^*_{ijk})$ is of the order $(n\log n)^{1/2}$ with probability $1-O(n^{-1})$.
This can be easily verified by the concentration inequality for sub-exponential random variables or by \citeauthor{Hoeffding:1963}'s inequality for bounded random variables.
In this case, $\|B_*\|_\infty=o_p(1)$.
For example, if all teams are played at home or at away equally
likely, $B_*$ is asymptotically neglected, as demonstrated in our simulations.
In other cases, the bias $B_*$ cannot be neglected.
If so, we can use the analytical bias-correction formula as in \cite{Dzemski2019}:
$\widehat{\bs{\gamma}}_{bc} = \widehat{\bs{\gamma}}- N^{-1/2}\widehat{\Sigma}^{-1}(\widehat{\bs{\beta}},\widehat{\bs{\gamma}}) \hat{B}$,
where $\widehat{B}$ and $\widehat{\Sigma}$ are the estimates of $B_*$ and $\bar{\Sigma}$ obtained by replacing
$\bs{\beta}^*$ and $\bs{\gamma}^*$ in their expressions with the estimators $\widehat{\bs{\beta}}$ and
$\widehat{\bs{\gamma}}$.
\end{remark}

\begin{remark}
The asymptotic distribution of $\widehat{\bs{\gamma}}$ contains a bias term $B_*$.
This is because of the different convergence rates of $\widehat{\bs{\gamma}}$ and $\widehat{\bs{\beta}}$,
which roughly are $O_p(1/n)$ and  $O_p(1/n^{1/2})$.
This phenomenon is referred to as the incidental parameter problem; see
econometric literature \cite{Graham2017} and the references therein.
\end{remark}

\subsection{Asymptotic normality of $\widehat{\beta}$}

The idea of establishing an asymptotic distribution $\widehat{\bs{\beta}}$ is briefly described as follows.
A second-order Taylor expansion is applied to $H_\gamma( \widehat{\bs{\beta}} )$ at $\bs{\beta}$
to derive the following explicit asymptotic expression for $\widehat{\bs{\beta}}$.
In the expansion, the first-order term is the sum of $[H'_\gamma( \widehat{\bs{\beta}} )]^{-1}(\widehat{\bs{\beta}} - \bs{\beta})$
and $V_{\gamma\beta}(\widehat{\bs{\gamma}}-\bs{\gamma})$, where $V_{\gamma\beta}=\partial H(\bs{\beta},\bs{\gamma})/\partial \bs{\gamma}^\top$.
Because $[H'_\gamma( {\bs{\beta}} )]^{-1}$ does not have a closed form, we use $S$ defined in \eqref{definition-s} to approximate it.
From Theorem \ref{theorem-central-b}, $\widehat{\bs{\gamma}}$ has an $n^{-1}$ convergence rate up to a factor. 
This makes that the term $V_{\gamma\beta}(\widehat{\bs{\gamma}}-\bs{\gamma})$ is an asymptotically neglected remainder term.
The second-order term in the expansion is also asymptotically neglected.
Then, we represent $\widehat{\bs{\beta}}-\bs{\beta}$ as the sum of
$S(\mathbf{d} - \E \mathbf{d})$ and remaining terms, where $\mathbf{d}=(d_1, \ldots, d_n)^\top$.
Therefore, the central limit theorem is proven by establishing the asymptotic normality of $S( \mathbf{d} - \E \mathbf{d})$ and
indicating that the remaining terms are negligible.
We formally state the central limit theorem as follows:

\begin{theorem}\label{Theorem-central-a}
Assume that $\lambda_{\min}( \Sigma( \bs{\beta}^*, \bs{\gamma}^*) ) \ge c_0 n^2$.
If Conditions \ref{condi-eigen} and \ref{condi-para} hold,
then, for a fixed $k$ the vectors $( (\widehat{\beta}_1 - \beta^*_1), \ldots,  (\widehat{\beta}_k - \beta^*_k))$
follows a $k$-dimensional multivariate normal distribution with a covariance matrix given by the upper left $k\times k$ block of $S$
defined in \eqref{definition-s}.
\end{theorem}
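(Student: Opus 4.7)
The plan is to linearize the likelihood equation $H(\widehat{\bs{\beta}}, \widehat{\bs{\gamma}}) = 0$ around $(\bs{\beta}^*, \bs{\gamma}^*)$, isolate a stochastic leading term of the form $S(\mathbf{d} - \E\mathbf{d})$, and then invoke a Lyapunov CLT (via the Cram\'er--Wold device) for independent Bernoulli summands. A second-order Taylor expansion of each coordinate gives
\[
0 = H_i(\bs{\beta}^*, \bs{\gamma}^*) + \sum_j v_{ij}(\widehat{\beta}_j - \beta_j^*) + W_i^\top(\widehat{\bs{\gamma}} - \bs{\gamma}^*) + R_i,
\]
where $v_{ij}$ are the entries of $V = H'_{\gamma^*}(\bs{\beta}^*)$, $W_i = \partial H_i/\partial \bs{\gamma}$ at the truth, and $R_i$ is the quadratic remainder. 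Using $H_i(\bs{\beta}^*, \bs{\gamma}^*) = -(d_i - \E d_i)$ and replacing $V^{-1}$ by the explicit matrix $S$ from \eqref{definition-s} yields the representation
\[
\widehat{\bs{\beta}} - \bs{\beta}^* = S(\mathbf{d} - \E\mathbf{d}) + E_1 + E_2 + E_3,
\]
with $E_1 = (V^{-1} - S)(\mathbf{d} - \E\mathbf{d})$, $E_2 = -V^{-1} W(\widehat{\bs{\gamma}} - \bs{\gamma}^*)$, and $E_3 = -V^{-1} R$.

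The bulk of the work is to show that each $E_j$ is $o_p(n^{-1/2})$ in the $\ell_\infty$-norm. For $E_1$, the approximation bound $\|V^{-1} - S\|_{\max} = O(n^{-2})$ from \cite{Simons1998Approximating} on the class $\mathcal{L}_n(b_0, b_1)$, combined with $\|\mathbf{d} - \E\mathbf{d}\|_\infty = O_p((n\log n)^{1/2})$ by Bernstein's inequality, gives $\|E_1\|_\infty = O_p(n^{-3/2}(\log n)^{1/2})$. For $E_2$, Theorem \ref{theorem-central-b} delivers the sharp rate $\|\widehat{\bs{\gamma}} - \bs{\gamma}^*\|_\infty = O_p(n^{-1})$; computing the $i$th row of $SW$ explicitly using $s_{ij} = \delta_{ij}/v_{ii} + 1/v_{00}$ and exploiting the antisymmetry $Z_{ijk} = -Z_{jik}$ (which forces cancellation in $\sum_j W_j$) produces $\|E_2\|_\infty = O_p(n^{-1})$. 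Once these three remainders are absorbed, we have $\widehat{\bs{\beta}} - \bs{\beta}^* = S(\mathbf{d} - \E\mathbf{d}) + o_p(n^{-1/2})$.

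For the CLT step, since the outcomes $a_{ijk}$ are independent Bernoullis with $\mathrm{Cov}(\mathbf{d}) = V$, we have $\mathrm{Cov}(S(\mathbf{d} - \E\mathbf{d})) = SVS^\top$, which by $SV = I + O(n^{-1})$ and the symmetry of $S$ equals $S$ to leading order. For any fixed $k$ and constants $c_1, \ldots, c_k$, the linear combination $\sum_{i=1}^k c_i (S(\mathbf{d}-\E\mathbf{d}))_i$ is a sum over independent centered bounded increments $a_{ijk} - \mu_{ijk}^*$ with weights of uniform size $O(n^{-1})$ and aggregate variance of order $n^{-1}$; Lyapunov's condition follows trivially from boundedness. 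The Cram\'er--Wold device then delivers joint asymptotic normality of the first $k$ coordinates with covariance equal to the upper left $k \times k$ block of $S$.

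The main obstacle will be the bound on $E_3$. A naive triangle-inequality estimate gives $R_i = O_p(\log n)$ (a sum of $n$ quadratic terms each of size $\log n/n$), and since $(SR)_i$ contains the average term $v_{00}^{-1}\sum_j R_j$, one would only obtain $\|E_3\|_\infty = O_p(\log n)$, which is far too crude. The key is that $\mu''$ is an odd function and $\pi_{ijk}^* = -\pi_{jik}^*$, so the leading part of $\sum_j R_j$ cancels by pairing $(i,j)$ with $(j,i)$; the residual uses the $n^{-1}$-rate for $\widehat{\bs{\gamma}}$ from Theorem \ref{theorem-central-b} to dominate the cross terms of the form $(\widehat{\beta}_i - \beta_i^*) Z_{ijk}^\top (\widehat{\bs{\gamma}} - \bs{\gamma}^*)$. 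Making this bookkeeping rigorous so that $\|E_3\|_\infty = O_p(\log n/n) = o_p(n^{-1/2})$ is the technical heart of the argument.
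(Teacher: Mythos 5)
Your overall route is the same as the paper's: Taylor-expand the score around the truth to get $\mathbf{d}-\E\mathbf{d}=V(\widehat{\bs{\beta}}-\bs{\beta}^*)+V_{\gamma\beta}(\widehat{\bs{\gamma}}-\bs{\gamma}^*)+\mathbf{g}$, replace $V^{-1}$ by the \cite{Simons1998Approximating} matrix $S$, kill the three remainders, and apply a CLT to $S(\mathbf{d}-\E\mathbf{d})$. Your treatment of $E_2$ (using the $O_p(n^{-1})$ rate for $\widehat{\bs{\gamma}}$ from Theorem \ref{theorem-central-b} together with the row structure of $S$ and the antisymmetry $Z_{ijk}=-Z_{jik}$) and of $E_3$ (exploiting the exact cancellation $g_{ijk}+g_{jik}=0$, which the paper obtains equivalently from the identity $\sum_{i=0}^n H_i\equiv 0$ so that $\sum_{i=1}^n g_i=-g_0=O_p(\log n)$) both match the paper's supplementary arguments in substance.

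The one step that does not hold as written is your bound on $E_1=(V^{-1}-S)(\mathbf{d}-\E\mathbf{d})$. The quantity $\|V^{-1}-S\|_{\max}\cdot\|\mathbf{d}-\E\mathbf{d}\|_\infty=O(n^{-2})\cdot O_p((n\log n)^{1/2})$ is not an upper bound for a coordinate of a matrix--vector product; the correct triangle-inequality bound carries an extra factor of $n$ from summing over the $n$ columns, giving $O_p(n^{-1/2}(\log n)^{1/2})$, which is \emph{not} $o_p(n^{-1/2})$ and so does not suffice. The paper closes this gap with a second-moment argument rather than a norm bound: writing $W=V^{-1}-S$, it computes $\mathrm{Cov}\bigl(W(\mathbf{d}-\E\mathbf{d})\bigr)=WVW=(V^{-1}-S)+(SVS-S)$ and checks entrywise that $(SVS-S)_{ij}=v_{i0}/(v_{ii}v_{00})+v_{0j}/(v_{jj}v_{00})-(1-\delta_{ij})v_{ij}/(v_{ii}v_{jj})=O(n^{-2})$, so each coordinate of $W(\mathbf{d}-\E\mathbf{d})$ has variance $O(n^{-2})$ and is therefore $O_p(n^{-1})=o_p(n^{-1/2})$. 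You need to replace your $E_1$ estimate with this (or an equivalent) variance calculation; with that repair the rest of your argument, including the Lyapunov/Cram\'er--Wold finish, goes through and coincides with the paper's proof.
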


\begin{remark}
As discussed in Remark \ref{remark-gamma}, $\lambda_{\min}( \Sigma( \bs{\beta}^*, \bs{\gamma}^*) ) \ge c_0 n^2$ guarantees that $\widehat{\bs{\gamma}}$ has
a convergence rate of $O_p(1/n)$. It leads to the remainder terms involved with $\widehat{\bs{\gamma}}$ vanish.
The asymptotic variance of $\widehat{\beta_i}$ is $1/v_{ii} + 1/v_{00}$, which is in the magnitudes of $O(n^{1/2})$.
In case of no covariates, it is consistent with that in \cite{simons-yao1999}.
\end{remark}

\section{Extensions}
\subsection{Extension to an Erd\H{o}s--R\'{e}nyi comparison graph with a diverging number of covariates}
\label{section:extension}

All the preceding results concern dense comparisons, where each pair has at least one comparison.
We extended these to an Erd\H{o}s--R\'{e}nyi comparison graph $\mathcal{G}(n, q_n)$, where any two subjects are compared with probability $q_n$.
If $q_n \to 0$, this implies a sparse comparison design.
We assume that if two subjects are compared, they are compared at most $m_*$ times with $m_*$ fixed,
according to the aforementioned settings.
In addition, we consider the case of an increasing dimension of covariates, i.e., $p_n\to\infty$.
When $p$ depends on $n$, we write $p_n$ instead of $p$.
The consistency result is stated below, whose proof is in the supplementary material A.

\begin{theorem}
\label{Theorem-con-incr}
Assume that $\| \bs{\beta}^* \|_\infty$ and $\|\bs{\gamma}^*\|_2$ are bounded by a positive constant,
and $q_n \ge c_1 \log n/n$ for a sufficiently large constant $c_1$.
If condition \ref{condi-eigen} holds, $p_n^2=o(nq_n/\log n)$ and $\kappa=\sup_{i,j,k} \| Z_{ijk} \|_2\le C$ for some constant $C$, 
then, with a probability of at least $1-O(n^{-1})$,
 the MLE $(\widehat{\bs{\beta}}, \widehat{\bs{\gamma}})$ exists and satisfies
\begin{equation*}
\| \widehat{\bs{\beta}} - \bs{\beta}^* \|_\infty = O\left(  \sqrt{\frac{\log n}{ nq_n }} \right), ~~
\| \widehat{\bs{\gamma}} - \bs{\gamma}^{*} \|_2= O\left( \sqrt{\frac{p_n\log n}{nq_n}} \right).
\end{equation*}
\end{theorem}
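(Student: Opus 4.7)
The plan is to mirror the two-stage Newton--Kantorovich strategy used for Theorem \ref{Theorem:con}, but adapted to (i) the random sparsification induced by $\mathcal{G}(n, q_n)$, which replaces the effective per-vertex sample size $n$ by $nq_n$, and (ii) the diverging covariate dimension $p_n$. Throughout, I work on a ``good event'' $\mathcal{E}$ on which the realized Erd\H{o}s--R\'enyi graph has the right concentration properties, and show $\P(\mathcal{E}) \ge 1 - O(n^{-1})$ at the end.

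The first phase is graph-level concentration. Let $\xi_{ij}$ be the Bernoulli indicator that subjects $i,j$ are compared. By Bernstein's inequality applied coordinatewise, with probability $1 - O(n^{-1})$ every realized degree $\sum_{j \ne i} \xi_{ij}$ lies in $[c_0 nq_n, C_0 nq_n]$, provided $q_n \ge c_1 \log n/n$. A matrix Bernstein argument on $\sum_{i<j} \xi_{ij} \sum_k Z_{ijk} Z_{ijk}^\top$, combined with Condition \ref{condi-eigen} and boundedness of $\|Z_{ijk}\|_2$, gives $\lambda_{\min} \ge c q_n n^2$ on $\mathcal{E}$. Coordinatewise Hoeffding--Bernstein bounds then control the score vectors at the truth:
\begin{equation*}
\|H(\bs{\beta}^*, \bs{\gamma}^*)\|_\infty = O\bigl(\sqrt{nq_n \log n}\bigr), \qquad \|Q(\bs{\beta}^*, \bs{\gamma}^*)\|_2 = O\bigl(\sqrt{p_n n^2 q_n \log n}\bigr),
\end{equation*}
both with probability $1 - O(n^{-1})$.

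Next, I run the inner Newton--Kantorovich step for $\bs{\beta}$ at fixed $\bs{\gamma}$. The Jacobian $H'_\gamma(\bs{\beta})$ now belongs (on $\mathcal{E}$) to the class $\mathcal{L}_n(b_0 q_n, b_1 q_n)$, so the Simons approximation $S$ in \eqref{definition-s} satisfies $\|S\|_\infty = O(1/(nq_n))$ and $\|V^{-1} - S\|_\infty$ is of strictly smaller order. The quadratic remainder term in a Taylor expansion of $H_\gamma$ around $\bs{\beta}^*$ is bounded because $\mu''$ is uniformly bounded. The Newton--Kantorovich condition $\|V^{-1}\|_\infty \cdot \|H_\gamma(\bs{\beta}^*)\|_\infty \cdot L = o(1)$ then reduces to $\sqrt{\log n/(nq_n)} = o(1)$, yielding existence of $\widehat{\bs{\beta}}_\gamma$ and
\begin{equation*}
\|\widehat{\bs{\beta}}_\gamma - \bs{\beta}^*\|_\infty = O\bigl(\sqrt{\log n/(nq_n)}\bigr)
\end{equation*}
uniformly for $\bs{\gamma}$ in an $O(\sqrt{p_n \log n/(nq_n)})$-neighborhood of $\bs{\gamma}^*$. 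Symmetrically, the outer Newton--Kantorovich step for $\bs{\gamma}$ at fixed $\bs{\beta}$ uses the Jacobian $\partial Q/\partial \bs{\gamma}^\top = \sum \xi_{ij} Z_{ijk} Z_{ijk}^\top \mu'(\pi_{ijk})$, whose minimum eigenvalue is $\gtrsim nq_n \cdot (nq_n/nq_n) \asymp n^2 q_n$; this gives $\|\widehat{\bs{\gamma}}_\beta - \bs{\gamma}^*\|_2 = O(\sqrt{p_n \log n/(nq_n)})$.

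Finally, I splice the two stages. Substituting $\widehat{\bs{\beta}}_{\widehat{\bs{\gamma}}}$ into $Q_c$ and Taylor expanding in $\bs{\beta}$ around $\bs{\beta}^*$, the cross-error term is of order $(n^2 q_n) \cdot \|\widehat{\bs{\beta}} - \bs{\beta}^*\|_\infty^2 \cdot \sqrt{p_n} \asymp \sqrt{p_n}\, n q_n \cdot (\log n)/(nq_n) = \sqrt{p_n \log^2 n}$, which must be dominated by the leading stochastic term $\sqrt{p_n n^2 q_n \log n}/(n^2 q_n) \cdot n^2 q_n \asymp \sqrt{p_n n^2 q_n \log n}$; this holds precisely under $p_n^2 = o(nq_n/\log n)$. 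A contraction/fixed-point argument on the map $\bs{\gamma} \mapsto \widehat{\bs{\gamma}}_{\widehat{\bs{\beta}}_\gamma}$ then produces the joint MLE with the stated rates. The main obstacle is this last step: tracking how the $\bs{\beta}$-error propagates into the $\bs{\gamma}$ equation (and back) when $p_n$ grows, and verifying that the quadratic remainders in both Taylor expansions remain negligible under the stated scaling of $(p_n, q_n)$ - this is what pins down the condition $p_n^2 = o(nq_n/\log n)$ as essentially sharp for the two-stage scheme.
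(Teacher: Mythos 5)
Your overall architecture (a good event for the realized Erd\H{o}s--R\'enyi graph, concentration of $H$ and $Q$ at the truth, a two-stage alternating scheme, and a final splice) matches the paper's, and your graph-level concentration claims and target rates are right. But there is a genuine gap at the heart of your inner step. You assert that on the good event the Jacobian $H'_\gamma(\bs{\beta})$ belongs to $\mathcal{L}_n(b_0 q_n, b_1 q_n)$ and then invoke the approximation $S$ of \eqref{definition-s} with error bound \eqref{O-upperbound}. This fails: membership in $\mathcal{L}_n(b_0,b_1)$ requires \emph{every} off-diagonal entry to satisfy $b_0\le -v_{ij}\le b_1$ with $b_0>0$, whereas in a sparse comparison graph $v_{ij}$ is exactly zero for every non-compared pair --- which is almost all pairs when $q_n\asymp \log n/n$. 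There is no positive $b_0$ for which the class condition holds, and the bound \eqref{O-upperbound} degenerates as $b_0\to 0$. This is precisely why the paper abandons the $S$-approximation in this regime: it recenters $\bs{\beta}$ to $\bs{\theta}$ with $\sum_i\theta_i=0$, controls $\lambda_{\min,\perp}$ of the graph Laplacian $\mathcal{L}_M$ via Chernoff/Bernstein bounds, and obtains $\|\widehat{\bs{\theta}}_\gamma-\bs{\theta}^*\|_\infty = O(\sqrt{\log n/(nq_n)})$ by adapting the argument of Theorem 3.1 of \cite{chen2020partial}, noting that the bounded covariate term $Z_{ijk}^\top\bs{\gamma}$ does not change the orders of the relevant derivatives. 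Your Newton--Kantorovich step for $\bs{\beta}$ would need to be replaced by, or justified through, such a Laplacian-based argument.

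Two smaller points. First, for the $\bs{\gamma}$-step with diverging $p_n$ the paper does not use Newton--Kantorovich either; it verifies $(\bs{\gamma}-\bs{\gamma}^*)^\top Q_\beta(\bs{\gamma})>0$ on the sphere of radius $\Delta\sqrt{p_n\log n/(nq_n)}$ and invokes Theorem 6.3.4 of \cite{ortega1970iterative}. In that decomposition the constraint $p_n^2=o(nq_n/\log n)$ arises from the score term $I_1$ and the Jacobian-variation term $I_{22}$, not from the propagation of the $\bs{\beta}$-error into the $\bs{\gamma}$-equation: the cross term $I_3$ is of order $n\sqrt{p_n}\log n$ and is dominated by the main quadratic term of order $np_n\log n$ without any extra condition. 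Second, the arithmetic in your splicing step is internally inconsistent: $(n^2q_n)\cdot\|\widehat{\bs{\beta}}-\bs{\beta}^*\|_\infty^2\cdot\sqrt{p_n}$ equals $\sqrt{p_n}\,n\log n$, not $\sqrt{p_n\log^2 n}$, and the claimed domination by $\sqrt{p_n n^2 q_n\log n}$ would require $\log n\ll q_n$, which is impossible. The stated conclusion is still attainable, but not by the bookkeeping as written.
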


When sampling probability $q_n$ is less than $\log n/n$, the realized comparison graph is disconnected with a positive probability
according to the theory of the Erd\H{o}s--R\'{e}nyi graph. In this case, all subjects can be divided into two groups such that
any subject in the first group does not have comparisons with any subject in the second group, where it is not possible to give
a ranking of all subjects.
Therefore, $q_n$ should be not smaller than $\log n/n$, up to a constant factor, which is a fundamental
requirement to guarantee the connection of the sampling graph.
Condition $p_n^2=o(nq_n/\log n)$ restricts the increasing rate of $p_n$, which
reduces to the condition in \cite{wang2011} when $q_n$ is a constant.

\subsection{Extensions to a fixed sparse comparison graph with a dynamic range of merit parameters}
\label{section:extension-fixed}

We extend them to a fixed sparse comparison graph in \cite{yan2012sparse} here.
In some applications such as sports, the comparison graph may be fixed, not be random.
For example, in the regular season of the National Football League (NFL),  which teams having games are scheduled in advance.
More specially, there are $32$ teams in the two conferences of the NFL and are divided into eight divisions each consisting of four teams.
In the regular season, each team plays $16$ matches, $6$ within the division and $10$
between the divisions.
Motivated by the design of the regular season of the National Football League,  they proposed a sparse condition to control the length from one
subject to another subject with $2$ or $3$:
\begin{equation*}\label{eq-difintion-taun}
\tau_n := \min_{0\le i<j \le n} \frac{\#\{k: m_{ik}>0, m_{jk}>0 \} }{n}.
\end{equation*}
That is, $\tau_n$ is the minimum ratio of the total number of paths between any $i$ and $j$ with length $2$ or $3$.

We assume that if two subjects have comparisons, they are compared at most $m_*$ times with $m_*$ fixed,
in accordance with the aforementioned setting.
The same proof technique can be readily extended to the setting here.
The main different places are the error bound of using $S$ to approximate $V^{-1}$, $\|V^{-1} - S \|_{\max}$, and the number of comparisons of subject $i$, $m_i$,  that will be replaced with $b_n^3/(n^2 \tau_n^3)$ and $n\tau_n$ in the sparse case.
Here, $V=H^\prime_{\gamma^*}(\bs{\beta}^*)$.
Define
\begin{equation}\label{eq-definition-bn}
b_{n} := \max_{ i,j, k} \frac{ (1+e^{\pi_{ijk}^*})^2 }{ e^{\pi_{ijk}^*}  } = =O( e^{ \max_{i,j}(\beta_i^*-\beta_j^*) + z_*\|\bs{\gamma}^*\|_1 }),
\end{equation}
where $z_*=\max_{i,j,k} \|Z_{ijk}\|_\infty$.
That says $\min_{i,j,k} \mu^\prime( \pi_{ijk}^*) \ge 1/b_n$. It is easy to see $b_n\ge 4$.

Let $\lambda_{\min}(\bs{\beta})$ be the smallest eigenvalue of $n^{-2}\Sigma(\bs{\beta}, \bs{\gamma}^*)$ and define
\begin{equation}\label{definition-rhon}
\rho_n := \sup_{\bs{\beta}\in B(\bs{\beta}^*, \epsilon_{n1})} \frac{\sqrt{2}}{\lambda_{\min}(\bs{\beta})}.
\end{equation}
Let $\| A \|_2$ be the $\ell_2$-norm of a matrix $A$ induced by Euclidean norm on vectors.
By the inequality of matrix norm, as in \cite{Matrixnorm} (p. 56--57),  we have
\begin{equation}\label{Omega-inverse-bound}
\sup_{\bs{\beta}\in \Sigma(\bs{\beta}^*, \epsilon_{n1})} \| \Sigma^{-1}(\bs{\beta}, \bs{\gamma}^*)\|_\infty
\le \sup_{\bs{\beta}\in \Sigma(\bs{\beta}^*, \epsilon_{n1})} \sqrt{2}\| \Sigma^{-1}(\bs{\beta}, \bs{\gamma}^*)\|_2
 \le  \frac{ \rho_n }{ n^2}.
\end{equation}
Note that the dimension of the matrix $\Sigma(\bs{\beta}, \bs{\gamma})$ is fixed and every its entry is a sum of $n(n-1)/2$ terms.
There it is suitable to have a factor $n^{-2}$ in the above inequality.
We have the following theorem, whose proof is in the supplementary material B.

\begin{theorem}\label{Theorem:fixed}
(1) If $\rho_n b_n^{9}/\tau_n^9 = o( (n/\log n)^{1/2})$, then with probability at least $1-O(n^{-1})$,
 the MLE $(\widehat{\bs{\beta}}, \widehat{\bs{\gamma}})$ exists and satisfies
\begin{equation*}\label{Newton-convergence-rate}
\| \widehat{\bs{\gamma}} - \bs{\gamma}^{*} \|_\infty = O_p\left(  \frac{\rho_n b_{n}^{9} \log n }{ n \tau_n^9}) \right)=o_p(1),~~
\| \widehat{\bs{\beta}} - \bs{\beta}^* \|_\infty = O_p\left( \frac{b_n^3}{\tau_n^3} \sqrt{\frac{\log n}{n}} \right)=o_p(1).
\end{equation*}
(2) 
If  $\rho_n b_n^9/\tau_n^9 = o( n^{1/2}/(\log n)^{1/2} )$,
then for fixed $k$, the vector $( (\widehat{\beta}_1 - \beta^*_1), \ldots,  (\widehat{\beta}_k - \beta^*_k))$
follows a $k$-dimensional multivariate normal distribution with mean zero and the covariance matrix  given by the upper left $k\times k$ block of $S$
defined at \eqref{definition-s}. \\
(3) If $ b_{n}/\tau_n = o( n^{1/24}/(\log n)^{/24} )$ and $\rho_n b_n^{9}/\tau_n^9 = o( (n/\log n)^{1/2})$, then
for arbitrarily given nonzero constant vector $c=(c_1, \ldots, c_p)^\top$, $\sqrt{N}c^\top (\widehat{\gamma} - \gamma)$ converges in distribution to
the normal distribution with mean $\bar{\Sigma}^{-1}B_*$ and variance $c^\top \bar{\Sigma} c$.
\end{theorem}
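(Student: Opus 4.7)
The plan is to follow exactly the three-stage architecture used for Theorems \ref{Theorem:con}, \ref{theorem-central-b} and \ref{Theorem-central-a} in the dense setting, but re-calibrated with the sparse substitutions indicated in the excerpt: the typical degree $m_i$ is replaced by $n\tau_n$, and the simple-matrix approximation error $\|V^{-1}-S\|_{\max}$ is enlarged from its dense-case bound to $b_n^{3}/(n^{2}\tau_n^{3})$. Throughout I would freeze $V=H'_{\gamma^*}(\bs{\beta}^*)$ and $S$ as defined in \eqref{definition-s}, and exploit Condition \ref{condi-eigen} together with \eqref{Omega-inverse-bound} to control $\|\Sigma^{-1}(\bs{\beta},\bs{\gamma}^*)\|_\infty\le\rho_n/n^{2}$ on a neighborhood of $\bs{\beta}^*$.

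For part (1), I would run the same two-stage Newton argument as in Section \ref{subsec-consis}. First, for each fixed $\bs{\gamma}$ near $\bs{\gamma}^*$, apply the Newton--Kantorovich conditions to $H_\gamma(\bs{\beta})=0$: verify that $\|[H'_{\gamma^*}(\bs{\beta}^*)]^{-1}\|_\infty$ is controlled through $\|S\|_\infty+\|V^{-1}-S\|_\infty$, that the Lipschitz constant of $H'_\gamma$ is $O(n\tau_n)$, and that the initial residual $\|H_\gamma(\bs{\beta}^*)\|_\infty$ is $O_p(\sqrt{n\tau_n\log n})$ by a Bernstein/Hoeffding bound on $d_i-\E d_i$. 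Combining these gives
\[
\|\widehat{\bs{\beta}}_\gamma-\bs{\beta}^*\|_\infty=O_p\!\left(\frac{b_n^{3}}{\tau_n^{3}}\sqrt{\frac{\log n}{n}}\right),
\]
uniformly for $\bs{\gamma}$ in an $o(1)$-ball of $\bs{\gamma}^*$. Plugging $\widehat{\bs{\beta}}_\gamma$ into $Q_c(\bs{\gamma})$ and running a second Newton--Kantorovich argument on $Q_c(\bs{\gamma})=0$, controlling its Jacobian below by $c_0 n^{2}$ via Condition \ref{condi-eigen} and $\rho_n$, gives the stated rate for $\widehat{\bs{\gamma}}$; the smallness assumption $\rho_n b_n^{9}/\tau_n^{9}=o((n/\log n)^{1/2})$ is precisely what closes the Kantorovich inequality in this second stage.

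For part (2), I would mimic the proof of Theorem \ref{Theorem-central-a}: Taylor-expand $H_{\gamma^*}(\widehat{\bs{\beta}})=0$ to second order at $\bs{\beta}^*$, solve for $\widehat{\bs{\beta}}-\bs{\beta}^*=V^{-1}(\mathbf{d}-\E\mathbf{d})+V^{-1}V_{\gamma\beta}(\widehat{\bs{\gamma}}-\bs{\gamma}^*)+R$, replace $V^{-1}$ by $S$ using $\|V^{-1}-S\|_{\max}\le b_n^{3}/(n^{2}\tau_n^{3})$, and show every non-Gaussian term is $o_p(n^{-1/2})$. The $V_{\gamma\beta}(\widehat{\bs{\gamma}}-\bs{\gamma}^*)$ term is handled because part (1) gives $\widehat{\bs{\gamma}}-\bs{\gamma}^*=O_p(\log n\cdot \rho_n b_n^{9}/(n\tau_n^{9}))$, which under the stronger smallness assumption $\rho_n b_n^{9}/\tau_n^{9}=o(n^{1/2}/(\log n)^{1/2})$ is $o_p(n^{-1/2})$. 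The quadratic remainder is bounded crudely by $\|S\|_\infty\cdot O(n\tau_n)\cdot\|\widehat{\bs{\beta}}-\bs{\beta}^*\|_\infty^{2}$ and is likewise $o_p(n^{-1/2})$. Asymptotic normality of $S(\mathbf{d}-\E\mathbf{d})$ coordinate by coordinate then follows from a Lindeberg CLT, giving covariance equal to the upper-left block of $S$.

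For part (3), I would replay the proof of Theorem \ref{theorem-central-b}: write $0=Q_c(\widehat{\bs{\gamma}})$, expand to third order at $\bs{\gamma}^*$, invert the leading $\Sigma(\bs{\beta}^*,\bs{\gamma}^*)/N\to\bar{\Sigma}$, and identify the linear piece $N^{-1/2}\bar{\Sigma}^{-1}Q_c(\bs{\gamma}^*)$ (asymptotically $N(0,\bar{\Sigma})$ after pre-multiplication by $\mathbf{c}^\top$), the quadratic bias piece producing $\bar{\Sigma}^{-1}B_*$, and a cubic remainder. The key input is again the approximation $V^{-1}\approx S$ (used inside $\Sigma$ through \eqref{eq-approximation-Sigma}) with the sparse error $b_n^{3}/(n^{2}\tau_n^{3})$, together with the $\widehat{\bs{\beta}}$ rate from part (1); the extra condition $b_n/\tau_n=o(n^{1/24}/(\log n)^{1/24})$ is what makes the cubic term $o_p(1)$ after multiplication by $\sqrt{N}$.

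The main obstacle, in my view, is item (2) and the analogous error-propagation estimates for (3): every occurrence of the dense-case bound $O(1/n^{2})$ for $\|V^{-1}-S\|_{\max}$ must be replaced by $O(b_n^{3}/(n^{2}\tau_n^{3}))$, and every sum over neighbors of a subject carries only $\sim n\tau_n$ effective terms, so the bookkeeping to show that each remainder is $o_p(n^{-1/2})$ becomes delicate. In particular, bounding $\|V^{-1}V_{\gamma\beta}(\widehat{\bs{\gamma}}-\bs{\gamma}^*)\|_\infty$ cleanly requires combining the sparse $\|V^{-1}\|_\infty$ bound, the entrywise control of $V_{\gamma\beta}$, and the sharp consistency rate of $\widehat{\bs{\gamma}}$ from part (1); getting a loss-free estimate here is precisely where the smallness hypothesis $\rho_n b_n^{9}/\tau_n^{9}=o(n^{1/2}/(\log n)^{1/2})$ is inescapable, and matching this with the $B_*$-term analysis in (3) is the most tedious part of the argument.
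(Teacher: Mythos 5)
Your proposal is correct and follows essentially the same route as the paper: the paper itself states that Theorem \ref{Theorem:fixed} is obtained by rerunning the dense-case arguments for Theorems \ref{Theorem:con}, \ref{theorem-central-b} and \ref{Theorem-central-a} with the two substitutions $\|V^{-1}-S\|_{\max}\le b_n^{3}/(n^{2}\tau_n^{3})$ and $m_i\to n\tau_n$, which is precisely your plan, including the roles you assign to $\rho_n$ via \eqref{Omega-inverse-bound} and to the extra smallness hypotheses in parts (2) and (3).
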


\section{Numerical Studies}
\label{section:simulation}

In this section, we evaluate the asymptotic results of the MLE using simulation studies and a real-world data example.

\subsection{Simulation studies}

We assume that each subject is compared with another subject only once, that is, $m_{ij}=1$ for all $i\neq j$.
A comparison between $i$ and $j$ is associated with the two-dimensional covariate vector $Z_{ij}=(Z_{ij1}, Z_{ij2})^\top$.
When $i<j$, $Z_{ij1}$ takes values $-1$ or $1$ randomly with equal probability, and $Z_{ij2}$
was generated using a standard normal distribution. Note that $Z_{ji}=-Z_{ij}$.
All covariates were generated independently.
For the parameter $\bs{\gamma}^*$, we set $\gamma^*=(0.5, 0.5)^\top$.

We set the merit parameters to be a linear form, i.e.,
$\beta_i^* = ic\log n/n$ for $i=0, \ldots, n$, where $\max_{i,j}(\beta_i^* - \beta_j^*)= c\log n$.
To assess asymptotic properties under different asymptotic regimes,
we considered four different values of $c$: $c=0, 0.05, 0.1, 0.2$.

From Theorem \ref{Theorem-central-a}, 
$\hat{\xi}_{i,j} = [\hat{\beta}_i-\hat{\beta}_j-(\beta_i^*-\beta_j^*)]/(1/\hat{v}_{ii}+1/\hat{v}_{jj})^{1/2}$
converges in distribution to the standard normality, where $\hat{v}_{i,i}$ is the estimate of $v_{i,i}$
by replacing $(\bs{\beta}^*, \bs{\gamma}^*)$ with $(\widehat{\bs{\beta}}, \widehat{\bs{\gamma}})$.
We also recorded the coverage probability of the $95\%$ confidence interval and the length of the confidence interval.
Each simulation was repeated $5,000$ times. Two values,  $n=100$ and $n=200$, are considered for each participant.

The MLE existed in all the simulations.
Table \ref{Table:alpha} reports the coverage probability, the $95\%$ confidence interval for $(\beta_i^*-\beta_j^*)$ and the length of the confidence interval.
As we can see, the length of the confidence interval decreases as $n$ increases, which qualitatively agrees with the theory.
Because the difference in the merit parameters between adjacent subjects was very small, the lengths of the confidence intervals were very close across different pairs.
This is consistent with the theoretical length.
The simulated coverage frequencies are close to the nominal level $95\%$ when $c=0$ or $c=0.05$.
When $c=0.1$ or $c=0.2$, the values are visibly lower than the nominal levels for $(i,j)=(0, n/2)$ and $(0, n)$.
It should be noted that, in these cases, it is more difficult to estimate a large difference between the two merit parameters  than for two close merit parameters.
This result indicates that by controlling the growth rate of $\|\bs{\beta}^*\|_\infty$
is necessary to ensure the good properties of the MLE.

{\renewcommand{\arraystretch}{1}
\begin{table}[!h]\centering
\caption{The reported values are the coverage frequency ($\times 100\%$) for $\beta_i-\beta_j$ for a pair $(i,j)$ / the length of the confidence interval.}
\label{Table:alpha}
\begin{tabular}{ccccccc}
\hline
n       &  $(i,j)$ & $c=0$ & $c=0.05$ & $c=0.1$ & $c=0.2$ \\
\hline
100 & $(0,1)$     & $ 95.22 / 1.18 $&$ 94.96 / 1.18 $&$ 94.94 / 1.18 $&$ 94.46 / 1.19 $ \\
 & $(50, 51)$  & $ 95.12 / 1.18 $&$ 94.74 / 1.18 $&$ 93.96 / 1.18 $&$ 92.92 / 1.19 $ \\
 & $(0,50)$    & $ 95.16 / 1.18 $&$ 93.96 / 1.18 $&$ 90.70 / 1.18 $&$ 74.32 / 1.19 $ \\
 & $(99,100)$  & $ 94.58 / 1.18 $&$ 94.90 / 1.18 $&$ 94.96 / 1.19 $&$ 95.34 / 1.20 $ \\
 & $(0,100)$   & $ 94.28 / 1.18 $&$ 93.44 / 1.18 $&$ 86.32 / 1.19 $&$ 58.36 / 1.19 $ \\
&&&&&&\\
200  & $(0,1)$ & $ 94.86 / 0.83 $&$ 94.96 / 0.83 $&$ 94.54 / 0.83 $&$ 94.80 / 0.83 $     \\
     & $(100,101)$ & $ 95.20 / 0.83 $&$ 94.76 / 0.83 $&$ 93.70 / 0.83 $&$ 90.36 / 0.83 $ \\
     & $(0,100)$  & $ 94.62 / 0.83 $&$ 88.24 / 0.83 $&$ 65.68 / 0.83 $&$ 13.00 / 0.83 $  \\
     & $(199,200)$ & $ 95.08 / 0.83 $&$ 94.90 / 0.83 $&$ 94.88 / 0.84 $&$ 95.38 / 0.84 $ \\
     & $(0,200)$   & $ 95.02 / 0.83 $&$ 86.90 / 0.83 $&$ 67.56 / 0.83 $&$ 14.60 / 0.84 $ \\
\hline
\end{tabular}
\end{table}
}

Table \ref{Table:gamma} reports the coverage frequencies when estimating $\widehat{\gamma}$
and the bias-corrected estimate  $\widehat{\gamma}_{bc}$ 
at a nominal level $95\%$ and the standard error.
As can be observed, the differences between the coverage frequencies with
 uncorrected, and bias-corrected estimates are small.
All the coverage frequencies were close to the nominal level. However, the values obtained with the bias correction were closer to the nominal level.
This implies that the bias was very small in our simulation design.

{\renewcommand{\arraystretch}{1}
\begin{table}[!htbp]\centering
\caption{
The reported values are the coverage frequency ($\times 100\%$) for $\widehat{\bs{\gamma}}$ / the coverage frequency ($\times 100\%$) for $\widehat{\bs{\gamma}}_{bc}$ /length of confidence interval.
}
\label{Table:gamma}
\begin{tabular}{cclllcc}
\hline
$n$     &   $\bs{\gamma}$  & $c=0$ & $c=0.05$ & $c=0.1$ & $c=0.2$ \\
\hline
$100$   & ${\gamma}_1$            &$ 93.34 / 95.08 / 0.12 $&$ 93.44 / 95.32 / 0.12 $&$ 94.32 / 95.64 / 0.12 $&$ 95.06 / 94.62 / 0.12 $ \\

        & ${\gamma}_2$            &$ 94.10 / 95.00 / 0.13 $&$ 93.46 / 94.74 / 0.13 $&$ 93.86 / 94.90 / 0.13 $&$ 93.98 / 93.32 / 0.13 $ \\

$200$   & ${\gamma}_1$            &$ 92.78 / 94.70 / 0.06 $&$ 94.36 / 95.12 / 0.06 $&$ 95.16 / 95.44 / 0.06 $&$ 92.62 / 88.14 / 0.06 $ \\

        & ${\gamma}_2$            &$ 93.70 / 95.18 / 0.06 $&$ 94.14 / 95.22 / 0.06 $&$ 94.4 / 94.72 / 0.06 $&$ 92.48 / 88.76 / 0.06 $  \\
\hline
\end{tabular}
\end{table}
}

\subsection{A real data example}

The National Basketball Association (NBA) is the world's premier men's professional basketball league and is one of the major professional sports leagues in North America.
It contains 30 teams equally divided into Eastern and Western conferences.
In the regular season, each team plays two, three, or four games against another, for a total of 82 games, of which 41 games
were in their home arena, and 41 were played away. Thus, there were 1, 230 games in the NBA regular season.
We used the 2018-19 NBA regular season data as an example, which is available from
\url{https://www.landofbasketball.com/yearbyyear/2018_2019_teams.htm}.
We consider ``home/away" as the covariate. When team $i$ interacts with team $j$ in the $k$th comparison,
we set $Z_{ijk}=1$ if $i$ is at home; otherwise $Z_{ijk}=-1$.
The fitted merits are given in Table \ref{Table:beta:real}, where we use ``Washington Wizards" as
the baseline ($\beta_n = 0$).

The estimated home effect $\widehat{\gamma}$ and its standard errors are $0.45$ and $0.065$, respectively.
Under the null hypothesis of having no home effects, this gives a $p$-value $2.1\times 10^{-12}$, indicating a significant home advantage.

It would be interesting to compare the order of the eight playoff seeds at the two conferences
by the NBA rule, with ordering based on the merits obtained from fitting the
Bradley--Terry model. The order from high to low in the eight playoff seeds of the Western conference is as follows:
Warriors, Nuggets, Trailblazers, Rockets, Glasses, Thunder, Spurs, and Clippers.
The corresponding order at the Eastern Conference was:
Bucks, Raptors, 76ers, Celtics, Pacers, Nets, Magics, Pistons.
From Table \ref{Table:beta:real}, we
see that the ordering of the merits of the Eastern Conference is consistent with that of
the NBA rule. In addition, at the Western conference, the order of seven and eight seeds was switched.

{\renewcommand{\arraystretch}{1}
\begin{table}[!hbt]\centering
\scriptsize
\caption{The estimates of $\beta_i$ and their standard errors in 2018-19 NBA regular season.}
\label{Table:beta:real}
\begin{tabular}{l lccc c lccc }
\hline
Order &Subject & $d_i$  &  $\hat{\beta}_i$ & $\hat{\sigma}_i(\times 10)$ & & Subject & $d_i$  &  $\hat{\beta}_i$ & $\hat{\sigma}_i(\times 10)$ \\
\hline
  & \multicolumn{4}{c}{Western conference} & & \multicolumn{4}{c}{Eastern conference} \\
   \cline{2-5} \cline{7-10}
$ 1 $& Golden State Warriors &$ 57 $&$ 1.5 $&$ 3.52 $& & Milwaukee Bucks &$ 60 $&$ 1.6 $&$ 3.59 $\\
$ 2 $& Denver Nuggets &$ 54 $&$ 1.34 $&$ 3.47 $& & Toronto Raptors &$ 58 $&$ 1.48 $&$ 3.54 $\\
$ 3 $& Portland Trail Blazers &$ 53 $&$ 1.28 $&$ 3.46 $& & Philadelphia 76ers &$ 51 $&$ 1.07 $&$ 3.45 $\\
$ 4 $& Houston Rockets &$ 53 $&$ 1.27 $&$ 3.46 $& & Boston Celtics &$ 49 $&$ 0.95 $&$ 3.43 $\\
$ 5 $& Utah Jazz &$ 50 $&$ 1.09 $&$ 3.43 $& & Indiana Pacers &$ 48 $&$ 0.89 $&$ 3.43 $\\
$ 6 $& Oklahoma City Thunder &$ 49 $&$ 1.04 $&$ 3.43 $& & Brooklyn Nets &$ 42 $&$ 0.57 $&$ 3.41 $\\
$ 7 $& Los Angeles Clippers &$ 48 $&$ 0.98 $&$ 3.41 $& & Orlando Magic &$ 42 $&$ 0.57 $&$ 3.41 $\\
$ 8 $& San Antonio Spurs &$ 48 $&$ 0.97 $&$ 3.41 $& & Detroit Pistons &$ 41 $&$ 0.52 $&$ 3.41 $\\
$ 9 $& Sacramento Kings &$ 39 $&$ 0.49 $&$ 3.4 $& & Miami Heat &$ 39 $&$ 0.42 $&$ 3.4 $\\
$ 10 $& Los Angeles Lakers &$ 37 $&$ 0.4 $&$ 3.41 $& & Charlotte Hornets &$ 39 $&$ 0.42 $&$ 3.41 $\\
$ 11 $& Minnesota Timberwolves &$ 36 $&$ 0.36 $&$ 3.4 $& & Washington Wizards &$ 32 $&$ 0 $&$ 3.45 $\\
$ 12 $& Memphis Grizzlies &$ 33 $&$ 0.19 $&$ 3.43 $& & Atlanta Hawks &$ 29 $&$ -0.14 $&$ 3.47 $\\
$ 13 $& Dallas Mavericks &$ 33 $&$ 0.18 $&$ 3.43 $& & Chicago Bulls &$ 22 $&$ -0.56 $&$ 3.6 $\\
$ 14 $& New Orleans Pelicans &$ 33 $&$ 0.15 $&$ 3.43 $& & Cleveland Cavaliers &$ 19 $&$ -0.77 $&$ 3.69 $\\
$ 15 $& Phoenix Suns &$ 19 $&$ -0.69 $&$ 3.68 $& & New York Knicks &$ 17 $&$ -0.9 $&$ 3.76 $\\
\hline
\end{tabular}

\end{table}

\section{Summary and discussion}
\label{section:sd}

We present the maximum likelihood  estimation of the CBTM.
Using a two-stage process, we demonstrated the consistency of the MLE when the number of subjects approached infinity.
Furthermore, by applying a two/third-order Taylor expansion to
score functions, we establish the asymptotic normality of the MLE.
The strategies for deriving the asymptotic properties of the MLE
shed light on a principal approach to similar problems. These principles should apply to
a class
of paired comparison models, in which the logistic distribution in the Bradley--Terry model is replaced by
using other distributions such as the probit distribution in the Thurstone model [\cite{thurstone1994a}]
even for generalized Bradley--Terry models with ties [\cite{davidson1970on,rao1967ties}].

Note that all results are built on the assumption that all parameters are bounded above by a constant.
However, our simulation results indicate that this assumption could be relaxed.
Under different conditions imposed on the minimum eigenvalue of the information matrix on the parameter $\bs{\gamma}$,
the consistency and asymptotic normality of the MLE still holds when the range of parameters grows with a slowing rate,
as shown in Theorem \ref{Theorem:fixed}.
Note that the asymptotic behavior of the MLE depends on the configuration of all parameters.
It would be interesting to investigate whether these conditions could be relaxed.
In addition, we only extend consistency result to a sparse Erd\H{o}s--R\'{e}nyi comparison graph with a diverging dimension of covariates.
When the dimension of covariates, $p_n$, increases,
the convergence rate the MLE of its regression coefficient becomes slow as shown in Theorem \ref{Theorem-con-incr}.
Different diverging rates of $p_n$ have different influences on asymptotic distribution.
We would like to investigate this problem in future studies.

\section{Appendix}
\label{section:appendix}
\subsection{Preliminaries}

In this section, we present some preliminary results, which shall be used in the proofs.
The first is on the approximation error of using $S$ in \eqref{definition-s} to approximate the inverse of $V$ belonging to the matrix class
$\mathcal{L}_n(b_0, b_1)$.
\cite{Simons1998Approximating} obtained the upper bound of the approximation error
\begin{equation}\label{O-upperbound}
\|V^{-1} - S \|_{\max}\le \frac{1}{n^2}\left(1 + \frac{b_1}{b_0}\right)\frac{b_1^2}{b_0^3}=O\left(\frac{b_1^2}{n^2b_0^3}\right),
\end{equation}
where $V \in \mathcal{L}_n(b_0, b_1)$ for two positive numbers $b_0$ and $b_1$ with $b_0\le b_1$, and
$\|A \|_{\max}=\max_{i,j} |a_{ij}|$ for a general matrix $A$.

Next, we present some useful inequalities that will be repeatedly used in the proof.
Recall that $\mu(x) = e^x/(1+e^x)$.
Write $\mu^\prime$, $\mu^{\prime\prime}$ and $\mu^{\prime\prime\prime}$ as the first, second and third derivative of $\mu(x)$ on $x$, respectively.
We give the upper bounds of $\mu_{ij}^\prime$, $\mu_{ij}^{\prime\prime}$ and $\mu_{ij}^{\prime\prime\prime}$ here.
 A direct calculation gives that
\[
\mu^\prime(x) = \frac{e^x}{ (1+e^x)^2 },~~  \mu^{\prime\prime}(x) = \frac{e^x(1-e^x)}{ (1+e^x)^3 },~~ \mu^{\prime\prime\prime}(x) = \frac{e^x(1-4e^x+e^{2x})}{ (1+e^x)^4 }.
\]
Since $y(1-y) \le 1/4$ when $y\in [0,1]$, and
\[
|\mu^{\prime\prime}(x)| \le \frac{e^x}{ (1+e^x)^2 } \times \left|\frac{(1-e^x)}{ (1+e^x) }\right|,~~~~
|\mu^{\prime\prime\prime}(x)| =
\frac{e^x}{ (1+e^x)^2 } \times \left| \left[ \frac{(1-e^x)^2}{ (1+e^x)^2 } - \frac{2e^x}{ (1+e^x)^2 }  \right]\right|
\]
we have
\begin{equation}\label{eq-mu-d-upper}
|\mu^\prime(x)| \le \frac{1}{4}, ~~ |\mu^{\prime\prime}(x)| \le \frac{1}{4},~~ |\mu^{\prime\prime\prime}(x)| \le \frac{1}{4}.
\end{equation}

\subsection{Proof of Theorem  \ref{Theorem:con}}

The proof of Theorem  \ref{Theorem:con} contains two parts that
 derive the $\ell_\infty$-error between $\widehat{\bs{\beta}}_\gamma$ and $\bs{\beta}^*$ for a given $\bs{\gamma}$
and  obtain the $\ell_\infty$-error between $\widehat{\bs{\gamma}}_\beta$ and $\bs{\gamma}^*$ for a given $\bs{\beta}$,
respectively.
Both parts are proved via obtaining the error bound in the Newton iterative sequence.
In the first part, we use the Newton iterative sequence for
solving $H_\gamma(\bs{\beta})=0$ with $\bs{\beta}^*$ as the initial point.
The Kantovorich conditions depends crucially on the magnitudes of $\| H(\bs{\beta}^*, \bs{\gamma}^*) \|_\infty$ and
$\|Q(\bs{\beta}^*, \bs{\gamma}^*)\|_\infty$, which are established in Lemma \ref{lemma-Q-upper-bound}.
The existence of $\widehat{\bs{\beta}}_\gamma$ and $\ell_\infty$-error between $\widehat{\bs{\beta}}_\gamma$ and $\bs{\beta}^*$  are
stated in Lemma \ref{lemma-consistency-beta}. Correspondingly, the existence and the error bound of $\widehat{\bs{\gamma}}_\beta$
are stated in Lemma \ref{lemma-con-gamma}.

\begin{lemma}\label{lemma-Q-upper-bound}
Let $E_{n1}$ and $E_{n2}$ denote the events
\begin{eqnarray}
\label{def-En1}
E_{n1} := \left\{ 
\max\limits_{i=0, \ldots, n} |d_i-\E d_i|\le \max_i \sqrt{m_i\log m_i } \right\},
\\
\label{def-En2}
E_{n2} : = \Big\{ \|Q(\bs{\beta}^*, \bs{\gamma}^*) \|_\infty \le \kappa_n \big\{8 (\sum\nolimits_{i<j} m_{ij}) \log (\sum\nolimits_{i<j} m_{ij}) \big\}^{1/2} \Big\}.
\end{eqnarray}
For large $n$, we have
\begin{eqnarray}
\label{ineq-En1}
\P( E_{n1} ) &  \ge & 1 - \min_{i=0, \ldots, n} n\times\frac{2}{m_i^2}=1 - O\left( \frac{1}{n} \right),
\\
\label{ineq-En2}
\P( E_{n2} ) & \ge &  1 - \frac{ 2p }{ (\sum_{i<j} m_{ij})^2 }.
\end{eqnarray}
\end{lemma}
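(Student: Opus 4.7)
The plan is to prove both tail bounds as essentially routine applications of Hoeffding's inequality, exploiting the fact that all $a_{ijk}$ are independent Bernoulli variables (independent across pairs $(i,j)$ and across repetitions $k$). The additive structure of $d_i$ and of each coordinate of $Q(\bs{\beta}^*, \bs{\gamma}^*)$ makes them sums of independent bounded random variables; after a union bound the two tail estimates follow directly.

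For the bound on $E_{n1}$, observe that $d_i = \sum_{j \neq i}\sum_{k=1}^{m_{ij}} a_{ijk}$ is a sum of $m_i := \sum_{j\neq i} m_{ij}$ independent $\{0,1\}$-valued random variables, so Hoeffding's inequality gives $\P(|d_i - \E d_i| \ge t) \le 2\exp(-2t^2/m_i)$. Choosing $t = \sqrt{m_i\log m_i}$ makes the right-hand side equal to $2/m_i^2$, and since $\sqrt{m_i\log m_i} \le \max_j \sqrt{m_j \log m_j}$, a union bound over $i = 0, \ldots, n$ yields $\P(E_{n1}^c) \le \sum_{i=0}^{n} 2/m_i^2 \le 2(n+1)/\min_i m_i^2$. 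Under the dense setting every $m_{ij}\ge 1$ so $m_i \ge n$, producing the advertised $O(1/n)$ rate.

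For the bound on $E_{n2}$, write the $\ell$th coordinate of the score as $Q_\ell = \sum_{i<j}\sum_k Z_{ijk,\ell}(\mu_{ijk}^* - a_{ijk})$, a sum of independent mean-zero summands each supported in an interval of length $|Z_{ijk,\ell}| \le \|Z_{ijk}\|_2 \le \kappa_n$. Setting $M := \sum_{i<j} m_{ij}$, the bound $\sum_{i<j,k} Z_{ijk,\ell}^2 \le \kappa_n^2 M$ and Hoeffding's inequality together give $\P(|Q_\ell| \ge t) \le 2\exp\bigl(-2t^2/(\kappa_n^2 M)\bigr)$. Substituting $t = \kappa_n\sqrt{8M\log M}$ reduces the right-hand side to $2M^{-16}$, and a union bound over the $p$ coordinates yields $\P(E_{n2}^c) \le 2p\,M^{-16} \le 2p/M^2$ for $M$ sufficiently large, which is the advertised bound.

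There is no real obstacle here; both steps are textbook concentration arguments, and the main care is bookkeeping the correct Hoeffding range factor (length $1$ for each Bernoulli in $d_i$, length $|Z_{ijk,\ell}|$ for the weighted increments in $Q_\ell$). The generous constant $8$ in $\sqrt{8M\log M}$ is chosen so that the resulting tail is comfortably smaller than needed, which matters downstream: these deviation bounds feed directly into verifying the Newton--Kantorovich conditions in Lemmas \ref{lemma-consistency-beta} and \ref{lemma-con-gamma}, where some slack is convenient when absorbing further union bounds and coupling the two-stage iteration in $\bs{\beta}$ and $\bs{\gamma}$.
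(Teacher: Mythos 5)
Your proposal follows essentially the same route as the paper: Hoeffding's inequality applied to $d_i$ as a sum of $m_i$ independent Bernoulli variables, and to each coordinate $Q_\ell$ as a sum of independent increments bounded by $\kappa_n$, followed by union bounds over $i=0,\ldots,n$ and over the $p$ coordinates respectively. Your bookkeeping of the exponent (obtaining $2M^{-16}$ before relaxing to $2p/M^2$) is in fact cleaner than the paper's displayed intermediate computation, which contains an arithmetic slip but arrives at the same final bound.
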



\begin{lemma}
\label{lemma-consistency-beta}
Conditional on the event $E_{n1}$ in \eqref{def-En1}, for any $\gamma \in  B(\bs{\gamma}^*, \epsilon_{n2})$ with $\epsilon_{n2}=O( (\log n)^{1/2}/n^{1/2})$,
the solution to the equation $H_{\gamma}(\beta)=0$ exits, denoted by
$\bs{\widehat{\beta}}_\gamma$,  and satisfies
\[
\| \bs{\widehat{\beta}}_\gamma - \bs{\beta}^* \|_\infty =  O\left( \sqrt{\frac{\log n}{n}} \right)= o(1).
\]
In addition, the solution is unique if it exists.
\end{lemma}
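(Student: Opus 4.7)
My plan is to invoke the Newton--Kantorovich theorem for the map $H_\gamma(\cdot)$ with $\bs{\beta}^*$ as the initial point. Concretely, I would verify three ingredients: (i) an upper bound on the initial residual $\|H_\gamma(\bs{\beta}^*)\|_\infty$; (ii) a bound on the inverse Jacobian $\|[H'_\gamma(\bs{\beta}^*)]^{-1}\|_\infty$; and (iii) a Lipschitz-type bound on $H'_\gamma$ over a suitable neighborhood. Once $\eta_0 := \|[H'_\gamma(\bs{\beta}^*)]^{-1}H_\gamma(\bs{\beta}^*)\|_\infty$ and the Kantorovich product $h=\eta\eta_0 K$ are shown to satisfy $h<1/2$ with $\eta_0=O(\sqrt{\log n/n})$, the theorem yields both the existence of $\widehat{\bs{\beta}}_\gamma$ and the $\ell_\infty$-distance $\|\widehat{\bs{\beta}}_\gamma-\bs{\beta}^*\|_\infty \lesssim \eta_0$. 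Uniqueness will follow from the strict concavity of $\ell(\bs{\beta},\bs{\gamma})$ in $\bs{\beta}$ (for fixed $\bs{\gamma}$), which makes $-H_\gamma$ a strictly monotone gradient field.

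For ingredient (i), I would decompose
\[
H_\gamma(\bs{\beta}^*) = H(\bs{\beta}^*,\bs{\gamma}^*) + \bigl[H(\bs{\beta}^*,\bs{\gamma}) - H(\bs{\beta}^*,\bs{\gamma}^*)\bigr].
\]
Conditional on $E_{n1}$, the first term is componentwise $|H_i(\bs{\beta}^*,\bs{\gamma}^*)| = |\E d_i - d_i| = O(\sqrt{n\log n})$. For the second term, a first-order Taylor expansion in $\bs{\gamma}$ around $\bs{\gamma}^*$, together with $|\mu'|\le 1/4$, $\sup \|Z_{ijk}\|_2\le c_1$, and $\|\bs{\gamma}-\bs{\gamma}^*\|_2 \le \epsilon_{n2}=O(\sqrt{\log n/n})$, gives an $O(n\cdot\epsilon_{n2})=O(\sqrt{n\log n})$ bound per coordinate. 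Thus $\|H_\gamma(\bs{\beta}^*)\|_\infty=O(\sqrt{n\log n})$.

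For ingredient (ii), the Jacobian $V:=H'_{\gamma}(\bs{\beta}^*)$ lies in the matrix class $\mathcal{L}_n(b_0,b_1)$ with $b_0,b_1$ of constant order under Condition \ref{condi-para}, so the Simons--Yao approximation $S$ in \eqref{definition-s} satisfies $\|V^{-1}-S\|_{\max}=O(1/n^2)$ by \eqref{O-upperbound}. The crucial step is then to compute the Newton increment $-V^{-1}H_\gamma(\bs{\beta}^*)$ coordinatewise using $V^{-1}\approx S$: writing $\mathbf{v}:=H_\gamma(\bs{\beta}^*)$,
\[
(S\mathbf{v})_i \;=\; \frac{v_i}{v_{ii}} + \frac{1}{v_{00}}\sum_{j=1}^n v_j .
\]
Here $v_{ii}, v_{00}$ are of order $n$, so $|v_i/v_{ii}|=O(\sqrt{n\log n}/n)=O(\sqrt{\log n/n})$. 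The sum $\sum_{j=1}^n v_j$ is controlled by exploiting the identity $\sum_{i=0}^n d_i=\sum_{i<j}m_{ij}$ (a constant), which together with the antisymmetry $Z_{ijk}=-Z_{jik}$ forces near-cancellation in the $\bs{\gamma}$-difference contribution; this leaves only boundary interactions with subject $0$, giving $|\sum_j v_j|=O(\sqrt{n\log n})$ and hence $|\sum_j v_j/v_{00}|=O(\sqrt{\log n/n})$. The residual piece $(V^{-1}-S)\mathbf{v}$ is bounded by $n\cdot \|V^{-1}-S\|_{\max}\cdot\|\mathbf{v}\|_\infty=O(\sqrt{\log n/n})$. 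Combining these yields $\eta_0=O(\sqrt{\log n/n})$.

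For ingredient (iii), the Lipschitz constant of $H'_\gamma$ in the $\ell_\infty$-induced operator norm is $O(1)$ because $|\mu''|\le 1/4$ and row sums of the second-derivative tensor are of order $n$ cancelled against the $1/n$ factor from $\|V^{-1}\|_\infty$. This makes the Kantorovich quantity $h=\eta\eta_0 K = O(\sqrt{\log n/n})=o(1)<1/2$ for large $n$, and the algorithm converges geometrically to the unique root in $B(\bs{\beta}^*,2\eta_0)$. The main technical obstacle, which I expect to absorb most of the work, is step (ii): obtaining the sharp $O(\sqrt{\log n/n})$ bound on the Newton increment rather than the naive $\|V^{-1}\|_\infty\cdot\|\mathbf{v}\|_\infty=O(\sqrt{n\log n})$ bound. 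This rests delicately on the Simons--Yao approximation, the Laplacian-type structure of $V$ (diagonal dominance with constant $v_{00}$), and the exact cancellation induced by the score constraint $\sum_i d_i=\mathrm{const}$.
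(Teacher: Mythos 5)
Your proposal follows essentially the same route as the paper's proof: Newton--Kantorovich at the initial point $\bs{\beta}^*$, the decomposition of $H_\gamma(\bs{\beta}^*)$ into the true-parameter score (controlled on $E_{n1}$) plus a mean-value term in $\bs{\gamma}$ of the same $O(\sqrt{n\log n})$ order, the Simons--Yao approximation $S$ of $V^{-1}$ with the identity $\sum_{i=0}^n H_{\gamma,i}(\bs{\beta})=0$ handling the $v_{00}^{-1}\sum_i v_i$ term, an $O(1)$ Lipschitz bound on $V^{-1}H'_\gamma$, and uniqueness from positive definiteness of $H'_\gamma$. The argument is correct and matches the paper's in all essential steps.
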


\begin{lemma}
\label{lemma-con-gamma}
Conditional on the events $E_{n1}$ in \eqref{def-En1} and $E_{n2}$ in \eqref{def-En2},
for any $\beta \in B(\beta^*, \epsilon_{n1})$ with $\epsilon_{n1}=O( (\log n)^{1/2}/n^{1/2})$, if \eqref{condition-design} holds,
then there exists a unique solution $\bs{\hat{\gamma}}_\beta$ to the equation $Q_\beta(\bs{\gamma})=0$ and it satisfies
\[
\| \bs{\hat{\gamma}}_\beta - \bs{\gamma}^* \|_2 =  O\left( \sqrt{\frac{\log n}{n}} \right)= o(1).
\]
\end{lemma}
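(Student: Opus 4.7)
The plan is to apply the Newton--Kantorovich theorem to the map $F(\bs{\gamma}) := Q_\beta(\bs{\gamma}) = Q(\bs{\beta}, \bs{\gamma})$ with initial point $\bs{\gamma}^*$; existence and the $\ell_2$ rate should fall out of Kantorovich's inequality, while uniqueness comes for free from strict convexity of $-\ell(\bs{\beta},\cdot)$ in $\bs{\gamma}$ (its Hessian is $\sum_{i<j,k} Z_{ijk} Z_{ijk}^\top \mu'(\pi_{ijk})$, positive definite for every $\bs{\gamma}$ under Condition \ref{condi-eigen}). I will verify the three Kantorovich inputs at $\bs{\gamma}^*$: a lower bound on $F'$, an upper bound on $\|F(\bs{\gamma}^*)\|_2$, and a Lipschitz constant for $F'$.

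For the inverse Jacobian, because $\bs{\beta}\in B(\bs{\beta}^*,\epsilon_{n1})$ and $\|\bs{\beta}^*\|_\infty$, $\|\bs{\gamma}^*\|_2$ are bounded (Condition \ref{condi-para}), the linear predictors $\pi_{ijk}$ remain in a fixed compact interval for $\bs{\gamma}$ in a bounded neighborhood of $\bs{\gamma}^*$, so $\mu'(\pi_{ijk})\ge c_\mu>0$ uniformly. Combined with Condition \ref{condi-eigen} this yields $\lambda_{\min}(F'(\bs{\gamma}))\ge c_\mu c_0 n^2$, giving $\|F'(\bs{\gamma})^{-1}\|_{\mathrm{op}}\le C_1 n^{-2}$. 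A parallel computation using $|\mu''|\le 1/4$ and $\|Z_{ijk}\|_2\le c_1$ produces the Lipschitz bound $\|F'(\bs{\gamma}_1)-F'(\bs{\gamma}_2)\|_{\mathrm{op}}\le K\|\bs{\gamma}_1-\bs{\gamma}_2\|_2$ with $K=O(n^2)$.

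To bound $\|F(\bs{\gamma}^*)\|_2$, I will split $F(\bs{\gamma}^*)=Q(\bs{\beta}^*,\bs{\gamma}^*)+[Q(\bs{\beta},\bs{\gamma}^*)-Q(\bs{\beta}^*,\bs{\gamma}^*)]$. Since $\sum_{i<j}m_{ij}=O(n^2)$ and $p$ is fixed, the event $E_{n2}$ gives $\|Q(\bs{\beta}^*,\bs{\gamma}^*)\|_2\le \sqrt{p}\,\|Q(\bs{\beta}^*,\bs{\gamma}^*)\|_\infty=O(n\sqrt{\log n})$. For the $\bs{\beta}$-perturbation, the mean value theorem together with $|\mu'|\le 1/4$ and $\|\bs{\beta}-\bs{\beta}^*\|_\infty\le\epsilon_{n1}$ bounds each coordinate by $c\sum_{i<j} m_{ij}\,\epsilon_{n1}=O(n^{3/2}\sqrt{\log n})$, so $\|F(\bs{\gamma}^*)\|_2=O(n^{3/2}\sqrt{\log n})$, a bound that ends up dominating.

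Assembling the pieces, with $\beta_0:=\|F'(\bs{\gamma}^*)^{-1}\|_{\mathrm{op}}$ and $\eta_0:=\beta_0\|F(\bs{\gamma}^*)\|_2$, the Kantorovich quantity $\beta_0 K\eta_0$ is $O(n^{-2})\cdot O(n^2)\cdot O(\sqrt{\log n/n})=o(1)$, safely below $1/2$ for large $n$. Newton--Kantorovich then produces $\bs{\hat{\gamma}}_\beta$ inside $B(\bs{\gamma}^*,2\eta_0)$ with $\|\bs{\hat{\gamma}}_\beta-\bs{\gamma}^*\|_2\le 2\eta_0=O(\sqrt{\log n/n})$, and the global uniqueness noted above closes the statement. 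I expect the main obstacle to be precisely the $\bs{\beta}$-perturbation estimate: its crude bound $O(n^{3/2}\sqrt{\log n})$ exceeds the stochastic term $O(n\sqrt{\log n})$, and only the $n^{-2}$ factor from $(F')^{-1}$ rescues the target rate---any looser handling would forfeit the Kantorovich inequality, while any tightening of $E_{n2}$ alone would not help.
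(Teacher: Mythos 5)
Your proof is correct, but it takes a genuinely different route from the paper's. The paper does not apply Newton--Kantorovich to the $\bs{\gamma}$-equation at all: it proves a more general statement (Lemma \ref{lemma-con-gamma-b}, allowing a diverging number of covariates) via the topological existence criterion of Ortega and Rheinboldt (Lemma \ref{lemma-root}), verifying that $(\bs{\gamma}-\bs{\gamma}^*)^\top Q_\beta(\bs{\gamma})>0$ on the sphere $\|\bs{\gamma}-\bs{\gamma}^*\|_2=\Delta\sqrt{p_n\log n/n}$ through the decomposition $I_1+I_{21}+I_{22}+I_3$, where the coercive quadratic term $I_{21}\gtrsim \Delta^2 np_n\log n$ (driven by \eqref{condition-design} and $\mu'$ bounded below) dominates the score term $I_1$, the Jacobian-perturbation term $I_{22}$, and the $\bs{\beta}$-perturbation term $I_3$. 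Your argument and the paper's rest on exactly the same three quantitative inputs --- the $c_0 n^2$ eigenvalue bound on the Fisher information for $\bs{\gamma}$, the event $E_{n2}$ controlling $Q(\bs{\beta}^*,\bs{\gamma}^*)$, and the mean-value bound $O(n^2\epsilon_{n1})$ on $Q(\bs{\beta},\bs{\gamma}^*)-Q(\bs{\beta}^*,\bs{\gamma}^*)$ --- and you correctly identify the $\bs{\beta}$-perturbation as the binding term in the rate, just as $I_3$ is the dominant competitor to $I_{21}$ in the paper. What your route buys is uniformity: it mirrors the Newton--Kantorovich treatment the paper already uses for $\widehat{\bs{\beta}}_\gamma$ in Lemma \ref{lemma-consistency-beta}, and for fixed $p$ it is arguably cleaner since all constants are dimension-free. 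What the paper's route buys is that the boundary argument extends transparently to $p_n\to\infty$, where tracking $p_n$-dependence in the operator-norm Lipschitz constant of $F'$ and in $\|F(\bs{\gamma}^*)\|_2$ under Kantorovich's hypotheses would be more delicate; this is presumably why the author switched methods between the two blocks of parameters. Your uniqueness argument (strict convexity of $-\ell(\bs{\beta},\cdot)$ in $\bs{\gamma}$ from positive definiteness of $\sum_{i<j}\sum_k Z_{ijk}Z_{ijk}^\top\mu'(\pi_{ijk})$) is also sound and matches the role that positive definiteness plays implicitly in the paper.
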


We are now ready to prove Theorem \ref{Theorem:con}.

\begin{proof}[Proof of Theorem \ref{Theorem:con}]

In what follows, the calculations are based on the events $E_{n1}$ in \eqref{def-En1} and $E_{n2}$ in \eqref{def-En2}.
We construct an iterative sequence $\{ (\bs{\beta}^{(k)}, \bs{\gamma}^{(k)})\}_{k=1}^K$ by alternately solving the equations
$H_\gamma( \bs{\beta} )=0$ in \eqref{eqn:def:F}  and $Q_\beta( \bs{\gamma} ) =0$ in \eqref{definition-Q} as follows.
Set the initial value for $\{ \bs{\gamma}^{(k)} \}_{k=1}^\infty$ to be $\bs{\gamma}^{(1)}=\bs{\gamma}^*$.
Let $\bs{\beta}^{(k)}$ be the solution to the equation
\[
H( \bs{\beta}, \bs{\gamma}^{(k)} ) =0,
\]
where $\bs{\gamma}^{(k)}$ is treated as a fixed variable.
Then, let $\bs{\gamma}^{(k+1)}$ be the solution to the equation
\[
Q( \bs{\beta}^{(k)}, \bs{\gamma} ) =0,
\]
where $\bs{\beta}^{(k)}$ is treated as a fixed variable.

Recall that $H_\gamma( \bs{\beta} )$ and $Q_\beta( \bs{\gamma} )$
are the functions of $H( \bs{\beta}, \bs{\gamma} )$ with $\bs{\gamma}$ being treated as a fixed variable
and $Q( \bs{\beta}, \bs{\gamma} )$ with $\bs{\beta}$ being treated as a fixed variable, respectively.
By Lemma \ref{lemma-consistency-beta} an Lemma \ref{lemma-con-gamma}, in each iterative step, $\bs{\beta}^{(k)}$ and $\bs{\gamma}^{(k+1)}$
are well defined and satisfy
\[
\| \bs{\widehat{\beta}}^{(k)} - \bs{\beta}^* \|_\infty =  O\left( \sqrt{\frac{\log n}{n}} \right), ~~
\| \bs{\widehat{\gamma}}^{(k)} - \bs{\gamma}^* \|_2 =  O\left( \sqrt{\frac{\log n}{n}} \right).
\]
Therefore, $\{ (\bs{\beta}^{(k)}, \bs{\gamma}^{(k)})\}_{k=1}^K$ must have a convergent subsequence, whose convergence point
is the MLE. By Lemma \ref{lemma-Q-upper-bound}, $\P(E_{n1})\to 1$ and $\P(E_{n2}) \to 1$. It completes the proof.

\end{proof}

\subsection{Proof of Theorem~\ref{theorem-central-b}}

Write $\widehat{\bs{\beta}}^*=\widehat{\bs{\beta}}_{\gamma^*}$,
$V=\partial H(\bs{\beta}^*, \bs{\gamma}^*) / \partial \bs{\beta}^\top $
and $V_{\gamma\beta} = \partial Q(\bs{\beta}^*, \bs{\gamma}^*)/ \partial \bs{\beta}^\top$.
To show Theorem~\ref{theorem-central-b}, we need two lemmas below.

\begin{lemma}\label{lemma:th4-b}
For any nonzero constant vector $\mathbf{c}=(c_1, \ldots c_p)^\top$, if $\mathbf{c}^\prime \Sigma \mathbf{c}$ goes to infinity,
then
$(\mathbf{c}^\top \Sigma \mathbf{c})^{-1/2}[  Q(\bs{\beta}^*, \bs{\gamma}^*) - V_{\gamma\beta} V^{-1} H(\bs{\beta}^*, \bs{\gamma}^*)]  $ converges in distribution to the standard normal distribution, where
$\Sigma := \Sigma(\bs{\beta}^*, \bs{\gamma}^*) $ is defined at \eqref{definition-sigma}.
\end{lemma}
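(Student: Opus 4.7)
The idea is to reduce the statement to Lyapunov's central limit theorem by writing $T:=\mathbf{c}^\top\bigl(Q(\bs{\beta}^*,\bs{\gamma}^*)-V_{\gamma\beta}V^{-1}H(\bs{\beta}^*,\bs{\gamma}^*)\bigr)$ as a single linear combination of the independent Bernoulli residuals $\epsilon_{ijk}:=a_{ijk}-\mu(\pi^*_{ijk})$, $0\le i<j\le n$, $k=1,\ldots,m_{ij}$, each of which is mean-zero with $\mathrm{Var}(\epsilon_{ijk})=\mu'(\pi^*_{ijk})$ and $|\epsilon_{ijk}|\le 1$. Let $\mathbf{u}:=V^{-1}V_{\beta\gamma}\mathbf{c}\in\R^n$ and extend it by $u_0:=0$. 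Using $\mathbf{c}^\top Q(\bs{\beta}^*,\bs{\gamma}^*)=-\sum_{i<j}\sum_k(\mathbf{c}^\top Z_{ijk})\epsilon_{ijk}$ and the identity $H_i(\bs{\beta}^*,\bs{\gamma}^*)=-\sum_{j>i}\sum_k\epsilon_{ijk}+\sum_{j<i}\sum_k\epsilon_{jik}$ (valid for $i\ge 1$), a direct bookkeeping yields
\[
T=-\sum_{i<j}\sum_k w_{ijk}\,\epsilon_{ijk},\qquad w_{ijk}:=\mathbf{c}^\top Z_{ijk}+(u_j-u_i),
\]
with the $i=0$ trials absorbed cleanly by the convention $u_0=0$.

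The second step is to check that the variance of this linear form is exactly $\mathbf{c}^\top\Sigma\mathbf{c}$. Expanding $w_{ijk}^2$ and weighting by $\mu'(\pi^*_{ijk})$ splits the variance into three pieces. The square of the $\mathbf{c}^\top Z_{ijk}$ term gives $\mathbf{c}^\top(\partial Q/\partial \bs{\gamma}^\top)\mathbf{c}$; the square of $(u_j-u_i)$ is a weighted graph-Laplacian quadratic form in the extended vector $\tilde{\mathbf{u}}=(0,\mathbf{u})$ and, because $\tilde u_0=0$, it reduces to $\mathbf{u}^\top V\mathbf{u}$; the cross term, after symmetrizing under $(i,j)\leftrightarrow(j,i)$ with the antisymmetry $Z_{ijk}=-Z_{jik}$ and the evenness of $\mu'$, collapses to $-2\mathbf{u}^\top V_{\beta\gamma}\mathbf{c}$. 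By definition of $\mathbf{u}$, $V\mathbf{u}=V_{\beta\gamma}\mathbf{c}$, so $\mathbf{u}^\top V\mathbf{u}=\mathbf{c}^\top V_{\gamma\beta}V^{-1}V_{\beta\gamma}\mathbf{c}$, and the three contributions telescope to $\mathbf{c}^\top\bigl(\partial Q/\partial \bs{\gamma}^\top - V_{\gamma\beta}V^{-1}V_{\beta\gamma}\bigr)\mathbf{c}=\mathbf{c}^\top\Sigma\mathbf{c}$.

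The third step is to verify the Lyapunov condition. One first bounds $\|\mathbf{u}\|_\infty$: since $V\in\mathcal{L}_n(b_0,b_1)$ with constants $b_0,b_1$, the Simons--Yao approximation \eqref{O-upperbound} together with $\|S\|_\infty=O(1/n)$ yields $\|V^{-1}\|_\infty=O(1/n)$; combined with $\|V_{\beta\gamma}\mathbf{c}\|_\infty=O(n)$ (each entry is a sum of $O(n)$ bounded terms in $\mu'_{ijk}\mathbf{c}^\top Z_{ijk}$), this gives $\|\mathbf{u}\|_\infty=O(1)$ and hence $\max_{i,j,k}|w_{ijk}|=O(1)$. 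Since $|\epsilon_{ijk}|\le 1$ implies $\E|\epsilon_{ijk}|^3\le \mu'(\pi^*_{ijk})$, the third absolute moment satisfies
\[
\sum_{i<j,k}\E|w_{ijk}\epsilon_{ijk}|^3\;\le\;\max_{i,j,k}|w_{ijk}|\cdot\sum_{i<j,k}w_{ijk}^2\,\mu'(\pi^*_{ijk})=O(\mathbf{c}^\top\Sigma\mathbf{c}),
\]
so the Lyapunov ratio is $O((\mathbf{c}^\top\Sigma\mathbf{c})^{-1/2})\to 0$ under the hypothesis $\mathbf{c}^\top\Sigma\mathbf{c}\to\infty$. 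Applying Lyapunov's CLT delivers $(\mathbf{c}^\top\Sigma\mathbf{c})^{-1/2}T\xrightarrow{d}N(0,1)$.

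The main obstacle is the algebraic identity in Step 2: the partialling-out combination $Q-V_{\gamma\beta}V^{-1}H$ is the unique linear adjustment to $Q$ whose coefficients on the independent $\epsilon_{ijk}$ produce the Schur-complement information, and one has to handle the baseline $i=0$ with care because $H$ is indexed only by $i\ge 1$ while the trials range over $0\le i<j\le n$. The bookkeeping is transparent once the convention $u_0=0$ is adopted, but it is the single place where the definition of $\Sigma$ is actually used.
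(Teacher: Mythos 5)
Your proposal is correct and follows essentially the same route as the paper: both decompose $Q(\bs{\beta}^*,\bs{\gamma}^*)-V_{\gamma\beta}V^{-1}H(\bs{\beta}^*,\bs{\gamma}^*)$ into a sum over the independent comparison residuals $a_{ijk}-\mu(\pi^*_{ijk})$ with coefficients $Z_{ijk}-V_{\gamma\beta}V^{-1}T_{ij}$ (your $w_{ijk}$ is exactly $\mathbf{c}^\top$ contracted with these), identify the variance of this linear form with $\mathbf{c}^\top\Sigma\mathbf{c}$, bound the coefficients uniformly via the Simons--Yao approximation $S$ of $V^{-1}$, and invoke a CLT for bounded independent summands (you cite Lyapunov, the paper cites the bounded-case CLT in Lo\`{e}ve). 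Your write-up is somewhat more explicit about the variance bookkeeping and the role of the baseline subject $0$, but there is no substantive difference in approach.
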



\begin{lemma}
\label{lemma-thereom2-gamma-a}
Under condition \eqref{condition-design},
$\widehat{\bs{\beta}}^*$ has the following asymptotic expansion:
\begin{equation}\label{result-lemma2}
\widehat{\bs{\beta}}^* - \bs{\beta}^* = - \left[ \frac{\partial H(\bs{\beta}^*, \bs{\gamma}^*)}{\partial \bs{\beta}^\top}\right]^{-1}  H(\bs{\beta}^*, \bs{\gamma}^*) + V^{-1}R,
\end{equation}
where $R$ is an $n$-dimensional column vector satisfying $\|V^{-1}R\|_\infty = O_p(\log n/n)$.
\end{lemma}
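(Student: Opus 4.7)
The plan is to use a second-order Taylor expansion of the score $H(\bs{\beta}, \bs{\gamma}^*)$ at $\widehat{\bs{\beta}}^*$ around $\bs{\beta}^*$. Since $\widehat{\bs{\beta}}^*$ satisfies $H(\widehat{\bs{\beta}}^*, \bs{\gamma}^*) = 0$, the expansion with integral form of the remainder yields
\[
0 = H(\bs{\beta}^*, \bs{\gamma}^*) + V\,h + R_0, \qquad h := \widehat{\bs{\beta}}^* - \bs{\beta}^*,
\]
where, setting $h_0 = 0$ (consistent with the identification $\beta_0 = 0$),
\[
R_{0,i} = \int_0^1 (1-t) \sum_{j\neq i} \sum_k \mu''\!\bigl(\pi_{ijk}^* + t(h_i - h_j)\bigr)(h_i - h_j)^2\,dt.
\]
Rearranging gives $h = -V^{-1}H(\bs{\beta}^*, \bs{\gamma}^*) + V^{-1}R$ with $R = -R_0$, which matches \eqref{result-lemma2}. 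The task reduces to establishing $\|V^{-1}R_0\|_\infty = O_p(\log n/n)$.

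From Lemma \ref{lemma-consistency-beta}, $\|h\|_\infty = O_p(\sqrt{\log n/n})$. Combined with $|\mu''| \le 1/4$ from \eqref{eq-mu-d-upper} and $m_{ij}\le m_*$, this gives the crude entry-wise bound $|R_{0,i}| \le \tfrac{1}{2} m_* n \|h\|_\infty^2 = O_p(\log n)$. To control $V^{-1}R_0$, I decompose $V^{-1}R_0 = SR_0 + (V^{-1} - S)R_0$, where $S$ is the Simons--Yao approximation in \eqref{definition-s}. By \eqref{O-upperbound}, $\|V^{-1}-S\|_\infty \le n\|V^{-1}-S\|_{\max} = O(n^{-1})$, so the second piece is bounded by $O(n^{-1})\cdot O(\log n) = O(\log n/n)$. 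For the first piece, $(SR_0)_i = R_{0,i}/v_{ii} + v_{00}^{-1}\sum_j R_{0,j}$; since $v_{ii}, v_{00} = \Theta(n)$ under the conditions of Theorem~\ref{Theorem:con}, the diagonal contribution $R_{0,i}/v_{ii}$ is immediately $O_p(\log n/n)$.

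The main obstacle is controlling the averaged term $v_{00}^{-1}\sum_{i=1}^n R_{0,i}$, since the naive bound $v_{00}^{-1}\cdot n\cdot O(\log n) = O(\log n)$ loses a factor of $n$. The saving comes from antisymmetry. Interchanging sum and integral, the inner double sum over $i\in\{1,\dots,n\}$, $j\in\{0,\dots,n\}\setminus\{i\}$ splits into $j=0$ and $j\ge 1$ contributions. For $j \ge 1$, the pair $(i,j)$ matches the pair $(j,i)$; using $\pi_{jik}^* = -\pi_{ijk}^*$ (from $Z_{jik} = -Z_{ijk}$), $\mu''(-x) = -\mu''(x)$, and $(h_j - h_i)^2 = (h_i - h_j)^2$, one checks that $\mu''(\pi_{jik}^* + t(h_j - h_i))(h_j - h_i)^2 = -\mu''(\pi_{ijk}^* + t(h_i - h_j))(h_i - h_j)^2$, so the two contributions cancel pointwise in $t\in[0,1]$. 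It is essential to use the integral form of the remainder here rather than a Lagrange form, since the latter would have intermediate points depending on the outer index $i$ and the cancellation would fail. Only the $j=0$ terms remain, for which $h_0 = 0$ gives $(h_i - h_0)^2 = h_i^2 \le \|h\|_\infty^2$, whence $\bigl|\sum_{i=1}^n R_{0,i}\bigr| \le \tfrac{1}{2} m_* n \|h\|_\infty^2 = O_p(\log n)$. Dividing by $v_{00} = \Theta(n)$ yields $v_{00}^{-1}\sum_i R_{0,i} = O_p(\log n/n)$, completing the required bound.
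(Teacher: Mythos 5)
Your proof is correct and follows the same overall route as the paper's: a second-order Taylor expansion of $H$ at $\bs{\beta}^*$, the decomposition $V^{-1}R = SR + (V^{-1}-S)R$ with the Simons--Yao bound \eqref{O-upperbound} handling the second piece, and the observation that the dangerous averaged term $v_{00}^{-1}\sum_{i=1}^n R_i$ collapses to a single subject's contribution of size $O_p(\log n)$. The one place you diverge is in how that collapse is justified: the paper deduces $\sum_{i=1}^n R_i = -R_0$ abstractly from the identity $\sum_{i=0}^n H_i(\bs{\beta},\bs{\gamma})\equiv 0$ (which forces the remainders to sum to zero whatever form they take, so the Lagrange form causes no trouble there), whereas you verify the same fact by hand via the pairwise antisymmetry $\mu''(\pi_{jik}^*+t(h_j-h_i))(h_j-h_i)^2 = -\mu''(\pi_{ijk}^*+t(h_i-h_j))(h_i-h_j)^2$, which is why you need the integral form of the remainder. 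Your version is a self-contained computation of the same cancellation; the paper's is shorter and insensitive to the choice of remainder form, but both are sound and yield the identical bound $\|V^{-1}R\|_\infty = O_p(\log n/n)$.
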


Now we give the proof of Theorem \ref{theorem-central-b}.

\begin{proof}[Proof of Theorem \ref{theorem-central-b}]
Recall that
$
Q_c(\bs{\gamma}) = \sum_{i<j} \sum_k Z_{ijk}( \mu( \widehat{\beta}_{\gamma,i} - \widehat{\beta}_{\gamma,j} + Z_{ijk}^\top \bs{\gamma} )- a_{ijk})
$.
A mean value expansion gives
\[
 Q_c(  \widehat{\bs{\gamma}} )
-  Q_c( \bs{\gamma}^*)
=   \frac{\partial Q_c(\bar{\bs{\gamma}}) }{ \partial \bs{\gamma}^\top } (\widehat{\bs{\gamma}}-\bs{\gamma}^*),
\]
where $\bar{\bs{\gamma}}=t\bs{\gamma}^*+(1-t)\widehat{\bs{\gamma}}$ for some $t\in (0, 1)$.
Since $Q_c(\widehat{\bs{\gamma}})=0$, we have
\[
\sqrt{N}(\widehat{\bs{\gamma}} - \bs{\gamma}^*) = -
\left[ \frac{1}{N}  \frac{\partial Q_c(\bar{\bs{\gamma}}) }{ \partial \bs{\gamma}^\top } \right]^{-1}
\times \frac{1}{\sqrt{N}}  Q_c(\bs{\gamma}^*) .
\]
Note that the dimension of $\bs{\gamma}$ is fixed. By Theorem \ref{Theorem:con} 
we have
\[
\frac{1}{N} \frac{\partial Q_c(\bar{\bs{\gamma}}) }{ \partial \bs{\gamma}^\top }
\stackrel{p}{\to } \bar{\Sigma} :=  \lim_{N\to\infty} \frac{1}{N}\Sigma(\bs{\beta}^*, \bs{\gamma}^*).
\]
Therefore,
\begin{equation}\label{eq:theorem4:aa}
\sqrt{N} (\widehat{\bs{\gamma}} - \bs{\gamma}^*) = - \bar{\Sigma}^{-1} \Big[ \frac{1}{\sqrt{N}} Q_c(\bs{\gamma}^*)\Big] + o_p(1).
\end{equation}
By applying a third order Taylor expansion to $Q_c(\bs{\gamma}^*)$, it yields
\begin{equation}\label{eq:gamma:asym:key}
\frac{1}{\sqrt{N}}  Q_c(\bs{\gamma}^*) = S_1 + S_2 + S_3,
\end{equation}
where
\begin{equation*}
\begin{array}{l}
S_1  =  \frac{1}{\sqrt{N}}  Q(\bs{\beta}^*, \bs{\gamma}^*)
+ \frac{1}{\sqrt{N}}
\Big[\frac{\partial Q(\bs{\beta}^*, \bs{\gamma}^*)}{\partial \bs{\beta}^\top } \Big]( \widehat{\bs{\beta}}^* - \bs{\beta}^* ), \\
S_2  =   \frac{1}{2\sqrt{N}} \sum_{k=1}^{n} \Big[( \widehat{\beta}_k^* - \beta_k^* )
\frac{\partial^2 Q(\bs{\beta}^*, \bs{\gamma}^*)}{ \partial \beta_k \partial \bs{\beta}^\top }
\times ( \widehat{\bs{\beta}}^* - \bs{\beta}^* ) \Big],  \\
S_3  =  \frac{1}{6\sqrt{N}} \sum_{k=1}^{n} \sum_{l=1}^{n} \{ (\widehat{\beta}_k^* - \beta_k^*)(\widehat{\beta}_l^* - \beta_l^*)
\Big[   \frac{ \partial^3 Q(\bar{\bs{\beta}}^*, \bs{\gamma}^*)}{ \partial \beta_k \partial \beta_l \partial \bs{\beta}^\top } \Big]
(\widehat{\bs{\beta}}^*  - \bs{\beta}^* )\},
\end{array}
\end{equation*}
and $\bar{\bs{\beta}}^*=t\bs{\beta}^*+(1-t)\widehat{\bs{\beta}}^*$ for some $t\in(0,1)$.
We shall show: (1) $S_1$ converges in distribution to a multivariate normal distribution;
(2) $S_2$ is the bias term; (3) $S_3$ is an asymptotically negligible remainder term. The proofs of the last two claims are
given in the supplementary material A. We state their results below. $S_2$ and $S_3$ have the following expression:
\begin{eqnarray}
\label{claim-B-star}
S_2   & = &  B_* + o_p(1),
\\
\label{claim-S3}
\|S_3\|_\infty  & = & O_p( \frac{ (\log n)^{3/2}}{n^{1/2} } ).
\end{eqnarray}

The claim for $S_1$ is as follows.
By Lemma \ref{lemma-thereom2-gamma-a}, we have
\begin{eqnarray*}
S_1 
& = & \frac{1}{\sqrt{N}} [  Q(\bs{\beta}^*, \bs{\gamma}^*)
- V_{\gamma\beta} V^{-1} H(\bs{\beta}^*, \bs{\gamma}^*)]   -\frac{1}{\sqrt{N}}  V_{\gamma\beta} V^{-1} R,
\end{eqnarray*}
where
\[
\|V^{-1} R\|_\infty = O_p( \frac{ \log n}{n} ).
\]
It is easy to verify $\|V_{\gamma\beta}\|_\infty =O(n)$. Therefore, we have
\[
\frac{1}{\sqrt{N}} \| Q^\prime_\beta V^{-1} R \|_\infty
\le \frac{1}{\sqrt{N}} \| V_{\gamma\beta}\|_\infty \| V^{-1} R \|_\infty
= O_p( \frac{  \log n}{n} ).
\]
Therefore, it shows that equation \eqref{eq:gamma:asym:key} is equal to
\begin{equation}\label{eq:proof:4-a}
\frac{1}{\sqrt{N}} Q_c( \bs{\gamma}^* )
= \frac{1}{\sqrt{N}} [  Q(\bs{\beta}^*, \bs{\gamma}^*) - V_{\gamma\beta} V^{-1} H(\bs{\beta}^*, \bs{\gamma}^*)] + B_* + O_p(\frac{  (\log n)^{3/2}}{n^{1/2} }).
\end{equation}
Substituting \eqref{eq:proof:4-a} into \eqref{eq:theorem4:aa} then gives
\[
\sqrt{N}(\widehat{\bs{\gamma}}- \bs{\gamma}^*) = \bar{\Sigma}^{-1}[  Q(\bs{\beta}^*, \bs{\gamma}^*) - V_{\gamma\beta} V^{-1} H(\bs{\beta}^*, \bs{\gamma}^*)] +\bar{\Sigma}^{-1} B_*  + o_p(1).
\]
Theorem \ref{Theorem-central-a} immediately comes from Lemma \ref{lemma:th4-b}.
It completes the proof.
\end{proof}

\subsection{Proofs for Theorem \ref{Theorem-central-a}}
It is easy to verify that $\partial H( \bs{\beta}^*, \bs{\gamma}^*)/\partial \bs{\beta} = \mathrm{Cov}( \mathbf{d} - \E \mathbf{d})$.
Let $V = (v_{ij}):= \mathrm{Cov}( \mathbf{d} - \E \mathbf{d})$.
Note that $d_i$ is a sum of $\sum_{j\neq i} m_{ij}$ independent Bernoulli random variables. 
By the central limit theorem in the bound case, as in \cite{Loeve:1977} (p. 289), if
$v_{ii}\to\infty$, then
$v_{ii}^{-1/2} \{d_i - \E(d_i)\}$ converges in distribution to the standard normal distribution.
When considering the asymptotic behaviors of the vector $(d_1, \ldots, d_r)$ with a fixed $r$, one could replace the degrees $d_1, \ldots, d_r$ by the independent random variables
$\tilde{d}_i=d_{i, r+1} + \ldots + d_{in}$, $i=1,\ldots,r$.
Therefore, we have the following proposition.

\begin{proposition}\label{lemma:central:poisson}
If $ \min\limits_{i=1,\ldots, r} v_{ii} \to \infty$, then as $n\to\infty$,
for any fixed $r\ge 1$,  the components of $(d_1 - \E (d_1), \ldots, d_r - \E (d_r))$ are
asymptotically independent and normally distributed with variances $v_{11}, \ldots, v_{rr}$,
respectively.
\end{proposition}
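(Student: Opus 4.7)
The plan is to decompose each $d_i$ into a part that is independent across $i=1,\ldots,r$ and a small residual, then apply a Lindeberg CLT to the independent parts and treat the residual by Slutsky. Concretely, write
\[
d_i=\tilde d_i+s_i,\qquad \tilde d_i=\sum_{j=r+1}^{n}\sum_{k=1}^{m_{ij}}a_{ijk},\qquad s_i=\sum_{\substack{j\le r\\ j\neq i}}\sum_{k=1}^{m_{ij}}a_{ijk}.
\]
Because each Bernoulli outcome $a_{ijk}=1-a_{jik}$ is attached to the unordered pair $\{i,j\}$, for distinct $i,i'\in\{1,\ldots,r\}$ and any $j,j'>r$ the pairs $\{i,j\}$ and $\{i',j'\}$ are disjoint. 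Hence $\tilde d_1,\ldots,\tilde d_r$ are functions of pairwise disjoint collections of independent Bernoullis, and are therefore mutually independent.

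Next I would control variances. Since $m_{ij}\le m_*$ and every summand has variance $\mu'(\pi_{ijk}^*)\le 1/4$ by \eqref{eq-mu-d-upper}, the at most $(r-1)m_*$ terms dropped in passing from $d_i$ to $\tilde d_i$ contribute an $O(1)$ amount to the variance, so $\mathrm{Var}(\tilde d_i)/v_{ii}\to 1$ under the assumption $v_{ii}\to\infty$, and likewise $\mathrm{Var}(s_i)=O(1)$, which gives $v_{ii}^{-1/2}(s_i-\E s_i)\to 0$ in probability by Chebyshev.

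For each fixed $i$, $\tilde d_i$ is a sum of uniformly bounded independent Bernoullis with variance tending to infinity, so Lindeberg's condition is trivially satisfied (for any $\varepsilon>0$ eventually every individual summand is smaller than $\varepsilon\sqrt{\mathrm{Var}(\tilde d_i)}$, making the truncated second moment vanish identically). Hence $v_{ii}^{-1/2}(\tilde d_i-\E\tilde d_i)\Rightarrow N(0,1)$. By the established mutual independence of $\tilde d_1,\ldots,\tilde d_r$, the joint vector $\bigl(v_{ii}^{-1/2}(\tilde d_i-\E\tilde d_i)\bigr)_{i=1}^r$ converges in distribution to a vector of i.i.d.\ standard normals. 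Adding back the negligible residuals $s_i-\E s_i$ via Slutsky yields the same limit for $\bigl(v_{ii}^{-1/2}(d_i-\E d_i)\bigr)_{i=1}^r$, and rescaling by $\sqrt{v_{ii}}$ gives exactly the claim that the components of $(d_1-\E d_1,\ldots,d_r-\E d_r)$ are asymptotically independent normals with variances $v_{11},\ldots,v_{rr}$.

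The only delicate point, rather than an obstacle, is the bookkeeping that establishes independence of $\tilde d_1,\ldots,\tilde d_r$: one must be careful to treat the Bernoullis as indexed by unordered pairs $\{i,j\}$ with comparison index $k$, since otherwise the identification $a_{ijk}=1-a_{jik}$ could be mistaken for a hidden dependence. Once this is recognized, the rest is a routine combination of Lindeberg's CLT with Slutsky, using the uniform boundedness $\mu'(\pi_{ijk}^*)\le 1/4$ from \eqref{eq-mu-d-upper} and the fact that $r$ is fixed while $v_{ii}\to\infty$.
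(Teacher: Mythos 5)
Your proposal is correct and follows essentially the same route as the paper: the paper likewise replaces $d_1,\ldots,d_r$ by the independent sums $\tilde d_i=d_{i,r+1}+\cdots+d_{in}$, applies the CLT for sums of bounded independent variables under $v_{ii}\to\infty$, and treats the finitely many dropped terms as negligible. Your write-up simply makes explicit the variance bookkeeping and the Slutsky step that the paper leaves implicit.
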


We now state the proof of Theorem \ref{Theorem-central-a}.

\begin{proof}[Proof of Theorem \ref{Theorem-central-a}]
Let $\widehat{\pi}_{ijk}=\widehat{\beta}_i-\widehat{\beta}_j+Z_{ijk}^\top \widehat{\bs{\gamma}}$ and $\pi_{ijk}^*=\beta_i^* - \beta_j^* + Z_{ijk}^\top \bs{\gamma}^*$.
To simplify notations, write  $\mu_{ij}^\prime = \sum_k \mu^\prime (\pi_{ijk}^*)$ and
\[
V_{\gamma\beta} =  \frac{ \partial H(\bs{\beta}^*, \bs{\gamma}^*)}{\partial \bs{\gamma}^\top}.
\]
By a second order Taylor expansion, we have
\begin{equation}
\label{equ-Taylor-exp}
\mu( \widehat{\pi}_{ijk} ) - \mu(\pi_{ijk}^*)
=  \mu_{ijk}^\prime (\widehat{\beta}_i-\beta_i^*)- \mu_{ijk}^\prime (\widehat{\beta}_j-\beta_j^*) +  \mu_{ijk}^\prime Z_{ijk}^\top ( \widehat{\bs{\gamma}} - \bs{\gamma}^*)
+ g_{ijk},
\end{equation}
where $g_{ijk}$ is the second order remainder term and its expression is given in the supplementary material A.
Let  $g_i=\sum_{j=0,j\neq i}^n \sum_{k=1}^{m_{ij}} g_{ijk}$ and $\mathbf{g}=(g_1, \ldots, g_{n})^\top$.
By \eqref{equ-Taylor-exp}, we have
\begin{equation*}\label{eq-d-formu}
\mathbf{d} - \E \mathbf{d} = V(\widehat{\bs{\beta}} - \bs{\beta}^*) + V_{\gamma\beta} (\widehat{\bs{\gamma}}-\bs{\gamma}^*) + \mathbf{g}.
\end{equation*}
Equivalently,
{\color{red}{
\begin{equation}
\label{expression-beta}
\widehat{\bs{\beta}} - \bs{\beta}^* = V^{-1}(  \mathbf{d} - \E \mathbf{d} ) - V^{-1}V_{\gamma\beta} (\widehat{\bs{\gamma}}-\bs{\gamma}^*) - V^{-1} \mathbf{g}.
\end{equation}
}}
We state the following claims
\begin{eqnarray}
\label{eqn-v-in-g}
\|V^{-1} \mathbf{g}\|_\infty & = & O_p(\frac{\log n}{n}),  \\
\label{eqn-V-V-gamma}
\| V^{-1} V_{\gamma \beta} ( \widehat{\bs{\gamma}} - \bs{\gamma}^*) \|_\infty & = & O_p\left( \frac{\log n}{n} \right),
\end{eqnarray}
whose proofs are in the supplementary material.
Then we have
\begin{equation}
\label{eq-beta-i-expansion}
\widehat{\beta}_i - \beta^*_i = V^{-1}(  \mathbf{d} - \E \mathbf{d} ) + o_p(n^{-1/2}).
\end{equation}

Write $H=H(\bs{\beta}^*, \bs{\gamma}^*)$ and $W=V^{-1}-S$.
By direct calculations, we have
\[
\mathrm{Cov}(WH) = W^\top \mathrm{Cov} (H) W = (V^{-1} -S ) V (V^{-1} -S ) = V^{-1} -S  + SVS - S,
\]
and
\[
(SVS-S)_{ij} = \frac{ v_{i0} }{v_{ii}v_{00}} + \frac{ v_{0j}}{v_{jj}v_{00} } - \frac{ ( 1-\delta_{ij})v_{ij}}{v_{ii}v_{jj}}.
\]
By \eqref{O-upperbound}, we have
\[
\max_{i,j} |(W^\top \mathrm{Cov} (H) W)_{ij}| = O( \frac{ 1}{n^2} ).
\]
Therefore, we have
\[
[W(\mathbf{d} - \E \mathbf{d})]_i = O_p( \frac{  \log n  }{ n} ).
\]
By \eqref{eq-beta-i-expansion},   we have
\[
\widehat{\beta}_i - \beta^*_i = [S(  \mathbf{d}- \E \mathbf{d} )]_i  + o_p(n^{-1/2}).
\]
Therefore, Theorem \ref{Theorem-central-a} immediately comes from Proposition \ref{lemma:central:poisson}. 
\end{proof}

\setlength{\itemsep}{-1.5pt}
\setlength{\bibsep}{0ex}
\bibliography{reference3}

\begin{thebibliography}{}

\bibitem[\protect\astroncite{Agresti}{2012}]{agresti2012}
Agresti, A. (2012).
\newblock {\em Categorical Data Analysis, 3rd Edition}.
\newblock Wiley, New York.

\bibitem[\protect\astroncite{{Bradley} and {Terry}}{1952}]{bradley1952rank}
{Bradley}, R.~A. and {Terry}, M.~E. (1952).
\newblock Rank analysis of incomplete block designs {I}. the method of paired
  comparisons.
\newblock {\em Biometrika}, 39:324--345.

\bibitem[\protect\astroncite{Brown}{1986}]{Brown-1986}
Brown, L.~D. (1986).
\newblock {\em Fundamentals of Statistical Exponential Families with
  Applications in Statistical Decision Theory (Vol. 9)}.
\newblock Lecture Notes-Monograph Series, Hayward, CA.

\bibitem[\protect\astroncite{{Chen} et~al.}{2016}]{chen2016overcoming}
{Chen}, B., {Escalera}, S., {Guyon}, I., {Ponce-L\'{o}pez}, V., {Shah}, N.~B.,
  and {Simon}, M.~O. (2016).
\newblock Overcoming calibration problems in pattern labeling with pairwise
  ratings: Application to personality traits.
\newblock In {\em European Conference on Computer Vision (ECCV 2016)
  Workshops}, volume 9915, pages 419--432.

\bibitem[\protect\astroncite{{Chen} et~al.}{2020}]{chen2020partial}
{Chen}, P., {Gao}, C., and {Zhang}, A.~Y. (2020).
\newblock Partial recovery for top-$k$ ranking: Optimality of mle and
  sub-optimality of spectral method.
\newblock {\em arXiv preprint arXiv:2006.16485}.

\bibitem[\protect\astroncite{{Chen} et~al.}{2021}]{chen2021optimal}
{Chen}, P., {Gao}, C., and {Zhang}, A.~Y. (2021).
\newblock Optimal full ranking from pairwise comparisons.
\newblock {\em arXiv preprint arXiv:2101.08421}.

\bibitem[\protect\astroncite{Chen et~al.}{2019}]{chen2019}
Chen, Y., Fan, J., Ma, C., and Wang, K. (2019).
\newblock Spectral method and regularized {MLE} are both optimal for top-$k$
  ranking.
\newblock {\em Ann. Statist.}, 47(4):2204--2235.

\bibitem[\protect\astroncite{Chernoff}{1952}]{chernoff1952}
Chernoff, H. (1952).
\newblock A measure of asymptotic efficiency for tests of a hypothesis based on
  the sum of observations.
\newblock {\em Ann. Math. Statist.}, 23(4):493--507.

\bibitem[\protect\astroncite{Cl\'{e}men\c{c}on and
  Vogel}{2020}]{clmenon2020multiclass}
Cl\'{e}men\c{c}on, S. and Vogel, R. (2020).
\newblock A multiclass classification approach to label ranking.
\newblock In {\em International Conference on Artificial Intelligence and
  Statistics}, pages 1421--1430.

\bibitem[\protect\astroncite{{Colonius}}{1980}]{colonius1980representation}
{Colonius}, H. (1980).
\newblock Representation and uniqueness of the {Bradley-Terry-Luce} model for
  pair comparisons.
\newblock {\em British Journal of Mathematical and Statistical Psychology},
  33(1):99--103.

\bibitem[\protect\astroncite{David}{1988}]{David1988}
David, H.~A. (1988).
\newblock {\em The Method of Paired Comparisons, 2nd ed.}
\newblock Oxford University Press, Oxford.

\bibitem[\protect\astroncite{{Davidson}}{1970}]{davidson1970on}
{Davidson}, R.~R. (1970).
\newblock On extending the {Bradley-Terry} model to accommodate ties in paired
  comparison experiments.
\newblock {\em Journal of the American Statistical Association},
  65(329):317--328.

\bibitem[\protect\astroncite{Dzemski}{2019}]{Dzemski2019}
Dzemski, A. (2019).
\newblock An empirical model of dyadic link formation in a network with
  unobserved heterogeneity.
\newblock {\em The Review of Economics and Statistics}, (To appear).

\bibitem[\protect\astroncite{Erd{\H{o}}s and
  R{\'e}nyi}{1960}]{erd6s1960evolution}
Erd{\H{o}}s, P. and R{\'e}nyi, A. (1960).
\newblock On the evolution of random graphs.
\newblock {\em Publ. Math. Inst. Hungar. Acad. Sci}, 5:17--61.

\bibitem[\protect\astroncite{Fan et~al.}{2024}]{fan2024uncertainty}
Fan, J., Hou, J., and Yu, M. (2024).
\newblock Uncertainty quantification of mle for entity ranking with covariates.
\newblock {\em Journal of Machine Learning Research}, 25(358):1--83.

\bibitem[\protect\astroncite{Fienberg and
  Rinaldo}{2007}]{Fienberg:Rinaldo:2007}
Fienberg, S.~E. and Rinaldo, A. (2007).
\newblock Three centuries of categorical data analysis: Log-linear models and
  maximum likelihood estimation.
\newblock {\em Journal of Statistical Planning and Inference}, 137:3430--3445.

\bibitem[\protect\astroncite{Fienberg and Rinaldo}{2012}]{ifenberg2012-Rinaldo}
Fienberg, S.~E. and Rinaldo, A. (2012).
\newblock Maximum likelihood estimation in log-linear models.
\newblock {\em Ann. Statist.}, 40(2):996--1023.

\bibitem[\protect\astroncite{Ford}{1957}]{Ford1957}
Ford, L.~R. (1957).
\newblock Solution of a ranking problem from binary comparisons.
\newblock {\em The American Mathematical Monthly}, 64(8):28--33.

\bibitem[\protect\astroncite{Golub and Van~Loan}{1996}]{Matrixnorm}
Golub, G.~H. and Van~Loan, C.~F. (1996).
\newblock {\em Matrix Computations. 3rd Edition}.
\newblock Johns Hopkins University Press, Baltimore.

\bibitem[\protect\astroncite{Graham}{2017}]{Graham2017}
Graham, B.~S. (2017).
\newblock An econometric model of network formation with degree heterogeneity.
\newblock {\em Econometrica}, 85(4):1033--1063.

\bibitem[\protect\astroncite{Haberman}{1977}]{haberman1977}
Haberman, S.~J. (1977).
\newblock Maximum likelihood estimates in exponential response models.
\newblock {\em Ann. Statist.}, 5(5):815--841.

\bibitem[\protect\astroncite{{Hajek} et~al.}{2014}]{hajek2014minimax}
{Hajek}, B., {Oh}, S., and {Xu}, J. (2014).
\newblock Minimax-optimal inference from partial rankings.
\newblock In {\em Advances in Neural Information Processing Systems 27},
  volume~2, pages 1475--1483.

\bibitem[\protect\astroncite{Han et~al.}{2020}]{Han-chen2020}
Han, R., Ye, R., Tan, C., and Chen, K. (2020).
\newblock Asymptotic theory of sparse bradley-terry model.
\newblock {\em Annals of Applied Probability}, To appear.

\bibitem[\protect\astroncite{{Hastie} and
  {Tibshirani}}{1998}]{hastie1998classification}
{Hastie}, T. and {Tibshirani}, R. (1998).
\newblock Classification by pairwise coupling.
\newblock {\em Annals of Statistics}, 26(2):451--471.

\bibitem[\protect\astroncite{He and Shao}{2000}]{HE2000120}
He, X. and Shao, Q.-M. (2000).
\newblock On parameters of increasing dimensions.
\newblock {\em Journal of Multivariate Analysis}, 73(1):120 -- 135.

\bibitem[\protect\astroncite{Hoeffding}{1963}]{Hoeffding:1963}
Hoeffding, W. (1963).
\newblock Probability inequalities for sums of bounded random variables.
\newblock {\em Journal of the American Statistical Association},
  58(301):13--30.

\bibitem[\protect\astroncite{{Huang} et~al.}{2006}]{huang2006generalized}
{Huang}, T.-K., {Weng}, R.~C., and {Lin}, C.-J. (2006).
\newblock Generalized {Bradley-Terry} models and multi-class probability
  estimates.
\newblock {\em Journal of Machine Learning Research}, 7:85--115.

\bibitem[\protect\astroncite{{Hunter}}{2003}]{hunter2003mm}
{Hunter}, D.~R. (2003).
\newblock {MM} algorithms for generalized {Bradley-Terry} models.
\newblock {\em Annals of Statistics}, 32(1):384--406.

\bibitem[\protect\astroncite{Kantorovich}{1948}]{Kantorovich1948Functional}
Kantorovich, L.~V. (1948).
\newblock Functional analysis and applied mathematics.
\newblock {\em Uspekhi Mat Nauk}, pages 89--185.

\bibitem[\protect\astroncite{Lang}{1993}]{Lang:1993}
Lang, S. (1993).
\newblock {\em Real and Functional Analysis}.
\newblock Springer-Verlag, New York.

\bibitem[\protect\astroncite{Liang and Du}{2012}]{liang2012}
Liang, H. and Du, P. (2012).
\newblock Maximum likelihood estimation in logistic regression models with a
  diverging number of covariates.
\newblock {\em Electron. J. Statist.}, 6:1838--1846.

\bibitem[\protect\astroncite{Lo\'eve}{1977}]{Loeve:1977}
Lo\'eve, M. (1977).
\newblock {\em Probability theory I. 4th ed.}
\newblock Springer, New York.

\bibitem[\protect\astroncite{Lounici}{2008}]{lounici2008sup-norm}
Lounici, K. (2008).
\newblock Sup-norm convergence rate and sign concentration property of lasso
  and dantzig estimators.
\newblock {\em Electron. J. Statist.}, 2:90--102.

\bibitem[\protect\astroncite{Luce}{1959}]{Luce-1959}
Luce, R.~D. (1959).
\newblock {\em Individual choice behavior: A theoretical analysis}.
\newblock John Wiley \& Sons, Inc., New York.

\bibitem[\protect\astroncite{{Masarotto} and {Varin}}{2012}]{masarotto2012the}
{Masarotto}, G. and {Varin}, C. (2012).
\newblock The ranking lasso and its application to sport tournaments.
\newblock {\em The Annals of Applied Statistics}, 6(4):1949--1970.

\bibitem[\protect\astroncite{Ortega and Rheinboldt}{1970}]{ortega1970iterative}
Ortega, J.~M. and Rheinboldt, W.~C. (1970).
\newblock {\em Iterative solution of nonlinear equations in several variables}.
\newblock Academic Press, San Diego.

\bibitem[\protect\astroncite{Portnoy}{1984}]{Portnoy1984aos}
Portnoy, S. (1984).
\newblock {Asymptotic behavior of $M$-estimators of $p$ regression parameters
  when $p^2/n$ is large. I. Consistency}.
\newblock {\em The Annals of Statistics}, 12(4):1298 -- 1309.

\bibitem[\protect\astroncite{Portnoy}{1988}]{portnoy1988}
Portnoy, S. (1988).
\newblock Asymptotic behavior of likelihood methods for exponential families
  when the number of parameters tends to infinity.
\newblock {\em Ann. Statist.}, 16(1):356--366.

\bibitem[\protect\astroncite{{Radlinski} and
  {Joachims}}{2007}]{radlinski2007active}
{Radlinski}, F. and {Joachims}, T. (2007).
\newblock Active exploration for learning rankings from clickthrough data.
\newblock In {\em Proceedings of the 13th ACM SIGKDD international conference
  on Knowledge discovery and data mining}, pages 570--579.

\bibitem[\protect\astroncite{{Rao} and {Kupper}}{1967}]{rao1967ties}
{Rao}, P.~V. and {Kupper}, L.~L. (1967).
\newblock Ties in paired-comparison experiments: A generalization of the
  {Bradley-Terry} model.
\newblock {\em Journal of the American Statistical Association},
  62(317):194--204.

\bibitem[\protect\astroncite{{Shah} et~al.}{2016}]{shah2016estimation}
{Shah}, N.~B., {Balakrishnan}, S., {Bradley}, J., {Parekh}, A., {Ramchandran},
  K., and {Wainwright}, M.~J. (2016).
\newblock Estimation from pairwise comparisons: sharp minimax bounds with
  topology dependence.
\newblock {\em Journal of Machine Learning Research}, 17(1):2049--2095.

\bibitem[\protect\astroncite{Simons and Yao}{1998}]{Simons1998Approximating}
Simons, G. and Yao, Y.~C. (1998).
\newblock Approximating the inverse of a symmetric positive definite matrix.
\newblock {\em Linear Algebra \& Its Applications}, 281(1):97--103.

\bibitem[\protect\astroncite{Simons and Yao}{1999}]{simons-yao1999}
Simons, G. and Yao, Y.-C. (1999).
\newblock Asymptotics when the number of parameters tends to infinity in the
  {Bradley-Terry} model for paired comparisons.
\newblock {\em The Annals of Statistics}, 27(3):1041--1060.

\bibitem[\protect\astroncite{Sire and Redner}{2008}]{Sire_2008baseball}
Sire, C. and Redner, S. (2008).
\newblock Understanding baseball team standings and streaks.
\newblock {\em The European Physical Journal B}, 67(3):473--481.

\bibitem[\protect\astroncite{{Stigler}}{1994}]{stigler1994citation}
{Stigler}, S.~M. (1994).
\newblock Citation patterns in the journals of statistics and probability.
\newblock {\em Statistical Science}, 9(1):94--108.

\bibitem[\protect\astroncite{{Thurstone}}{1927}]{thurstone1994a}
{Thurstone}, L.~L. (1927).
\newblock A law of comparative judgment.
\newblock {\em Psychological Review}, 34(4):273--286.

\bibitem[\protect\astroncite{Tropp}{2015}]{Tropp2015}
Tropp, J.~A. (2015).
\newblock {\em Found. Trends Mach. Learn.}, 8(1-2):1--23.

\bibitem[\protect\astroncite{{Varin} et~al.}{2016}]{Varin-2016-jrsa}
{Varin}, C., {Cattelan}, M., and {Firth}, D. (2016).
\newblock Statistical modelling of citation exchange between statistics
  journals.
\newblock {\em Journal of The Royal Statistical Society Series A-statistics in
  Society}, 179(1):1--63.

\bibitem[\protect\astroncite{Vershynin}{2012}]{vershynin_2012}
Vershynin, R. (2012).
\newblock {\em Introduction to the non-asymptotic analysis of random matrices},
  pages 210--268.
\newblock Cambridge University Press.

\bibitem[\protect\astroncite{{Vojnovic} et~al.}{2019}]{vojnovic2019convergence}
{Vojnovic}, M., {Yun}, S., and {Zhou}, K. (2019).
\newblock Convergence rates of gradient descent and {MM} algorithms for
  generalized {Bradley-Terry} models.
\newblock {\em arXiv preprint arXiv:1901.00150}.

\bibitem[\protect\astroncite{Wang}{2011a}]{wang2011}
Wang, L. (2011a).
\newblock {GEE} analysis of clustered binary data with diverging number of
  covariates.
\newblock {\em Ann. Statist.}, 39(1):389--417.

\bibitem[\protect\astroncite{Wang}{2011b}]{Wang-AOS2011}
Wang, L. (2011b).
\newblock {GEE analysis of clustered binary data with diverging number of
  covariates}.
\newblock {\em The Annals of Statistics}, 39(1):389 -- 417.

\bibitem[\protect\astroncite{Whelan and Wodon}{2020}]{Whelan-2020-Hockey}
Whelan, J.~T. and Wodon, A. (2020).
\newblock Prediction and evaluation in college hockey using the
  {Bradley-Terry-Zermelo} model.
\newblock {\em Mathematics for Application}, 8(2):131--149.

\bibitem[\protect\astroncite{Yamamoto}{1986}]{Yamamoto1986}
Yamamoto, T. (1986).
\newblock Error bounds for Newton's iterates derived from the Kantorovich theorem.
\newblock {\em Numer. Math.}, 48(1):91--98.

\bibitem[\protect\astroncite{Yan et~al.}{2019}]{Yan-Jiang-Fienberg-Leng2018}
Yan, T., Jiang, B., Fienberg, S.~E., and Leng, C. (2019).
\newblock Statistical inference in a directed network model with covariates.
\newblock {\em Journal of the American Statistical Association},
  114(526):857--868.

\bibitem[\protect\astroncite{{Yan} et~al.}{2012}]{yan2012sparse}
{Yan}, T., {Yang}, Y., and {Xu}, J. (2012).
\newblock Sparse paired comparisons in the bradley-terry model.
\newblock {\em Statistica Sinica}, 22(3):1305--1318.

\bibitem[\protect\astroncite{{Yin} et~al.}{2006}]{yin2006asymptotic}
{Yin}, C., {Zhao}, L., and {Wei}, C. (2006).
\newblock Asymptotic normality and strong consistency of maximum
  quasi-likelihood estimates in generalized linear models.
\newblock {\em Science China-mathematics}, 49(2):145--157.

\bibitem[\protect\astroncite{Zermelo}{1929}]{zermelo1929berechnung}
Zermelo, E. (1929).
\newblock Die berechnung der turnier-ergebnisse als ein maximumproblem der
  wahrscheinlichkeitsrechnung.
\newblock {\em Mathematische Zeitschrift}, 29(1):436--460.

\bibitem[\protect\astroncite{Zhou et~al.}{2021}]{ZHOU2021107154}
Zhou, P., Yu, Z., Ma, J., Tian, M., and Fan, Y. (2021).
\newblock Communication-efficient distributed estimator for generalized linear
  models with a diverging number of covariates.
\newblock {\em Computational Statistics \& Data Analysis}, 157:107154.

\end{thebibliography}
\bibliographystyle{apa}

\newpage
\title{Supplementary material A for ``Inference in a generalized Bradley-Terry model with
covariates and  a growing number of subjects"}
\vskip20pt

\renewcommand{\thesection}{\Alph{section}}
\setcounter{section}{0}

This supplementary material is organized as follows.
Section \ref{sec-lemma-theorem1} presents the proofs of supported lemmas for proving Theorem \ref{Theorem:con}.
Section \ref{sec-proof-theorem-2} presents the proofs of supported lemmas
and the proofs of the claims \eqref{claim-B-star} and \eqref{claim-S3} for proving Theorem \ref{sec-proof-theorem-2}.
Section \ref{section-theorem3} presents  proofs of claims \eqref{eqn-v-in-g} and
\eqref{eqn-V-V-gamma} for Theorem \ref{Theorem-central-a}.
In Section \ref{section-approximate-sigma}, we prove
\[
\frac{1}{n^2} \Sigma(\bs{\beta}, \bs{\gamma}^*) =\frac{1}{n^2} \Sigma(\bs{\beta}^*, \bs{\gamma}^*) + o(1).
\]
Section \ref{sec-proof-theorem4} presents the proof of Theorem \ref{Theorem-con-incr}.
All notation is as defined in the main text unless explicitly noted otherwise.
Equation and lemma numbering continues
in sequence with those established in the main text.

Recall that the probability distribution of $a_{ijk}$ conditional on the unobserved merit parameters and
observed covariates has the following form:
\begin{equation}
\label{model-b}
\P( a_{ijk}=1|Z_{ijk}, \beta_i, \beta_j, \bs{\gamma} ) =   \frac{ e^{ \pi_{ijk} } }{  1  + e^{\pi_{ijk}} },
\end{equation}
where $Z_{ijk}$ is a $p$-dimensional covariate associated with $k$th comparison between $i$ and $j$, $Z_{ijk}=-Z_{jik}$ and 
\begin{equation}\label{definition-pi}
\pi_{ijk}:= \beta_i - \beta_j + Z_{ijk}^\top \bs{\gamma}.
\end{equation}
Since the dependence of the expectation of $a_{ijk}$ on parameters is only
through $\pi_{ijk}$,
we write $\mu_{ijk}(\bs{\beta}, \bs{\gamma})$ $(=\mu(\pi_{ijk}))$ as the expectation of $a_{ijk}$ and
$\mu_{ij}(\bs{\beta}, \bs{\gamma}) = \sum_k \mu (\pi_{ijk})$, where $\mu(x)=e^x/(1+e^x)$.
When we emphasize the arguments $\bs{\beta}$ and $\bs{\gamma}$ in $\mu(\cdot)$, we write $\mu_{ijk}(\bs{\beta}, \bs{\gamma})$ instead of $\mu(\pi_{ijk})$.
We will use the notations $\mu_{ijk}(\bs{\beta}, \bs{\gamma})$ and $\mu(\pi_{ijk})$ interchangeably.

Recall that $\mu^\prime$, $\mu^{\prime\prime}$ and $\mu^{\prime\prime\prime}$ denote the first, second and third derivatives of $\mu(\pi)$ on $\pi$, respectively.
Let $\epsilon_{n1}$ and $\epsilon_{n2}$ be two small positive numbers that tends to zero with $n$.
When $\bs{\beta} \in B(\bs{\beta}^*, \epsilon_{n1}),
\bs{\gamma}\in B(\bs{\gamma}^*, \epsilon_{n2})$, there are four positive numbers $b_{0}, b_{1}, b_{2}, b_{3}$ such that
\begin{subequations}
\begin{gather}
\label{ineq-mu-keya}
b_{0}\le \min_{i,j,k}   \mu^\prime(\pi_{ijk}) \le \max_{i,j, k} \mu^\prime(\pi_{ijk}) \le b_{1},  \\
\label{ineq-mu-keyb}
\max_{i,j, k}|  \mu^{\prime\prime}(\pi_{ijk})| \le b_{2}, \\
\label{ineq-mu-keyc}
\max_{i,j, k}| \mu^{\prime\prime\prime}(\pi_{ijk})| \le b_{3},
\end{gather}
\end{subequations}
due to the assumption that $\bs{\beta}^*$ and $\bs{\gamma}^*$ lie in a compact set.
Recall that we define $\kappa$ by
\begin{equation}
\label{def-kappa-n}
\kappa : = \sup_{i,j,k} \| Z_{ijk} \|_2. 
\end{equation}

\section{Proofs of supported lemmas for Theorem \ref{Theorem:con}}
\label{sec-lemma-theorem1}

\subsection{Proof of Lemma \ref{lemma-Q-upper-bound}}
\label{sec-proof-H-Q-bould}

\begin{proof}[Proof of Lemma \ref{lemma-Q-upper-bound}]
We first prove \eqref{ineq-En1}.
Recall that $H_i(\bs{\beta}^*, \bs{\gamma}^*) = \E d_i - d_i$.
Because $d_i = \sum_{j\neq i}a_{ij}$ and $a_{ij}$ is a sum of $m_{ij}$ independent Bernoulli random variables,
$d_i$ is a sum of $m_i$ ($=\sum_{j\neq i}m_{ij}$) independent Bernoulli random variables.
By \citeauthor{Hoeffding:1963}'s (\citeyear{Hoeffding:1963}) inequality, we have
\[
\P\left(|d_i-\E d_i|\ge \sqrt{m_i\log m_i}\right)\le 2\exp{\{-\frac{2m_i\log m_i }{m_i}\}} = \frac{2}{m_i^2}.
\]
This, together with the union bound, gives
\begin{eqnarray*}
&&\P\left( \max\limits_{i=0, \ldots, n} |d_i-\E d_i|\ge \max_i \sqrt{m_i\log m_i }\right) \\
& = & \P \left( \bigcup_{i}\left\{ |d_i-\E d_i|\ge \sqrt{m_i\log m_i}  \right\}\right ) \\
&\le & \sum_{i=0}^{n}\P\left(|d_i-\E d_i|\ge \sqrt{m_i\log{m_i}}\right)\\
&\leq&\min_{i=0, \ldots, n} n\times\frac{2}{m_i^2},
\end{eqnarray*}
such that
\[
\P( E_{n1} ) \ge 1 - \min_{i=0, \ldots, n} n\times\frac{2}{m_i^2}=1 - O\left( \frac{1}{n} \right).
\]

Now we prove \eqref{ineq-En2}.
Recall that $Z_{ijs}=(z_{ijs,1}, \ldots, z_{ijs,p})$ and
\[
Q_k(\bs{\beta}^*, \bs{\gamma}^*) = \sum_{i< j}\sum_s z_{ijs,k}( \E a_{ijs}-a_{ijs})
\]
Because $\{a_{ijs}z_{ijs,k}\}_{i<j,s}$ are $m(=\sum_{i<j} m_{ij})$ independent random variables
and bounded above by $\kappa$ $(=\sup_{i,j,k}\|Z_{ijk}\|_2)$ uniformly,
applying \citeauthor{Hoeffding:1963}'s (\citeyear{Hoeffding:1963}) inequality, it yields
\[
\P\left(|Q_k(\bs{\beta}^*, \bs{\gamma}^*)|\ge \kappa\sqrt{8 m\log{m}}\right)\le 2\exp{\{-\frac{4\kappa^2 m\log{m}}{4m\kappa^2}\}} \le \frac{2}{m^2}.
\]
This, together with the union bound, gives
\begin{eqnarray*}
&&\P\left( \max\limits_{k=1, \ldots, p} |Q_k(\bs{\beta}^*, \bs{\gamma}^*)|\ge \sqrt{4\kappa m\log{m}}\right) \\
& \le  & \P\left( \bigcup_{k=1, \ldots, p}|Q_k(\bs{\beta}^*, \bs{\gamma}^*)|\ge \sqrt{4\kappa m\log{m}}\right ) \\
& \le  & \sum_{k=1}^{p}\P\left(|Q_k(\bs{\beta}^*, \bs{\gamma}^*)|\ge \sqrt{4\kappa m\log{m}}\right)\\
& \le  & \frac{2p}{m^2}.
\end{eqnarray*}
It completes the proof.
\end{proof}

\subsection{Proof of Lemma \ref{lemma-consistency-beta}}
\label{proof-lemma-hat-beta}

The $\ell_\infty$-error bound between $\widehat{\bs{\beta}}_\gamma$ and $\bs{\beta}^*$ is established via a geometric fast convergence rate for
the Newton iterative sequence under the  Kantorovich
 conditions [\cite{Kantorovich1948Functional}].
There are numerous convergence results on the Newton method.
We use the result in \cite{Yamamoto1986}, whose conditions are relatively easy to verify in our case.

\begin{lemma}[\cite{Yamamoto1986}]\label{lemma:Newton:Kantovorich}
Let $X$ and $Y$ be Banach spaces, $D$ be an open convex subset of $X$ and
$F:D \subseteq X \to Y$ be Fr\'{e}chet differentiable.
Assume that, at some $\mathbf{x}_0 \in D$, $F^\prime(\mathbf{x}_0)$ is invertible and that
\begin{eqnarray}
\label{eq-Kantorich-a}
\| F^\prime(\mathbf{x}_0)^{-1} ( F^\prime(\mathbf{x}) - F^\prime(\mathbf{y}))\| \le K\|\mathbf{x}-\mathbf{y}\|,~~ \mathbf{x}, \mathbf{y}\in D, \\
\label{eq-Kantorich-b}
\| F^\prime(\mathbf{x}_0)^{-1} F(\mathbf{x}_0) \| \le \eta,~~ h=K\eta \le 1/2, \\
\nonumber
\bar{S}(\mathbf{x}_0, t^*) \subseteq D,~~ t^*=2\eta/( 1+ \sqrt{ 1-2h}),
\end{eqnarray}
where $\| \cdot \|$ denotes a general norm on vectors.
Then:
(1) The Newton iterates $\mathbf{x}_{n+1} = \mathbf{x}_n - F^\prime (\mathbf{x}_n)^{-1} F(\mathbf{x}_n)$, $n\ge0$ are well-defined,
lie in $\bar{S}(\mathbf{x}_0, t^*)$ and converge to a solution $\mathbf{x}^*$ of $F(\mathbf{x})=0$. \\
(2) The solution $\mathbf{x}^*$ is unique in $S(\mathbf{x}_0, t^{**})\cap D$, $t^{**}=(1 + \sqrt{1-2h})/K$ if $2h<1$
and in $\bar{S}(\mathbf{x}_0, t^{**})$ if $2h=1$. \\
(3) $\| \mathbf{x}^* - \mathbf{x}_n \| \le t^*$ if $n=0$ and $\| \mathbf{x}^* - \mathbf{x}_n \| \le 2^{1-n} (2h)^{ 2^n -1 } \eta $ if $n\ge 1$.
\end{lemma}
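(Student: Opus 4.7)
The plan is to follow the classical majorant-function strategy due to Kantorovich, refined in the manner of Yamamoto. Introduce the scalar quadratic
\[
p(t) = \frac{K}{2} t^2 - t + \eta,
\]
whose nonnegative roots are exactly $t^*$ and $t^{**}$ (coinciding when $2h = 1$). Run scalar Newton iteration $t_{n+1} = t_n - p(t_n)/p'(t_n)$ with $t_0 = 0$; a direct computation shows $\{t_n\}$ is monotone increasing to $t^*$, and closed-form expressions for $t^* - t_n$ are available via the substitution $u_n = (t^* - t_n)/(t^{**} - t_n)$. The crux of the proof is the domination claim: $\|\mathbf{x}_{n+1} - \mathbf{x}_n\| \le t_{n+1} - t_n$ for every $n$, together with well-definedness of the operator iterates.

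I would establish this by strong induction on $n$. The base case is exactly \eqref{eq-Kantorich-b}. For the inductive step, I would first verify that $F'(\mathbf{x}_n)$ is invertible: by \eqref{eq-Kantorich-a},
\[
\|F'(\mathbf{x}_0)^{-1}(F'(\mathbf{x}_n) - F'(\mathbf{x}_0))\| \le K\|\mathbf{x}_n - \mathbf{x}_0\| \le K t_n < 1,
\]
since $K t^* = 1 - \sqrt{1-2h} < 1$; a Neumann series then gives invertibility and the bound $\|F'(\mathbf{x}_n)^{-1} F'(\mathbf{x}_0)\| \le (1-Kt_n)^{-1} = -1/p'(t_n)$. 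Next, the integral identity
\[
F(\mathbf{x}_{n+1}) = \int_0^1 \bigl[F'(\mathbf{x}_n + s(\mathbf{x}_{n+1}-\mathbf{x}_n)) - F'(\mathbf{x}_n)\bigr](\mathbf{x}_{n+1}-\mathbf{x}_n)\,ds,
\]
combined with \eqref{eq-Kantorich-a}, yields $\|F'(\mathbf{x}_0)^{-1} F(\mathbf{x}_{n+1})\| \le \tfrac{K}{2}\|\mathbf{x}_{n+1} - \mathbf{x}_n\|^2$, which mirrors the scalar identity $p(t_{n+1}) = \tfrac{K}{2}(t_{n+1}-t_n)^2$. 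Multiplying by the operator-norm bound on $F'(\mathbf{x}_{n+1})^{-1}F'(\mathbf{x}_0)$ closes the induction. Summability $\sum_n (t_{n+1}-t_n) = t^* < \infty$ then forces $\{\mathbf{x}_n\}$ to be Cauchy, convergent to some $\mathbf{x}^* \in \bar{S}(\mathbf{x}_0, t^*)$, and passing to the limit in $F(\mathbf{x}_n) = -F'(\mathbf{x}_n)(\mathbf{x}_{n+1} - \mathbf{x}_n)$ gives $F(\mathbf{x}^*) = 0$.

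For uniqueness on $S(\mathbf{x}_0, t^{**}) \cap D$, suppose $\mathbf{y}^* \ne \mathbf{x}^*$ is another zero; then
\[
0 = F(\mathbf{y}^*) - F(\mathbf{x}^*) = \Bigl(\int_0^1 F'(\mathbf{x}^* + s(\mathbf{y}^* - \mathbf{x}^*))\,ds\Bigr)(\mathbf{y}^* - \mathbf{x}^*),
\]
and the Lipschitz/Neumann argument together with $\|\mathbf{x}^* - \mathbf{x}_0\| + \|\mathbf{y}^* - \mathbf{x}_0\| < 2t^{**}$ (using $K t^{**} = 1 + \sqrt{1-2h}$) makes the averaged derivative invertible, forcing $\mathbf{y}^* = \mathbf{x}^*$. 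For the explicit error estimate, set $e_n := t^* - t_n$; the scalar Newton recursion admits the closed form $e_n = (t^{**} - t^*)\theta^{2^n}/(1 - (t^*/t^{**})\theta^{2^n})$ with $\theta = (1-\sqrt{1-2h})/(1+\sqrt{1-2h})$, and $\|\mathbf{x}^* - \mathbf{x}_n\| \le \sum_{k \ge n}(t_{k+1}-t_k) = e_n$. Using $\theta \le 2h$ and simplifying yields the stated rate $\|\mathbf{x}^* - \mathbf{x}_n\| \le 2^{1-n}(2h)^{2^n - 1}\eta$.

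The main obstacle is the algebraic sharpening in the last step: Kantorovich's original argument gives a slightly weaker constant, and Yamamoto's improvement requires carefully tracking both roots $t^*, t^{**}$ of $p$ simultaneously rather than only estimating $t^* - t_n$ geometrically. Everything else reduces to the Banach/Neumann series estimate on invertibility plus the mean-value-type integral identity for $F$, both of which are routine once the majorant framework is in place.
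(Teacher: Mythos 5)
This lemma is imported verbatim from \cite{Yamamoto1986} and used as a black box; the paper contains no proof of it, so there is no in-paper argument to compare yours against. Your majorant-function proof is the classical route (Kantorovich, Ortega--Rheinboldt, Gragg--Tapia, sharpened by Yamamoto), and the skeleton is sound: the scalar majorant $p(t)=\tfrac{K}{2}t^2-t+\eta$ with roots $t^*,t^{**}$, the induction giving $\|\mathbf{x}_{n+1}-\mathbf{x}_n\|\le t_{n+1}-t_n$ via the Neumann-series bound $\|F'(\mathbf{x}_n)^{-1}F'(\mathbf{x}_0)\|\le -1/p'(t_n)$ and the integral identity for $F(\mathbf{x}_{n+1})$, then summability of $\sum_n(t_{n+1}-t_n)=t^*$ to get a Cauchy sequence and a zero of $F$ in $\bar{S}(\mathbf{x}_0,t^*)$. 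The error estimate in part (3) via $\theta=(1-\sqrt{1-2h})/(1+\sqrt{1-2h})=t^*/t^{**}$ and $\theta\le 2h$ is also the standard derivation.

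One step as written does not close. In the uniqueness argument you invoke $\|\mathbf{x}^*-\mathbf{x}_0\|+\|\mathbf{y}^*-\mathbf{x}_0\|<2t^{**}$; since $Kt^{**}=1+\sqrt{1-2h}\ge 1$, the resulting bound on the Neumann-series quantity is $\tfrac{K}{2}\cdot 2t^{**}=Kt^{**}$, which is not less than $1$, so invertibility of the averaged derivative does not follow from that inequality. What you need (and what you in fact have available) is $\|\mathbf{x}^*-\mathbf{x}_0\|\le t^*$ from part (1) together with $\|\mathbf{y}^*-\mathbf{x}_0\|<t^{**}$ from the hypothesis, giving a sum strictly below $t^*+t^{**}=2/K$ and hence a Neumann bound strictly below $1$. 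Note also that the boundary case $2h=1$, where uniqueness is asserted on the closed ball $\bar{S}(\mathbf{x}_0,t^{**})$ and $t^*=t^{**}$, requires an additional limiting argument, since the sum of distances can then equal $2/K$ exactly. With these repairs your proposal reproduces the cited proof.
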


Before proving Lemma \ref{lemma-consistency-beta}, we show one lemma.
The following lemma shows that the Jacobian matrix $H^\prime_\gamma( \bs{\beta} )$ of $H_\gamma(\bs{\beta})$ is Lipschitz continuous.

\begin{lemma}\label{pro:lipschitz-c}
Let $D=B(\bs{\beta}^*, \epsilon_{n1}) (\subset \R^{n})$ be an open convex set containing the true point $\bs{\beta}^*$.
For any $\bs{\gamma}\in \mathbb{R}$, 
the following holds:
\[
\max_{i=0,\ldots, n} \| H^\prime_{\gamma,i}(\mathbf{x}) - H^\prime_{\gamma,i}(\mathbf{y}) \|_1 \le \max_{i=0,\ldots, n} m_i.
\]
\end{lemma}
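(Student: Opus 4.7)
The plan is to compute the Jacobian $H^\prime_\gamma(\bs{\beta})$ entry-wise, apply the uniform bound $|\mu^\prime| \le 1/4$ from \eqref{eq-mu-d-upper} to bound the $\ell_1$-norm of each row, and then conclude by the triangle inequality. The key observation is that because $\bs{\gamma}$ is held fixed and $\pi_{ijk}(\bs{\beta}) = \beta_i - \beta_j + Z_{ijk}^\top \bs{\gamma}$ depends on $\bs{\beta}$ only through the two coordinates $\beta_i$ and $\beta_j$, the Jacobian has an extremely sparse and structured form, and each row turns out to be a scalar multiple of a ``signed degree'' pattern that is uniformly bounded in $\bs{\beta}$.

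First I would write out the Jacobian explicitly. Differentiating $H_i(\bs{\beta},\bs{\gamma}) = \sum_{j\neq i}\sum_k \mu(\pi_{ijk}) - d_i$ with respect to $\beta_\ell$ via the chain rule gives
\[
\frac{\partial H_i}{\partial \beta_\ell}(\bs{\beta}) \;=\; \begin{cases} \sum_{j\neq i}\sum_{k=1}^{m_{ij}} \mu^\prime(\pi_{ijk}), & \ell = i, \\[2pt] -\sum_{k=1}^{m_{i\ell}} \mu^\prime(\pi_{i\ell k}), & \ell \neq i. \end{cases}
\]
Since $\mu^\prime(x) > 0$ everywhere, the diagonal entry equals the sum of the absolute values of the off-diagonal entries of the same row, so
\[
\|H^\prime_{\gamma,i}(\bs{\beta})\|_1 \;=\; 2\sum_{j\neq i}\sum_{k=1}^{m_{ij}} \mu^\prime(\pi_{ijk}) \;\le\; 2 \cdot \frac{1}{4}\sum_{j\neq i} m_{ij} \;=\; \frac{m_i}{2},
\]
where I used the uniform bound $\mu^\prime \le 1/4$ from \eqref{eq-mu-d-upper}. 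This bound is independent of both $\bs{\beta}$ and $\bs{\gamma}$.

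Finally, I would apply the triangle inequality to obtain
\[
\|H^\prime_{\gamma,i}(\mathbf{x}) - H^\prime_{\gamma,i}(\mathbf{y})\|_1 \;\le\; \|H^\prime_{\gamma,i}(\mathbf{x})\|_1 + \|H^\prime_{\gamma,i}(\mathbf{y})\|_1 \;\le\; \frac{m_i}{2} + \frac{m_i}{2} \;=\; m_i,
\]
valid for all $\mathbf{x},\mathbf{y}\in D$, and take the maximum over $i=0,\ldots,n$ to conclude.

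The main obstacle is not really inside the lemma; the proof above is essentially one line once the Jacobian structure is written down. The subtle point I expect to matter downstream is that the bound as stated is an \emph{absolute} bound (not scaled by $\|\mathbf{x}-\mathbf{y}\|_\infty$), so to feed it into the Kantorovich condition \eqref{eq-Kantorich-a} in the proof of Lemma \ref{lemma-consistency-beta} one must sharpen it: applying the mean value theorem entry-wise to $\mu^\prime(\pi_{ijk})$ and using $|\mu^{\prime\prime}|\le 1/4$ together with $|\partial\pi_{ijk}/\partial\beta_i| + |\partial\pi_{ijk}/\partial\beta_j| = 2$ gives the Lipschitz refinement $\|H^\prime_{\gamma,i}(\mathbf{x}) - H^\prime_{\gamma,i}(\mathbf{y})\|_1 \le m_i \|\mathbf{x}-\mathbf{y}\|_\infty$, which is what is ultimately needed to verify the Newton-Kantorovich hypotheses. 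The argument above is the core estimate either way, with only the final triangle-inequality step replaced by the mean value theorem.
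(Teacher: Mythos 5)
Your proof is correct, and it takes a genuinely different route from the paper for the statement as literally printed. You bound each row of the Jacobian directly: since $\mu^\prime>0$, the diagonal entry of row $i$ equals the sum of the absolute values of its off-diagonal entries, so $\|H^\prime_{\gamma,i}(\bs{\beta})\|_1\le 2\cdot\tfrac14\, m_i=m_i/2$ uniformly, and the triangle inequality gives the absolute bound $m_i$. The paper instead proves the \emph{Lipschitz} version: it computes the second derivatives $\partial^2 H_i/\partial\beta_k\partial\beta_\ell$, bounds them via $|\mu^{\prime\prime}|\le 1/4$, and applies the mean value theorem for vector-valued functions to obtain $\|H^\prime_{\gamma,i}(\mathbf{x})-H^\prime_{\gamma,i}(\mathbf{y})\|_1\le m_i\|\mathbf{x}-\mathbf{y}\|_\infty$ — i.e., the displayed inequality in the lemma appears to have dropped a factor of $\|\mathbf{x}-\mathbf{y}\|_\infty$, and the paper's proof (and its downstream use in verifying the Newton--Kantorovich condition \eqref{eq-Kantorich-a}, where a Lipschitz constant $K=O(1)$ is required) makes clear that the Lipschitz form is the intended statement. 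Your closing remark identifies exactly this issue, and the refinement you sketch — mean value theorem on $\mu^\prime(\pi_{ijk})$ with $|\mu^{\prime\prime}|\le 1/4$ and $|\partial\pi_{ijk}/\partial\beta_i|+|\partial\pi_{ijk}/\partial\beta_j|=2$, yielding $\tfrac12 m_i\|\mathbf{x}-\mathbf{y}\|_\infty$ from the diagonal and $\tfrac12\sum_{\ell\neq i}m_{i\ell}\|\mathbf{x}-\mathbf{y}\|_\infty$ from the off-diagonals — is precisely the paper's argument. So: your argument proves what is written, the paper proves what is needed, and your final paragraph bridges the two correctly; just be aware that the absolute bound alone would not suffice for Lemma \ref{lemma-consistency-beta}, so in any write-up you should state and prove the Lipschitz form.
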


\begin{proof}[Proof of Lemma \ref{pro:lipschitz-c}]
Recall that
\[
H_i(\bs{\beta}, \bs{\gamma}) =  \sum_{j\neq i} \sum_k \mu(\beta_i - \beta_j + Z_{ijk}^\top \bs{\gamma} ) - d_i, ~~ i=0, \ldots, n.
\]
and $H_{\gamma,i}(\bs{\beta})$ is the version of $H_i(\bs{\beta}, \bs{\gamma})$ by treating $\bs{\gamma}$ as a fixed parameter.
The Jacobian matrix $H^\prime_{\gamma,i}(\bs{\beta})$ of $H_{\gamma,i}(\bs{\beta})$ can be calculated as follows.
By finding the partial derivative of $H_i(\bs{\beta})$ with respect to $\bs{\beta}$ for $i\neq j$, we have
\[
\frac{\partial H_i(\bs{\beta}, \bs{\gamma}) }{ \partial \beta_j} = - \sum_k \mu^\prime (\pi_{ijk}), ~~
\frac{ \partial H_i(\bs{\beta}, \bs{\gamma})}{ \partial \beta_i} =  \sum_{j\neq i} \sum_k \mu^\prime (\pi_{ijk}),
\]
\[
\frac{\partial^2 H_i(\bs{\beta}, \bs{\gamma}) }{ \partial \beta_i \partial \beta_j} = - \sum_k \mu^{\prime\prime} (\pi_{ijk}),~~
\frac{ \partial^2 H_i(\bs{\beta}, \bs{\gamma})}{\partial \beta_i^2} =  \sum_{j\neq i} \sum_k \mu^{\prime\prime} (\pi_{ijk}).
\]
Recall that in \eqref{eq-mu-d-upper}, we show that for any $x\in \R$,
\[
|\mu^{\prime\prime}(x)| \le \frac{1}{4}.
\]
Let
\[
\mathbf{g}_{ij}(\bs{\beta})=(\frac{\partial^2 H_i(\bs{\beta}, \bs{\gamma}) }{ \partial \beta_1 \partial \beta_j}, \ldots,
\frac{\partial^2 H_i(\bs{\beta}, \bs{\gamma}) }{ \partial \beta_n \partial \beta_j})^\top.
\]
Therefore,
\begin{equation}
\label{inequ:second:deri}
|\frac{\partial^2 H_i(\bs{\beta}, \bs{\gamma}) }{\partial \beta_i^2} |\le \frac{1}{4} \sum_{j\neq i} \sum_k m_{ijk}, \quad
|\frac{\partial^2 H_i(\bs{\beta}, \bs{\gamma}) }{\partial \beta_j\partial \beta_i}| \le  \frac{1}{4} m_{ijk}.
\end{equation}
It leads to that
\begin{equation}\label{ineq-gii-upper}
\|\mathbf{g}_{ii}(\bs{\beta})\|_1 \le   \frac{1}{2} \sum_{j\neq i} m_{ij}.
\end{equation}
Note that when $i\neq j$ and $k\neq i, j$,
\[
\frac{\partial^2 H_i(\bs{\beta}, \bs{\gamma}) }{ \partial \beta_k \partial \beta_j} =0.
\]
Therefore, for $j\neq i$, we have
\begin{equation}\label{ineq-gij-upper}
\|\mathbf{g}_{ij}(\bs{\beta})\|_1 \le \frac{1}{2}m_{ij}.
\end{equation}
For two vectors $\mathbf{x}, \mathbf{y}\subset D$,
by the mean value theorem for vector-valued functions \citep[][p.341]{Lang:1993}, we have
\[
H^\prime_{\gamma,i}(\mathbf{x}) - H^\prime_{\gamma,i}(\mathbf{y}) =
\left(\int_0^1 \frac{\partial H_{\gamma,i}(\bs{\beta}) }{ \partial \bs{\beta} \partial \bs{\beta}^\top }\Big|_{\bs{\beta}=t\mathbf{x}+(1-t)\mathbf{y}} \right) (\mathbf{x}-\mathbf{y}),
\]
for some $t\in(0,1)$. Therefore, in view of \eqref{ineq-gii-upper} and \eqref{ineq-gij-upper}, we have
\begin{eqnarray*}
 &&\max_{i=0,\ldots, n} \| H^\prime_{\gamma,i}(\mathbf{x}) - H^\prime_{\gamma,i}(\mathbf{y}) \|_1 \\
 & \le & \max_{i=0, \ldots, n} ( \|\mathbf{g}_{ii}(\bs{\beta})\|_1 + \sum_{j=0,j\neq i}^n  \|\mathbf{g}_{ij}(\bs{\beta})\|_1 ) \times
 \|\mathbf{x}-\mathbf{y}\|_\infty \\
 & \le & (\max_i m_i) \times \|\mathbf{x}-\mathbf{y}\|_\infty.
\end{eqnarray*}
It completes the proof.
\end{proof}

We are now ready to prove Lemma \ref{lemma-consistency-beta}.

\begin{proof}[Lemma \ref{lemma-consistency-beta}]
Note that $\widehat{\bs{\beta}}_{\gamma}$ is the solution to
the equation $H_{\gamma}(\bs{\beta})$=0. We prove this lemma via constructing a Newton iterative sequence:
\[
\bs{\beta}_{\gamma}^{(k+1)}=
\bs{\beta}_{\gamma}^{(k)} - H^\prime_{\gamma}(\bs{\beta}_{\gamma}^{(k)}) H_\gamma(\bs{\beta}_{\gamma}^{(k)}).
\]
In the Newton iterative step, we set the true parameter vector $\bs{\beta}^*$
as the starting point $\bs{\beta}^{(0)}:=\bs{\beta}^*$.
Note that $H^\prime_{\gamma}(\bs{\beta}^*)\in \mathcal{L}_n(b_{0}, b_{1})$
when $\bs{\beta}\in B(\bs{\beta}^*, \epsilon_{n1})$ and $\bs{\gamma} \in B(\bs{\gamma}^*, \epsilon_{n2})$.
Here, $b_0$ and $b_1$ are two positive constants.
The event $E_{n1}$ implies
\begin{equation}
\label{eq-d-max}
 \max_i | d_i - \E d_i | = O( (n\log n)^{1/2} ),
\end{equation}
and the following calculations are conditional on $E_{n1}$.

To apply Lemma \ref{lemma:Newton:Kantovorich}, we choose the convex set $D = B(\bs{\beta}^*, \epsilon_{n1})$.
We first verify condition \eqref{eq-Kantorich-a} in Lemma 2. 
Let $V=(v_{ij})=  H^\prime_{\gamma}(\bs{\beta}^*)$.
We use $S$ defined in \eqref{definition-s} to approximate the inverse of $V$
and let $W=V^{-1} -S$.
By \eqref{O-upperbound}, we have
\[
\| W \|_\infty \le \frac{ b_1^3 }{ (\min_i m_i)^2 b_0^3 } \times n = O\left( \frac{1}{n} \right).
\]
It follows from Lemma \ref{pro:lipschitz-c} that 
\begin{eqnarray*}
&& \|V^{-1}[H_\gamma^\prime(\mathbf{x})-H_\gamma^\prime(\mathbf{y})]\|_\infty \\
& \le & \|S[H_\gamma^\prime(\mathbf{x})-H_\gamma^\prime(\mathbf{y})]\|_\infty + \| W[H_\gamma^\prime(\mathbf{x})-H_\gamma^\prime(\mathbf{y})] \|_\infty \\
& \le & \left( \max_{i=1,\ldots, n} \frac{1 }{v_{ii}}\|H_{\gamma,i}^\prime(\mathbf{x})-H_{\gamma,i}^\prime(\mathbf{y})\|_1
+ \frac{1}{v_{00}}\|H_{\gamma,0}^\prime(\mathbf{x})- H_{\gamma,0}^\prime(\mathbf{y})\|_1 \right) \\
&& ~~ + \|W\|_\infty \|H_\gamma^\prime(\mathbf{x})-H_\gamma^\prime(\mathbf{y}) \|_\infty \\
& = & O(\frac{1}{n}) \cdot O(n) \|\mathbf{x}-\mathbf{y}\|_\infty = O(1)\|\mathbf{x}-\mathbf{y}\|_\infty.
\end{eqnarray*}
where the second inequality is due to $\sum_{i=0}^n H_{\gamma,i}(\bs{\beta}) =0$, which implies
\[
\sum_{i=1}^n H_{\gamma,i}^\prime (\bs{\beta}) = - H_{\gamma, 0}^\prime(\bs{\beta}).
\]
It follows that we can set $K= O(1)$ in condition \eqref{eq-Kantorich-a}.

Next, we verify \eqref{eq-Kantorich-b}.
Note that the dimension $p$ of $\bs{\gamma}$  is a fixed constant and
\[
|\frac{\partial H_i(\bs{\beta}, \bs{\gamma}) }{ \partial \gamma_k }| = |- \sum_{j\neq i} \sum_\ell Z_{ij\ell, k}\mu^{\prime}( \pi_{ijk} )|
\le \frac{1}{4}p \kappa m_{\max},
\]
where $m_{\max}:=\max_{i=0,\ldots,n} m_i$ and $\kappa = \max_{i,j,k}\| Z_{ijk} \|_2$.
Recall that we assume $\kappa=O(1)$.
If $\bs{\gamma}\in B(\bs{\gamma}^*, \epsilon_{n2})$ with $\epsilon_{n2}=O( (\log n)^{1/2}/n^{1/2})$, then we have
\begin{eqnarray*}
&&\max_{i=1, \ldots, n} | H_{\gamma,i}( \bs{\beta}^*)| \\
& \le & \max_{i=1, \ldots, n} | H_i(\bs{\beta}^*, \bs{\gamma}^*) |
+ \max_{i=1, \ldots, n} | H_i(\bs{\beta}^*,\bs{\gamma})- H_i(\bs{\beta}^*,\bs{\gamma}^*)|\\
& \le  & O( \sqrt{n\log n} ) +
\max_i | \frac{\partial H_i(\bs{\beta}^*, \bar{\bs{\gamma}}) }{ \partial \bs{\gamma}^\top } (\bs{\gamma}^* - \bs{\gamma}) | \\
& \le &  O( \sqrt{n\log n} ) +
\max_i \left( \sum_{j\neq i} \sum_k |\mu^\prime(\beta_i^*-\beta_j^* + Z_{ijk}^\top \bs{\tilde{\gamma}} )| Z_{ijk}^\top(\bs{\gamma}^* - \bs{\gamma}) | \right)
\\
& \le & O( \sqrt{n\log n} ) + (\max_i m_i) p \kappa \| \bs{\gamma}^* - \bs{\gamma} \|_\infty \\
& = & O( \sqrt{n\log n} ) + O(  p\kappa\epsilon_{n2} \sqrt{n/\log n} )\cdot O(\sqrt{n\log n}) \\
& = & O\left( \kappa(n\log n)^{1/2} \right),
\end{eqnarray*}
where $\bar{\bs{\gamma}}$ lies between $\bs{\gamma}$ and $\bs{\gamma}^*$. The above second inequality is due to \eqref{eq-d-max} and
the mean value theorem.
Since $\sum_{i=1}^n H_{\gamma, i}( \bs{\beta} )=0$, we have
\[
\sum_{i=1}^{n} H_{\gamma, i}(\bs{\beta})= - H_{\gamma,0}(\bs{\beta}).
\]
Repeatedly utilizing \eqref{O-upperbound}, we have
\begin{eqnarray*}
\eta &=&\| [H'_\gamma( \bs{\beta}^*)]^{-1}H_\gamma( \bs{\beta}^* ) \|_\infty \\
& \le &
n\|V^{-1} - S\|_{\max} \|H_\gamma( \bs{\beta}^* )\|_\infty + \max_{i=1,\ldots, n}\frac{|H_{\gamma,i}( \bs{\beta}^*)|}{v_{ii}}
 + \frac{|H_{\gamma,0}(\bs{\beta}^*)|}{v_{00}}
\\
& \le & \left[ O(\frac{ 1}{n}) + O(\frac{ 1}{n}) \right] \times O\big( \kappa(n\log n)^{1/2} \big)  \\
& = & O\left( \kappa \sqrt{\frac{\log n}{n}} \right).
\end{eqnarray*}
The above arguments verify the conditions in Lemma \ref{lemma:Newton:Kantovorich}.
By Lemma \ref{lemma:Newton:Kantovorich}, $\lim_k \bs{\beta}_\gamma^{(k)}$ exists, denoted by $\widehat{\bs{\beta}}_\gamma$, and it satisfies
\[
\| \widehat{\bs{\beta}}_\gamma - \bs{\beta}^* \|_\infty = O\left( \kappa \sqrt{\frac{\log n}{n}} \right).
\]
Further, if $\widehat{\bs{\beta}}_\gamma$ exists, it is unique. This is due to that $H^\prime_\gamma$ is positively definite.
It completes the proof.
\end{proof}

\subsection{Proof of Lemma \ref{lemma-con-gamma}}  
\label{sec-proof-con-gamma}

With some abuse of notations, we write the dimension $p$ of the covariates as $p_n$, letting it depend on $n$ in this section.
For a nonlinear equation, \cite{ortega1970iterative} gives a simple sufficient condition to guarantee
the existence of the solution, stated below.

\begin{lemma}[Theorem 6.3.4 in \cite{ortega1970iterative}]
\label{lemma-root}
Let $C$ be an open, bounded set in $\R^n$, $\dot{C}$ be the boundary of the set $C$ and $\bar{S}$ be the closure of the set $C$.
Assume that $F: \bar{C}  \subset \R^n \to \R^n$ is continuous and satisfies
$( x - x^0 )^\top F(x) \ge 0 $ for some $x^0 \in C$ and all $x \in C^0 $. Then $F(x)=0$ has a solution in $\bar{C}$.
\end{lemma}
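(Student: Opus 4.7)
My plan is to establish this classical root-existence result using Brouwer's topological degree theory, exploiting the inequality $(x - x^0)^\top F(x) \ge 0$ as a boundary-behaviour control on a homotopy between $F$ and the affine map $x \mapsto x - x^0$.

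First I would reduce to the case $x^0 = 0$ by the translation $y = x - x^0$, which preserves all hypotheses. Since $F$ is continuous on the compact set $\bar{C}$ and the inequality holds on the dense subset $C^0$ (which for open $C$ coincides with $C$ itself), it extends by continuity to all of $\bar{C}$, and in particular to the boundary $\dot{C}$. Next I would dispose of a trivial case: if $F$ already vanishes somewhere on $\dot{C}$, that point lies in $\bar{C}$ and we are done; so I may assume $F(x) \ne 0$ for every $x \in \dot{C}$.

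The heart of the argument is the straight-line homotopy $H(x, t) = t F(x) + (1-t)(x - x^0)$ on $\bar{C} \times [0, 1]$. I need $H(\cdot, t)$ to be nonzero on $\dot{C}$ for every $t \in [0, 1]$. The decisive computation is
\[
(x - x^0)^\top H(x, t) = (1 - t)\|x - x^0\|^2 + t\,(x - x^0)^\top F(x),
\]
in which both summands are nonnegative. The first is strictly positive when $t < 1$, because $x \in \dot{C}$ together with $x^0 \in C$ (interior) forces $x \ne x^0$; and nonvanishing of $H(\cdot, 1) = F$ on $\dot{C}$ is exactly the reduction already made. Homotopy invariance of the Brouwer degree then yields
\[
\deg(F, C, 0) = \deg(x - x^0, C, 0) = 1,
\]
and nonvanishing of the degree forces $F$ to possess at least one zero in $C \subseteq \bar{C}$, completing the proof.

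The only real obstacle is justifying the invocation of Brouwer degree theory, and more concretely the uniform boundary nonvanishing of the homotopy in $t$, which is precisely what the hypothesis $(x - x^0)^\top F(x) \ge 0$ is engineered to provide. A route that avoids degree theory would apply Brouwer's fixed point theorem directly to a map of the form $g(x) = x - \lambda F(x)$ composed with a radial retraction onto $\bar{C}$ for small $\lambda > 0$, using the inequality to show $g$ has no boundary-pointing component; but the homotopy/degree formulation is substantially cleaner and sidesteps ad hoc projection estimates.
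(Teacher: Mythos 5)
Your proof is correct, but there is nothing in the paper to compare it against: the paper states this result purely as a citation (Theorem 6.3.4 of \cite{ortega1970iterative}) and offers no proof of its own. Your degree-theoretic argument is sound — the translation to $x^0=0$, the reduction to the case $F\neq 0$ on $\dot{C}$, the boundary estimate $(x-x^0)^\top H(x,t)=(1-t)\|x-x^0\|^2+t(x-x^0)^\top F(x)>0$ for $t<1$ (using $x\neq x^0$ since $x^0$ lies in the open set $C$ while $x\in\dot{C}$), and the conclusion $\deg(F,C,0)=\deg(x-x^0,C,0)=1$ are all in order. Two remarks. First, the classical proof in Ortega--Rheinboldt avoids degree theory: assuming $F$ has no zero in $\bar{C}$, one builds a continuous map such as $G(x)=x^0-rF(x)/\|F(x)\|_2$ on a closed ball containing $\bar{C}$ (after extending $F$), applies Brouwer's fixed point theorem, and derives a contradiction with the sign condition at the resulting boundary fixed point; your homotopy formulation is cleaner but imports the heavier machinery of the Brouwer degree, whereas the textbook route needs only the fixed point theorem. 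Second, the statement as printed in the paper has the inequality holding for all $x\in C^0$ rather than on the boundary $\dot{C}$ (and writes $\bar{S}$ for $\bar{C}$); the standard hypothesis is the weaker one on $\dot{C}$ only. Your continuity-extension step shows the printed hypothesis implies the boundary one, so your proof covers the statement as written, but it is worth flagging that the version actually needed (and used in the proof of Lemma \ref{lemma-con-gamma-b}, where the sign condition is verified on the sphere $\|\bs{\gamma}-\bs{\gamma}^*\|_2=\Delta\sqrt{p_n\log n/n}$) is the boundary version, which your homotopy argument also handles verbatim.
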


To show $( \bs{\gamma} - \bs{\gamma}^* )^\top Q_{\beta}( \bs{\gamma} ) \ge 0$, in view of Lemma \ref{lemma-root}, it is sufficient to verify the following condition:
there exists a constant $\Delta > 0$ such that for all sufficiently large  $n$,
\begin{equation}
\label{eq-verify-Q-gamma}
\sup_{ \| \bs{\gamma} - \bs{\gamma}_0 \|_2 = \Delta \sqrt{ \frac{ p_n \log n}{n} } } ( \bs{\gamma} - \bs{\gamma}_0)^\top Q_\beta( \bs{\gamma} ) >0.
\end{equation}
\cite{Portnoy1984aos} applied this technique to establish the existence and
consistency of $M$-estimator for independently identically distributed data.
In a different setting, \cite{Wang-AOS2011} used it to analyze generalized estimating equations (GEE) of clustered binary data.

We prove a general version of Lemma \ref{lemma-con-gamma},
which will be used to show consistency in case of a diverging number of covariates.

\begin{lemma}
\label{lemma-con-gamma-b}
Assume that $\| \bs{\beta}^* \|_\infty \le C_1$ and $\| \bs{\gamma}^* \|_2 \le C_2$ for some constants $C_1$ and $C_2$.
Conditional on the events $E_{n1}$ and $E_{n2}$, for any $\bs{\beta} \in B( \bs{\beta}^*, c(\log n)^{1/2}/n^{1/2})$, if \eqref{condition-design} and the following
\begin{eqnarray}
\label{cond-gam-a}
\kappa & = & O\left( \sqrt{ p_n} \right), \\
\label{cond-gam-b}
p_n^2  & = & o\Big( \frac{ n }{\log n} \Big),
\end{eqnarray}
hold, then there exists a unique solution $\bs{\widehat{\gamma}}$ to the equation $Q_{\beta}( \bs{\gamma} )=0$ such that
\[
\| \bs{\widehat{\gamma}} -  \bs{\gamma}^* \|_2 =O\Big( \sqrt{ \frac{ p_n\log n }{n} } \Big).
\]
\end{lemma}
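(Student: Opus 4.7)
\textbf{Proof plan for Lemma \ref{lemma-con-gamma-b}.} The plan is to apply Lemma \ref{lemma-root} to the map $F(\bs{u}) := Q_\beta(\bs{\gamma}^* + \bs{u})$ on the closed ball $\bar{C} = \{\bs{u} \in \R^{p_n}: \|\bs{u}\|_2 \le r\}$ with radius $r = \Delta\sqrt{p_n \log n/n}$ for $\Delta$ a sufficiently large constant. Specifically, I will verify the sphere inequality \eqref{eq-verify-Q-gamma}, namely $\bs{u}^\top F(\bs{u}) > 0$ whenever $\|\bs{u}\|_2 = r$. Once established, Lemma \ref{lemma-root} produces a root $\widehat{\bs{\gamma}}_\beta = \bs{\gamma}^* + \widehat{\bs{u}}$ of $Q_\beta(\bs{\gamma})=0$ inside $\bar{C}$, giving the $\ell_2$-bound $\|\widehat{\bs{\gamma}}_\beta - \bs{\gamma}^*\|_2 \le r = O(\sqrt{p_n\log n/n})$. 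Uniqueness will follow separately from the strict positive definiteness of $\partial Q_\beta/\partial \bs{\gamma}^\top$ on a neighborhood of $\bs{\gamma}^*$, which makes $Q_\beta$ the gradient of a strictly convex function of $\bs{\gamma}$.

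The main analytic step is a second-order expansion along the segment from $\bs{\gamma}^*$ to $\bs{\gamma}^* + \bs{u}$. Writing $M(t) = \partial Q_\beta(\bs{\gamma}^* + t\bs{u})/\partial \bs{\gamma}^\top = \sum_{i<j}\sum_k Z_{ijk}Z_{ijk}^\top \mu'(\pi_{ijk}(t))$, one has
\[
\bs{u}^\top F(\bs{u}) = \bs{u}^\top Q_\beta(\bs{\gamma}^*) + \int_0^1 \bs{u}^\top M(t)\bs{u}\,dt.
\]
For the quadratic term, note that on the sphere $\|\bs{u}\|_2 = r$ and for any $t \in [0,1]$, the perturbation to the linear predictor is bounded by $|\bs{u}^\top Z_{ijk}| \le r\kappa \le \Delta p_n \sqrt{\log n/n} = o(1)$ under assumptions \eqref{cond-gam-a}-\eqref{cond-gam-b}; combined with $\bs{\beta} \in B(\bs{\beta}^*, c\sqrt{\log n/n})$, this keeps every $\pi_{ijk}(t)$ bounded, so $\mu'(\pi_{ijk}(t)) \ge b_0/2$ eventually. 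Invoking \eqref{condition-design} then yields $\bs{u}^\top M(t)\bs{u} \ge (b_0 c_0/2) n^2 \|\bs{u}\|_2^2 = (b_0 c_0/2) n^2 r^2$.

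For the linear term I will split $Q_\beta(\bs{\gamma}^*) = Q(\bs{\beta},\bs{\gamma}^*) = Q(\bs{\beta}^*, \bs{\gamma}^*) + [Q(\bs{\beta}, \bs{\gamma}^*) - Q(\bs{\beta}^*, \bs{\gamma}^*)]$. The first piece is controlled by event $E_{n2}$ and the elementary bound $\|\cdot\|_2 \le \sqrt{p_n}\|\cdot\|_\infty$, giving $\|Q(\bs{\beta}^*,\bs{\gamma}^*)\|_2 = O(\sqrt{p_n}\,\kappa n\sqrt{\log n})$. The second piece, by a mean-value expansion in $\bs{\beta}$ and the antisymmetry $\beta_i - \beta_j$, equals $\sum_{i<j,k} Z_{ijk}\mu'(\tilde\pi_{ijk})(\delta_i - \delta_j)$ with $\delta_\ell = \beta_\ell - \beta_\ell^*$, hence has $\ell_2$-norm at most $C\kappa n^2 \|\bs{\delta}\|_\infty = O(\kappa n^{3/2}\sqrt{\log n})$. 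Using $\kappa = O(\sqrt{p_n})$, Cauchy--Schwarz gives
\[
|\bs{u}^\top Q_\beta(\bs{\gamma}^*)| \le r\bigl[O(p_n n \sqrt{\log n}) + O(\sqrt{p_n}\,n^{3/2}\sqrt{\log n})\bigr].
\]
Comparing to the quadratic lower bound $(b_0 c_0/2)n^2 r^2 = \Theta(\Delta^2 n p_n \log n)$ and invoking $p_n^2 = o(n/\log n)$, both contributions to the linear term are of strictly smaller order; choosing $\Delta$ large enough ensures $\bs{u}^\top F(\bs{u}) > 0$ on the entire sphere, which is the required hypothesis of Lemma \ref{lemma-root}.

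The main obstacle is calibrating the three competing rates carefully: making sure the perturbation $r\kappa$ stays $o(1)$ so the quadratic lower bound survives, while simultaneously showing that the extra $\bs{\beta}$-induced term $Q(\bs{\beta},\bs{\gamma}^*) - Q(\bs{\beta}^*,\bs{\gamma}^*)$ -- which does not vanish because we are not evaluating at $\bs{\beta}^*$ -- is still of smaller order than the quadratic form. This is precisely where the hypothesis $p_n^2 = o(n/\log n)$ enters, and the proof amounts to checking that it suffices for both balancing acts simultaneously.
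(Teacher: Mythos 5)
Your proposal is correct and follows essentially the same route as the paper: both invoke the Ortega--Rheinboldt root-existence lemma on the sphere of radius $\Delta\sqrt{p_n\log n/n}$, lower-bound the quadratic part via Condition \ref{condi-eigen} together with a uniform lower bound on $\mu'$, and control the linear part using the event $E_{n2}$ plus a mean-value bound for the $\bs{\beta}$-perturbation, with the hypothesis $p_n^2=o(n/\log n)$ doing exactly the balancing you describe. The only differences are cosmetic --- you write the $\bs{\gamma}$-dependence as an exact integral-form Taylor remainder $\int_0^1\bs{u}^\top M(t)\bs{u}\,dt$ instead of the paper's split $I_{21}+I_{22}$ (Jacobian at $\bs{\gamma}^*$ plus a Lipschitz correction), and you evaluate the $\bs{\beta}$-perturbation at $\bs{\gamma}^*$ rather than at $\bs{\gamma}$; note only that this second contribution is of the \emph{same} order $np_n\log n$ as the quadratic term (not strictly smaller), and is beaten precisely by the extra factor of $\Delta$, as your final clause already handles.
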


\begin{proof}[Proof of Lemma \ref{lemma-con-gamma-b}]
In view of Lemma \ref{lemma-root}, it is sufficient to demonstrate \eqref{eq-verify-Q-gamma}.

For $1\le i\neq j \le n$, let $\bs{\omega}_{ij}$ be an $n$-dimensional column vector with $i$th element $1$, $j$th element $-1$ and others $0$.
Recall that
\[
Q( \bs{\beta}, \bs{\gamma} )= \sum_{i<j} \sum_k Z_{ijk}\{ \mu( \bs{\omega}_{ij}^\top \bs{\beta} + Z_{ijk}^\top \bs{\gamma} ) - a_{ijk} \},
\]
and, conditional on the event $E_{n2}$ defined in \eqref{def-En2}, we have
\begin{equation}
\label{eq-lemma-Q-g-a}
\| Q( \bs{\beta}^*, \bs{\gamma}^*) \|_\infty \lesssim \kappa \{ n(\log n)^{1/2} \},
\end{equation}
where $\kappa = \sup_{i,j,k} \| Z_{ijk}\|_\infty$. 
A direct calculation gives
\begin{eqnarray}
\nonumber
( \bs{\gamma} - \bs{\gamma}^* )^\top Q_{\beta}( \bs{\gamma} ) &  = & \underbrace{  (\bs{\gamma} - \bs{\gamma}^*)^\top Q(\bs{\beta}^*, \bs{\gamma}^*) }_{I_1} +
\underbrace{ ( \bs{\gamma} - \bs{\gamma}^*)^\top \{ Q( \bs{\beta}^*, \bs{\gamma} ) - Q( \bs{\beta}^*, \bs{\gamma}^*) \} }_{ I_2 }
\\
\label{eq-le-g-I123}
 && + \underbrace{ ( \bs{\gamma} - \bs{\gamma}^* )^\top \{ Q( \bs{\beta}, \bs{\gamma} ) - Q( \bs{\beta}^*, \bs{\gamma} ) \} }_{ I_3 }.
\end{eqnarray}
Consider the term $I_1$ first.
By \eqref{eq-lemma-Q-g-a}, we have
\begin{equation*}
\| Q( \bs{\beta}^*, \bs{\gamma}^*) \|_2^2 
\lesssim p_n \cdot \kappa_n^2 n^2(\log n).
\end{equation*}
This, together with the Cauchy-Schwarz inequality, gives
\begin{equation*}
\label{eq-lem-g-I1}
I_1  \le  \| \bs{\gamma} - \bs{\gamma}^* \|_2 \| Q(\bs{\beta}^*, \bs{\gamma}^*) \|_2 
 \lesssim    \sqrt{ \frac{ p_n \log n}{n } } \cdot p_n^{1/2} \kappa n (\log n)^{1/2} \lesssim
 n^{1/2}(\log n) p_n^{3/2}
\end{equation*}
by noticing $\kappa = O(\sqrt{p_n})$.
Therefore, if \eqref{cond-gam-b} holds, then
\begin{equation}
\label{eq-lem-g-I1}
\frac{I_1}{ np_n \log n}   \lesssim  \sqrt{ \frac{ p_n }{ n } } = o(1).
\end{equation}

Now, consider $I_2$. By the mean-value theorem for vector-valued functions \citep[][p.341]{Lang:1993}, we have
\[
Q( \bs{\beta}^*, \bs{\gamma} ) - Q( \bs{\beta}^*, \bs{\gamma}^*) = J( \bs{\gamma}, \bs{\gamma}^*)( \bs{\gamma} - \bs{\gamma}^*),
\]
where
\begin{eqnarray*}
J_{ij}( \bs{\gamma}, \bs{\gamma}^*) & = & \int_0^1 \frac{ \partial Q_i( \bs{\beta}^*, \bs{\gamma} ) }{ \partial \gamma_j} {\Big |}_{\bs{\gamma}= ( t\bs{\gamma} + (1-t)\bs{\gamma}^* )} dt.
\end{eqnarray*}
For convenience, define
\[
J( \bs{\gamma}^*)= \frac{ \partial Q( \bs{\beta}^*, \bs{\gamma}^*)}{ \partial \bs{\gamma}^\top }
= \sum_{i<j} \sum_k Z_{ijk} \mu^\prime( \bs{\omega}_{ij}^\top \bs{\beta}^* + Z_{ijk}^\top \bs{\gamma}^* ) Z_{ijk}^\top.
\]
We divide $I_2$ into two parts:
\begin{equation}
\label{eq-lem-g-I21-I22}
I_2 = \underbrace{ ( \bs{\gamma} - \bs{\gamma}^*)^\top J( \bs{\gamma}^*)( \bs{\gamma} - \bs{\gamma}^*) }_{ I_{21} } +
\underbrace{ ( \bs{\gamma} - \bs{\gamma}^*)^\top [J( \bs{\gamma}, \bs{\gamma}^*) - J( \bs{\gamma}^*)] ( \bs{\gamma} - \bs{\gamma}^*) }_{ I_{22} }.
\end{equation}
For $I_{21}$, by condition \eqref{condition-design}, we have
\begin{eqnarray}
\nonumber
I_{21} &  =  & ( \bs{\gamma} - \bs{\gamma}^*)^\top \sum_{i<j}\sum_k Z_{ijk} \mu^\prime(\pi_{ijk}^*) Z_{ijk}^\top ( \bs{\gamma} - \bs{\gamma}^* )
\\
\nonumber
& \ge & \min_{i,j,k} \mu^\prime( \pi_{ijk} ) \cdot  ( \bs{\gamma} - \bs{\gamma}^*)^\top \lambda_{\min}( \sum_{i<j}\sum_k Z_{ijk}  Z_{ijk}^\top ) ( \bs{\gamma} - \bs{\gamma}^* )
\\
\label{eq-ga-I21}
& \ge &  \Delta^2 \frac{ p_n \log n}{ n} \cdot c n^2 \ge  c\Delta^2 n p_n \log n.
\end{eqnarray}
We now analyze $I_{22}$. Because
\[
J( \bs{\gamma}, \bs{\gamma}^*) - J(\bs{\gamma}^*) =
\sum_{i<j}\sum_k Z_{ijk} \left\{ \int_0^1 \mu^\prime( \bs{\omega}_{ij}^\top \bs{\beta} + Z_{ijk}^\top [ t\bs{\gamma} + (1-t)\bs{\gamma}^*] ) - \mu^\prime(\pi_{ijk}^*) dt \right\} Z_{ijk}^\top
\]
and, by the mean value theorem,
\begin{eqnarray*}
&&\int_0^1 \left \{ \mu^\prime( \bs{\omega}_{ij}^\top \bs{\beta} + Z_{ijk}^\top [ t\bs{\gamma} + (1-t)\bs{\gamma}^*] ) - \mu^\prime(\pi_{ijk}^*) \right\} dt
\\
& \le & \sup_{t\in [0,1] } | \mu^\prime( \bs{\omega}_{ij}^\top\bs{\beta} + Z_{ijk}^\top [ t\bs{\gamma} + (1-t)\bs{\gamma}^*] ) - \mu^\prime(\pi_{ijk}^*) |
\\
& \le &  \frac{1}{4} \sup_{t\in [0,1] } | Z_{ijk}^\top [ t\bs{\gamma} + (1-t)\bs{\gamma}^*] - \bs{\gamma}^* ] | \lesssim \|\bs{\gamma} - \bs{\gamma}^*\|_2 \|   Z_{ijk} \|_2,
\end{eqnarray*}
we have
\begin{eqnarray}
\nonumber
I_{22} & \le & \|\bs{\gamma} - \bs{\gamma}^*\|_2 \cdot \sup_{i,j,k} \|   Z_{ijk} \|_2  \cdot ( \bs{\gamma} - \bs{\gamma}^*)^\top \sum_{i<j} \sum_k Z_{ijk} Z_{ijk}^\top ( \bs{\gamma} - \bs{\gamma}^*)
\\
\nonumber
&  
\lesssim &
\sqrt{ \frac{p_n\log n}{n} } \cdot \sqrt{p_n} \cdot n^2 \cdot \frac{ p_n \log n}{n}
\lesssim \sqrt{ \frac{p_n^2\log n}{n} } \cdot  n p_n \log n.
\end{eqnarray}
Therefore, if  \eqref{cond-gam-b} holds, 
then
\begin{equation}
\label{eq-lem-g-I222}
\frac{ I_{22} }{ n p_n \log n } = o\left( \sqrt{ \frac{ n }{ \log n} } \right).
\end{equation}

Last, consider the last term $I_3$ in \eqref{eq-le-g-I123}.
Again, applying the mean-value theorem for vector-valued functions \citep[][p.341]{Lang:1993}, we have
\[
I_3 = ( \bs{\gamma} - \bs{\gamma}^* )^\top \{ Q( \bs{\beta}, \bs{\gamma} ) - Q( \bs{\beta}^*, \bs{\gamma} ) \} =
( \bs{\gamma} - \bs{\gamma}^* )^\top K( \bs{\beta}, \bs{\beta}^*) ( \bs{\beta} - \bs{\beta}^* ),
\]
where
\begin{eqnarray}
 K(\bs{\beta}, \bs{\beta}^*) & =  & \int_0^1 \frac{ \partial Q( \bs{\beta}, \bs{\gamma} ) }{ \partial \bs{\beta}^\top }{\Big|}_{ \bs{\beta} = t\bs{\beta} + (1-t)\bs{\beta}^*} dt
 \\
&  = &  \sum_{i<j} \sum_k Z_{ijk} \int_0^1 \mu^\prime\left( \bs{\omega}_{ij}^\top \{t\bs{\beta} + (1-t)\bs{\beta}^*\} + Z_{ijk}^\top \bs{\gamma} \right) \bs{\omega}_{ij}.
\end{eqnarray}
Because
\[
| \mu^\prime( \bs{\omega}_{ij}^\top \{t\bs{\beta} + (1-t)\bs{\beta}^*\} + Z_{ijk}^\top \gamma ) | \le \frac{1}{4},
\]
we have
\begin{eqnarray*}
I_3 &  = & ( \bs{\gamma} - \bs{\gamma}^* )^\top K( \bs{\beta}, \bs{\beta}^*) (\bs{\beta} - \bs{\beta}^* )
\\
& \le & \frac{1}{4} ( \bs{\gamma} - \bs{\gamma}^* )^\top \sum_{i<j} \sum_k Z_{ijk} \bs{\omega}_{ij}^\top ( \bs{\beta} - \bs{\beta}^* )
\\
& \lesssim &  n^2 \| \bs{\gamma} - \bs{\gamma}^* \|_2 \| \bs{\beta} - \bs{\beta}^* \|_\infty
\lesssim n^2 \| \bs{\gamma} - \bs{\gamma}^* \|_2 \times \sqrt{ \frac{ \log n }{ n } }
\\
& \lesssim & n^2 \cdot \Delta\sqrt{ \frac{p_n \log n}{n} } \cdot \sqrt{ \frac{ \log n }{ n } }
 \lesssim   \Delta n(\log n)p_n^{1/2}.
\end{eqnarray*}
It follows from \eqref{cond-gam-b} that
\begin{equation}
\label{eq-ga-I3}
\frac{ I_3 }{ np_n\log n} \lesssim \frac{1}{ p_n^{1/2} }.
\end{equation}
Therefore, \eqref{eq-verify-Q-gamma} immediately follows from
 \eqref{eq-le-g-I123}, \eqref{eq-lem-g-I1}, \eqref{eq-lem-g-I21-I22}, \eqref{eq-ga-I21},
\eqref{eq-lem-g-I222} and \eqref{eq-ga-I3}.
It completes the proof.
\end{proof}

We now prove Lemma \ref{lemma-con-gamma}.

\begin{proof}[Proof of Lemma \ref{lemma-con-gamma}]
It is clear that Lemma \ref{lemma-con-gamma} immediately follows from Lemma \ref{lemma-con-gamma-b}.
\end{proof}

\section{Proofs of supported claims for Theorem \ref{theorem-central-b}}
\label{sec-proof-theorem-2}

This section contains the proofs of Lemma \ref{lemma:th4-b} and Lemma \ref{lemma-thereom2-gamma-a},
and the proofs of claims \eqref{claim-B-star} and \eqref{claim-S3} in the proof of Theorem \ref{theorem-central-b}.

\subsection{Proof of Lemma \ref{lemma:th4-b}}
\label{section-proof-lemma6}

\begin{proof}[Proof of Lemma \ref{lemma:th4-b}]
Let $T_{ij}$ be an $n$-dimensional column vector with $i$th and $j$th elements ones and other elements zeros.
 Define
 \renewcommand{\arraystretch}{1.3}
\begin{equation*}
\begin{array}{c}
V(\bs{\beta}, \bs{\gamma})=\frac{ \partial H(\bs{\beta}, \bs{\gamma}) }{ \partial \bs{\beta}^\top }, ~~
V_{\gamma\beta}(\bs{\beta}, \bs{\gamma}) = \frac{ \partial Q(\bs{\beta}, \bs{\gamma}) }{ \partial \bs{\beta}^\top}, \\
s_{ijk}(\bs{\beta}, \bs{\gamma}) = \{\mu(\pi_{ijk}^*)- a_{ijk}\} ( Z_{ijk} - V_{\gamma\beta}(\bs{\beta}, \bs{\gamma}) [V(\bs{\beta},\bs{\gamma})]^{-1} T_{ij}).
\end{array}
\end{equation*}
When evaluating $V(\bs{\beta}, \bs{\gamma})$, $V_{Q\beta}(\bs{\beta}, \bs{\gamma})$ and $\mu^\prime_{ijk}(\bs{\beta}, \bs{\gamma})$ at their true values
$(\bs{\beta}^*, \bs{\gamma}^*)$, we omit the arguments $(\bs{\beta}^*, \bs{\gamma}^*)$, i.e., $V=V(\bs{\beta}^*, \bs{\gamma}^*)$, etc.
Since $Z_{ijk}+Z_{jik}=0$, we have
\begin{equation*}\label{eq-sum-zerp}
\sum_i \sum_{j\neq i}\sum_k Z_{ijk} \mu_{ijk}^\prime =0.
\end{equation*}
By direct calculations, we have
\begin{eqnarray*}
V_{\gamma\beta} & = & ( \sum_{j\neq 1}\sum_k Z_{1jk} \mu_{1jk}^\prime, \ldots, \sum_{j\neq n} \sum_k Z_{njk} \mu^\prime_{njk}),
\end{eqnarray*}
and
\begin{eqnarray*}
(V_{\gamma\beta} S)_{\ell t}& = & \frac{ \sum_{j\neq t} \sum_k Z_{tjk, \ell} \mu^\prime_{tjk} }{ v_{tt}}
+ \frac{1}{v_{00}}(\sum_{i=1}^n \sum_{j\neq i} \sum_k Z_{ijk, \ell} \mu_{ijk}^\prime) \\
& = &\frac{ \sum_{j\neq t}\sum_k Z_{tjk, \ell} \mu^\prime_{tjk} }{ v_{tt}}
 - \frac{ \sum_{j=1}^{n}\sum_k Z_{j0k} \mu^\prime_{j0k} }{ v_{00}}.
\end{eqnarray*}
Further, we have
\[
V_{\gamma \beta} S T_{ij} = \frac{ \sum_{t\neq i} \sum_k Z_{itk} \mu^\prime_{itk} }{ v_{ii}}
+\frac{ \sum_{t\neq j}\sum_k Z_{jtk} \mu^\prime_{jtk} }{ v_{jj}}
- \frac{ 2\sum_{j=1}^{n}\sum_k Z_{j0k} \mu^\prime_{j0k} }{ v_{00}}.
\]
Because $\max_{ijk}\mu^\prime_{ijk} \le 1/4$ and $v_{ii} \ge nb_0$, where $b_0=\min_{i,j,k}\mu^\prime(\pi_{ijk}^*)\ge c$ for some constant $c$, we have
\[
\|V_{\gamma \beta} S T_{ij}\|_\infty \le  \frac{ (\max_{i,j}m_{ij})}{4 b_0}.
\]
Note that $W=V^{-1} -S$. On the other hand, we have
\[
\| V_{\gamma \beta} W T_{ij} \|_{\infty} \le \| V_{\gamma \beta} \|_\infty \|W T_{ij} \|_\infty
\le n^2\frac{\max_i m_i }{2} \| W\|_{\max} = O( 1),
\]
where the last equation is due to Lemma 1.
Thus, $\|V_{\gamma\beta} V^{-1} T_{ij} \|_\infty$ is bounded above by a constant.

Since
\[
H(\bs{\beta}^*, \bs{\gamma}^*) = \sum_{i<j} \sum_k ( \E a_{ijk} - a_{ijk} ) T_{ij}, ~~
Q(\bs{\beta}^*, \bs{\gamma}^*) = \sum_{i<j} \sum_k Z_{ijk} (\E a_{ijk} - a_{ijk}),
\]
we have
\[
Q(\bs{\beta}^*, \bs{\gamma}^*) - V_{Q\beta} V^{-1} H(\bs{\beta}^*, \bs{\gamma}^*) = \sum_{i<j} \sum_k s_{ijk}(\bs{\beta}^*, \bs{\gamma}^*).
\]
A direct calculation gives
\[
\mathrm{Cov}( Q(\bs{\beta}^*, \bs{\gamma}^*) - V_{Q\beta} V^{-1} H(\bs{\beta}^*, \bs{\gamma}^*))
= \frac{\partial Q(\bs{\beta}^*, \bs{\gamma}^*)}{\partial \bs{\gamma}^\top} - V_{\gamma \beta}^{-1} V^{-1} V_{\gamma \beta}^\top.
\]
Note that $s_{ijk}(\bs{\beta}^*, \bs{\gamma}^*)$, $0\le i<j\le n, k=1, \ldots, m_{ij}$, are independent vectors.
By the central limit theorem for the bounded case, as in \cite{Loeve:1977} (p. 289), we have Lemma \ref{lemma:th4-b}.
\end{proof}

\subsection{Proof of Lemma \ref{lemma-thereom2-gamma-a}}  
\label{subsection-expansion}

\begin{proof}[Proof of Lemma \ref{lemma-thereom2-gamma-a}]
Recall that $H(\bs{\beta}^*, \bs{\gamma}^*)=(H_1(\bs{\beta}^*, \bs{\gamma}^*), \ldots, H_{n}(\bs{\beta}^*, \bs{\gamma}^*) )^\top$ and
\[
H_i(\bs{\beta}^*, \bs{\gamma}^*) = \sum_{j=0, j\neq i}^n \sum_{s=1}^{m_{ij}} ( \mu_{ijs}(\bs{\beta}^*, \bs{\gamma}^*) - a_{ijs} ),~~i=1, \ldots, n.
\]
By applying a second order Taylor expansion to $H(\widehat{\bs{\beta}}^*, \bs{\gamma}^*)$, we have
\begin{equation}
\label{equ-lemma-gamma-b}
H(\widehat{\bs{\beta}}^*, \bs{\gamma}^*)  = H(\bs{\beta}^*, \bs{\gamma}^*) +
\frac{\partial H(\bs{\beta}^*, \bs{\gamma}^*)}{\partial \bs{\beta}^\top } (\widehat{\bs{\beta}}^* - \bs{\beta}^*)
+ \frac{1}{2} \left[\sum_{k=1}^{n} (\widehat{\beta}^*_k - \beta^*_k) \frac{\partial^2H(\bar{\bs{\beta}}^*, \bs{\gamma}^*)}
{\partial \beta_k \partial \bs{\beta}^\top} \right]\times (\widehat{\bs{\beta}}^* - \bs{\beta}^*),
\end{equation}
where $\bar{\bs{\beta}}^*$ lies between $\widehat{\bs{\beta}}^*$ and $\bs{\beta}^*$. We evaluate the last term in the above equation row by row.
Its $\ell$th row for $\ell>0$ is
\begin{equation}\label{definition-R}
R_\ell := \frac{1}{2} (\widehat{\bs{\beta}}^* - \bs{\beta}^*)^\top  \frac{\partial^2 H_\ell(\bar{\bs{\beta}}^*, \bs{\gamma}^*)}
{\partial \bs{\beta} \partial \bs{\beta}^\top} (\widehat{\bs{\beta}}^* - \bs{\beta}^*),~~\ell=0, \ldots, n.
\end{equation}
A directed calculation gives that
\[
\frac{\partial^2 H_\ell(\bar{\bs{\beta}}^*, \bs{\gamma}^*)}{\partial \beta_i \partial \beta_j} =
\begin{cases}
 \sum_{t\neq i} \sum_s \mu^{\prime\prime}(\bar{\pi}_{its}),  &   \ell =i=j  \\
-\sum_s \mu^{\prime\prime}(\bar{\pi}_{ijs}), & \ell=i, i\neq j  \\
-\sum_s \mu^{\prime\prime}(\bar{\pi}_{jis}), &  \ell=j, i\neq j \\
\sum_s \mu^{\prime\prime}(\bar{\pi}_{\ell is}), &  i=j, \ell \neq j \\
0, & \ell \neq i \neq j,
\end{cases}
\]
where
\[
\bar{\pi}_{ijs} = \bar{\beta}_{\gamma,i} - \bar{\beta}_{\gamma, j} + Z_{ijs}^\top \bs{\gamma}^*.
\]
By \eqref{ineq-mu-keyb},  we have
\begin{eqnarray*}
\max_{\ell=0, \ldots, n} |R_\ell| & \le & \max_{\ell=0, \ldots, n} \sum_{1\le i\neq j \le n-1} |\frac{\partial^2 H_\ell(\bar{\bs{\beta}}^*, \bs{\gamma}^*)}{\partial \beta_i \partial \beta_j} |
\|\widehat{\bs{\beta}}^* - \bs{\beta}^*\|^2 , \\
& \le & O(m_{\max}) \|\widehat{\bs{\beta}}^* - \bs{\beta}^*\|^2.
\end{eqnarray*}
By Lemma \ref{lemma-consistency-beta}, we have that
\begin{equation}
\label{eq:rk}
\max_{\ell=0, \ldots, n} |R_\ell| = O_p\left( \frac{b_{2}n b_{1}^4}{b_{0}^6} \times \frac{\log n}{n} \right)
= O_p\left( \frac{b_{2} b_{1}^4\log n}{b_{0}^6}  \right).
\end{equation}
Let $R=(R_1, \ldots, R_{n})^\top$ and $V=\partial H(\bs{\beta}^*, \bs{\gamma}^*)/\partial \bs{\beta}^\top$.
Since $H(\widehat{\bs{\beta}}^*, \bs{\gamma}^*)=0$,
by \eqref{equ-lemma-gamma-b}, we have
\begin{equation}\label{eq-expression-beta-star}
\widehat{\bs{\beta}}^* - \bs{\beta}^* =  V^{-1} H(\bs{\beta}^*, \bs{\gamma}^*) + V^{-1} R .
\end{equation}
Note that $V\in \mathcal{L}_n(b_{n0}, b_{n1})$.
Since $\sum_{i=1}^n H_i(\bs{\beta}^*, \bs{\gamma}^*)=0$, we have
\begin{equation}\label{eq-sum-H}
\sum_{i=1}^{n} H_i( \bs{\beta}^*, \bs{\gamma}^*) = - H_0(\bs{\beta}^*, \bs{\gamma}^*),
\end{equation}
such that
\begin{equation}\label{eq-sum-R}
\sum_{i=1}^{n} R_i(\bs{\beta}^*, \bs{\gamma}^*) = - R_0(\bs{\beta}^*, \bs{\gamma}^*).
\end{equation}
By \eqref{eq:rk} and Lemma 1, we have
\begin{eqnarray*}
\|V^{-1}R \|_\infty & \le & \|S R \|_\infty + \|(V^{-1}- S) R \|_\infty
\\
& \le &  \max_{i=1, \ldots, n-1}\frac{1}{v_{ii}}|R_i|  + \frac{1}{v_{00}} |\sum_{i=1}^{n}R_i|+
 n \| V^{-1} - S\|_{\max} \|R\|_\infty  \\
& \le &  O_p\left( \frac{\log n}{n}  \right).
\end{eqnarray*}
\end{proof}

\subsection{Proof of \eqref{claim-B-star}: Derivation of asymptotic bias $B_*$}
\label{section-bias}

In this section, we show that $S_2=B_* + o_p(1)$.

Note that for $\ell=1, \ldots, p$,
\[
Q_\ell(\bs{\beta}, \bs{\gamma}) = \sum_{i<j} \sum_k Z_{ijk,\ell} (  \mu(\beta_i - \beta_j + Z_{ijk}^\top \bs{\gamma} ) - a_{ijk} ),
\]
and
\[
\frac{\partial Q_\ell(\bs{\beta}, \bs{\gamma})}{\partial \beta_i }
= \sum_{j\neq i} \sum_k \mu_{ijk}^\prime( \pi_{ijk} ),
\]
where $\mu_{ijk}^\prime( \pi_{ijk})=\mu^\prime(\pi_{ijk})$ to emphasize the subscripts $i,j,k$.
Recall that $V=\partial H(\bs{\beta}^*, \bs{\gamma}^*)/\partial \bs{\beta}^\top$.
By Lemma \ref{lemma-thereom2-gamma-a},  
we have
\[
\widehat{\bs{\beta}}^* - \bs{\beta}^* = - V^{-1} H(\bs{\beta}^*, \bs{\gamma}^*) - V^{-1} R,
\]
where
\begin{equation}
\label{eq-VR-bn}
\| V^{-1} R\|_\infty =  O_p( \frac{  \log n }{ n } ).
\end{equation}
Let $\mathbf{e}_i$ be a vector with the $i$th element $1$ and others $0$. The bias term $S_2$ is
\begin{eqnarray}
\nonumber
S_2  &  =  &  \frac{1}{2\sqrt{N}} \sum_{k=1}^{n} \Big[( \widehat{\beta}_k^* - \beta_k^* )
\frac{\partial^2 Q(\bs{\beta}^*, \bs{\gamma}^*)}{ \partial \beta_k \partial \bs{\beta}^\top }
\times ( \widehat{\bs{\beta}}^* - \bs{\beta}^* ) \Big]  \\
\nonumber
& = & \frac{1}{2\sqrt{N}} \sum_{k=1}^{n} \left\{
 \mathbf{e}_k^\top (V^{-1} H(\bs{\beta}^*, \bs{\gamma}^*)
 + V^{-1}R )\frac{\partial^2 Q(\bs{\beta}^*, \bs{\gamma}^*)}{ \partial \beta_k \partial \bs{\beta}^\top }
 [ V^{-1} H(\bs{\beta}^*, \bs{\gamma}^*) + V^{-1}R ]
\right\}
 \\
\label{eq-S2-I123}
& := & I_1 + I_2 + I_3,
\end{eqnarray}
where
\begin{align*}
I_1 & =\frac{1}{2\sqrt{N}} \sum_{k=1}^{n} \left\{
 \mathbf{e}_k^\top  V^{-1} H(\bs{\beta}^*, \bs{\gamma}^*)
 \frac{\partial^2 Q(\bs{\beta}^*, \bs{\gamma}^*)}{ \partial \beta_k \partial \bs{\beta}^\top }
 [ V^{-1} H(\bs{\beta}^*, \bs{\gamma}^*) ] \right\},
\\
I_2 & =\frac{1}{\sqrt{N}} \sum_{k=1}^{n} \left\{
 \mathbf{e}_k^\top ( V^{-1} R  )\frac{\partial^2 Q(\bs{\beta}^*, \bs{\gamma}^*)}
 { \partial \beta_k \partial \bs{\beta}^\top } V^{-1} H(\bs{\beta}^*, \bs{\gamma}^*)
\right\},
\\
I_3 & = \frac{1}{2\sqrt{N}} \sum_{k=1}^{n}
 \mathbf{e}_k^\top ( V^{-1} R  ) \frac{\partial^2 Q(\bs{\beta}^*, \bs{\gamma}^*)}{ \partial \beta_k \partial \bs{\beta}^\top }   ( V^{-1} R  ).
\end{align*}
The proof proceeds three steps that bounds $I_1$, $I_2$ and $I_3$, respectively.

Step I: We evaluate $I_1=(I_{1,1}, \ldots, I_{1,p})$. For $\ell = 1, \ldots, p$, we have
\begin{eqnarray*}
I_{1,\ell} & = & \frac{1}{2\sqrt{N}} \sum_{k=1}^{n} \left\{
  [H(\bs{\beta}^*, \bs{\gamma}^*)]^\top  V^{-1} e_k \frac{\partial^2 Q_\ell
 (\bs{\beta}^*, \bs{\gamma}^*)}{ \partial \beta_k \partial \bs{\beta}^\top }
 [ V^{-1} H(\bs{\beta}^*, \bs{\gamma}^*) ] \right\} \\
 & = &\frac{1}{2\sqrt{N}} \sum_{k=1}^{n}
 \left\{ ( \frac{\partial^2 Q_\ell(\bs{\beta}^*, \bs{\gamma}^*)}{ \partial \beta_k \partial \bs{\beta}^\top }
V^{-1} H(\bs{\beta}^*, \bs{\gamma}^*) [H(\bs{\beta}^*, \bs{\gamma}^*)]^\top V^{-1} \mathbf{e}_k\right\}.
\end{eqnarray*}
By the large sample theory,
\[
V^{-1} H(\bs{\beta}^*, \bs{\gamma}^*) H^\top(\bs{\beta}^*, \bs{\gamma}^*) \stackrel{p}{\to} E_{n},
\]
where $E_{n}$ is an $n\times n$ identity matrix.
So, we have
\begin{equation}
\label{eq-I-1ell}
I_{1, \ell} = \frac{1}{2\sqrt{N}} \sum_{k=1}^{n}
\left\{  \frac{\partial^2 Q_\ell(\bs{\beta}^*, \bs{\gamma}^*)}{ \partial \beta_k \partial \bs{\beta}^\top } V^{-1} \mathbf{e}_k\right\} + o_p(1).
\end{equation}
By direct calculations, we have
\[
\frac{\partial^2 Q_\ell(\bs{\beta}^*, \bs{\gamma}^*)}{ \partial \beta_k \partial \beta_j }
= \begin{cases}
 \sum_{t\neq j} \sum_s z_{jts, \ell} \mu^{\prime\prime}( \beta_j - \beta_t + Z_{jts}^\top \bs{\gamma} ), & k=j, \\
-Z_{jks, \ell} \mu^{\prime\prime}( \beta_j - \beta_k + Z_{jts}^\top \bs{\gamma} ), & k\neq j.
\end{cases}
\]
So, we have
\begin{eqnarray}
\nonumber
\sum_{k=1}^{n} \frac{\partial^2 Q_\ell(\bs{\beta}^*, \bs{\gamma}^*)}{ \partial \beta_k \partial \bs{\beta}^\top } S \mathbf{e}_k
& = & \sum_{k=1}^{n} \sum_{i=1}^{n} \sum_{j=1}^{n} \frac{\partial^2 Q_\ell(\bs{\beta}^*, \bs{\gamma}^*)}
{ \partial \beta_k \partial \beta_i } s_{ij} (\mathbf{e}_k)_j \\
\nonumber
& = &  \sum_{k=1}^{n} \sum_{i=1}^{n} \frac{\partial^2 Q_\ell(\bs{\beta}^*, \bs{\gamma}^*)}{ \partial \beta_k \partial \beta_i } s_{ik} \\
\nonumber
& =& \sum_{k=1}^{n} \frac{\partial^2 Q_\ell(\bs{\beta}^*, \bs{\gamma}^*)}{ \partial \beta_k^2  } (\frac{1}{v_{ii}}+\frac{1}{v_{00}})
+ \frac{1}{v_{00}} \sum_{k=1}^{n} \sum_{i=1, i\neq k}^{n} \frac{\partial^2 Q_\ell(\bs{\beta}^*, \bs{\gamma}^*)}
{ \partial \beta_k\partial \beta_i  }\\
\nonumber
& = & \sum_{k=1}^{n} \frac{\partial^2 Q_\ell(\bs{\beta}^*, \bs{\gamma}^*)}{ \partial \beta_k^2  } \frac{1}{v_{ii}}
+ \frac{1}{v_{00}} \sum_{k=1}^{n} \sum_{i=1}^{n} \frac{\partial^2 Q_\ell(\bs{\beta}^*, \bs{\gamma}^*)}{ \partial \beta_k\partial \beta_i  } \\
\nonumber
& = & \sum_{k=1}^{n} \frac{ \sum_{j\neq k} \sum_s \mu_{kjs}^{\prime\prime}(\bs{\beta}^*, \bs{\gamma}^*) Z_{kjs,\ell} }{ v_{ii} }
+ \frac{ \sum_{k=1}^{n} \sum_s Z_{kns,\ell}\mu_{kns}^{\prime\prime}(\bs{\beta}^*, \bs{\gamma}^*) }{ v_{00} } \\
\label{eq-QSe}
& = & \sum_{k=0}^{n} \frac{ \sum_{j\neq k} \sum_s \mu_{kjs}^{\prime\prime}(\bs{\beta}^*, \bs{\gamma}^*) Z_{kjs,\ell} }{ v_{ii} }.
\end{eqnarray}
Recall that $W=V^{-1}-S$. Let $m_*=\max_{ij} m_{ij}$.  Since
\[
\frac{\partial^2 Q_\ell(\bs{\beta}^*, \bs{\gamma}^*)}{ \partial \beta_k \partial \beta_j }
\le  \begin{cases}
\kappa n m_*/4, & k=j, \\
\kappa m_*/4, & k\neq j,
\end{cases}
\]
we have
\begin{equation}
\label{eq-QWe}
\sum_{k=1}^{n} \frac{\partial^2 Q_\ell(\bs{\beta}^*, \bs{\gamma}^*)}{ \partial \beta_k \partial \bs{\beta}^\top } W \mathbf{e}_k
 =  \sum_{k=1}^{n} \sum_{i=1}^{n} \sum_{j=1}^{n} \frac{\partial^2 Q_\ell(\bs{\beta}^*, \bs{\gamma}^*)}
 { \partial \beta_k \partial \beta_i } w_{ik}
\le  \kappa n^2 m_* \| W \|_{\max} = O( 1 ).
\end{equation}
By combining \eqref{eq-I-1ell}, \eqref{eq-QSe} and \eqref{eq-QWe}, it yields
\begin{equation}\label{eq-I1-S}
I_1 = \sum_{k=0}^{n} \frac{ \sum_{j\neq k} \sum_s \mu_{kjs}^{\prime\prime}(\bs{\beta}^*, \bs{\gamma}^*) Z_{kjs} }{ v_{kk} } + o_p(1).
\end{equation}

Step 2: we evaluate $I_2$.  By Lemma \ref{lemma-Q-upper-bound}, we have
\[
\| S H(\bs{\beta}^*, \bs{\gamma}^*) \|_\infty = \max_{i=1, \ldots, n} \frac{ |H_i(\bs{\beta}^*, \bs{\gamma}^*)| }{v_{ii}}  +
\frac{ |H_0(\bs{\beta}^*, \bs{\gamma}^*)| }{ v_{00} } = O_p(  (n\log n)^{1/2} ),
\]
and
\[
\| W H(\bs{\beta}^*, \bs{\gamma}^*) \|_\infty = n\| W\|_{\max} \| H(\bs{\beta}^*, \bs{\gamma}^*)\|_\infty = O_p(  (n\log n)^{1/2}),
\]
such that
\[
\| V^{-1} H(\bs{\beta}^*, \bs{\gamma}^*) \|_\infty =  O_p(  (n\log n)^{1/2}).
\]
It follows that
\begin{eqnarray*}
\left|\frac{\partial^2 Q_\ell(\bs{\beta}^*, \bs{\gamma}^*)}{ \partial \beta_k \partial \bs{\beta}^\top }
V^{-1} H(\bs{\beta}^*, \bs{\gamma}^*) \right|
&=& \sum_{j=1}^{n}  \frac{\partial^2 Q_\ell(\bs{\beta}^*, \bs{\gamma}^*)}{ \partial \beta_k \partial \beta_j } (V^{-1} H(\bs{\beta}^*, \bs{\gamma}^*))_j  \\
&\le& \| V^{-1} H(\bs{\beta}^*, \bs{\gamma}^*)\|_\infty \times 2n \kappa \times \max_{i,j}\sum_k\mu_{ijk}^{\prime\prime}(\pi^*_{ijk}) \\
& = &O_p( b_n^3 (\log n)^{1/2}  ).
\end{eqnarray*}
Therefore, by \eqref{eq-VR-bn}, we have
\begin{eqnarray}
\nonumber
\| I_{2}\|_\infty & = &  \max_{\ell=1, \ldots, p} \frac{1}{\sqrt{N}} \sum_{k=1}^{n} \left| \left\{
 \mathbf{e}_k^\top V^{-1} R  \frac{\partial^2 Q_\ell(\bs{\beta}^*, \bs{\gamma}^*)}{ \partial \beta_k \partial \bs{\beta}^\top } V^{-1}H
\right\} \right| \\
\nonumber
& = & \frac{2}{n} \cdot n \cdot \| V^{-1}R \|_\infty |\frac{\partial^2 Q_\ell(\bs{\beta}^*, \bs{\gamma}^*)}{ \partial \beta_k \partial \bs{\beta}^\top } V^{-1} H| \\
\label{eq-I2-S}
& = & O_p(\frac{   (\log n)^{1/2}}{ n^{1/2} } ).
\end{eqnarray}

Step 3: We evaluate $I_3$. By \eqref{eq-VR-bn}, we have
\begin{eqnarray*}
\frac{\partial^2 Q_\ell(\bs{\beta}^*, \bs{\gamma}^*)}{ \partial \beta_k \partial \bs{\beta}^\top } V^{-1} R
&=& \sum_{j=1}^{n}  \frac{\partial^2 Q_\ell(\bs{\beta}^*, \bs{\gamma}^*)}{ \partial \beta_k \partial \beta_j } (V^{-1} R)_j  \\
&\le& \| V^{-1} R\|_\infty \times 2n\kappa \times \max_{i,j} \sum_k \mu_{ijk}^{\prime\prime}(\pi^*_{ijk}) \\
& = &O_p(  (\log n)^{1/2}  )
\end{eqnarray*}
Thus, we have
\begin{eqnarray}
\nonumber
I_{3, \ell} & = &  \frac{1}{\sqrt{N}} \sum_{k=1}^{n-1} \left\{
 \mathbf{e}_k^\top V^{-1} R  \frac{\partial^2 Q_\ell(\bs{\beta}^*, \bs{\gamma}^*)}{ \partial \beta_k \partial \bs{\beta}^\top } V^{-1}R  \right\}\\
\label{eq-I3-S}
 & = & O_p ( \frac{ (\log n)^{1/2}}{n} ).
\end{eqnarray}
In view of \eqref{eq-S2-I123}, \eqref{eq-I1-S}, \eqref{eq-I2-S} and \eqref{eq-I3-S}, if
$b_n= o( n^{1/24}/(\log n)^{1/24})$, then
\[
S_2  = \sum_{k=0}^{n} \frac{ \sum_{j\neq k} \sum_s \mu_{kjs}^{\prime\prime}(\bs{\beta}^*, \bs{\gamma}^*) z_{kjs} }{ v_{kk} } + o_p(1).
\]

\subsection{Proof of \eqref{claim-S3}: Bound of $S_3$}
\label{section-S3}

In this section we show \eqref{claim-S3}.
We calculate
\[
g^{ij}_{klh}=\frac{ \partial^3 \mu_{ij} (\bs{\beta}, \bs{\gamma} ) }{ \partial \beta_k \partial \beta_l \partial \beta_h }
\]
 according to the indices $k,l,h$ as follows.
We first observe that $g^{ij}_{klh}=0$ when $k,l,h\notin \{i,j\}$ since $\mu_{ij}(\bs{\beta}, \bs{\gamma} )$ only has the arguments $\beta_i$ and $\beta_j$ in regardless of other $\beta_k$'s ($k\neq i, j$).
So there are only two cases below in which  $g^{ij}_{klh}\neq 0$. \\
(1) Only two values among three indices $k, l, h$ are equal.
If $k=l=i; h=j$,
$g^{ij}_{klh}=-\sum_s Z_{ijs} \partial^3 \mu^{\prime\prime\prime}( \bar{\pi}_{ijs})$, where
$\bar{\pi}_{ijs}=\bar{\beta}^*_i - \bar{\beta}^*_j + Z_{ijs}^\top \bs{\gamma}^*$; for other cases, the results are similar.\\
(2) Three values are equal.
$g^{ij}_{klh}= \sum_s Z_{ijs}\partial^3 \mu^{\prime\prime\prime}( \bar{\pi}_{ijs})$ if $k=l=h=i$ or $k=l=h=j$. \\
Therefore, we have
\begin{eqnarray*}
S_3&=&\frac{1}{6\sqrt{N}}  \sum_{i <j} \sum_{k, l, h}
\frac{ \partial^3 \mu_{ij}(\bar{\bs{\beta}}^*, \bs{\gamma}^*) }{ \partial \beta_k \partial \beta_l \partial \beta_h }(\widehat{\beta}_k^* - \beta_k^*)(\widehat{\beta}_l^* - \beta_l^*)
(\widehat{\beta}_h^* - \beta_h^*) \\
& = & \frac{1}{6\sqrt{N}}  \sum_{i < j} \left \{
 3\frac{ \partial^3 \mu_{ij} (\bar{\beta}^*, \gamma^*)}{ \partial \beta_i^2 \partial \beta_j}(\widehat{\beta}_i^* - \beta_i^*)^2(\widehat{\beta}_j^* - \beta_j^*)
 + 3 \frac{ \partial^3 \mu_{ij} (\bar{\beta}^*, \gamma^*)}{ \partial \beta_j^2 \partial \beta_i}(\widehat{\beta}_j^* - \beta_j^*)^2(\widehat{\beta}_i^* - \beta_i^*)
 \right.
 \\
 &&
 \left.
 +\frac{ \partial^3 \mu_{ij} (\bar{\beta}^*, \gamma^*)}{ \partial \beta_i^3 }(\widehat{\beta}_i^* - \beta_i^*)^3
 +\frac{ \partial^3 \mu_{ij} (\bar{\beta}^*, \gamma^*)}{ \partial \beta_j^3 }(\widehat{\beta}_j^* - \beta_j^*)^3
 \right \}.
\end{eqnarray*}
By Lemma \ref{lemma-consistency-beta} and inequality \eqref{eq-mu-d-upper},  we have
\begin{eqnarray*}
\|S_3\|_\infty & \le &\frac{4}{3\sqrt{N}} \times \max_{i,j} \left\{|
\sum_s \mu^{\prime\prime\prime}( \bar{\pi}_{ijs}) | \| z_{ij} \|_\infty \right\}
\times \frac{n(n-1)}{2} \| \widehat{\beta}^* - \beta\|_\infty^3 \\
& = & O_p( \frac{ (\log n)^{3/2}}{n^{1/2} } ).
\end{eqnarray*}

\section{Proofs of claims \eqref{eqn-v-in-g} and
\eqref{eqn-V-V-gamma} for Theorem \ref{Theorem-central-a}
}
\label{section-theorem3}

Recall that $\widehat{\pi}_{ijk}=\widehat{\beta}_i-\widehat{\beta}_j+Z_{ijk}^\top \widehat{\bs{\gamma}}$, $\pi_{ijk}^*=\beta_i^* - \beta_j^* + Z_{ijk}^\top \bs{\gamma}^*$, $\mu_{ijk}^\prime =  \mu^\prime (\pi_{ijk}^*)$ and
\[
V=  \frac{ \partial H(\bs{\beta}^*, \bs{\gamma}^*)}{\partial \bs{\beta}^\top}, ~~
V_{\gamma\beta} =  \frac{ \partial H(\bs{\beta}^*, \bs{\gamma}^*)}{\partial \bs{\gamma}^\top}.
\]
A second order Taylor expansion gives
\begin{equation}
\label{equ-Taylor-exp}
\mu( \widehat{\pi}_{ijk} ) - \mu(\pi_{ijk}^*)
=  \mu_{ijk}^\prime (\widehat{\beta}_i-\beta_i^*)- \mu_{ijk}^\prime (\widehat{\beta}_j-\beta_j^*) +  \mu_{ijk}^\prime Z_{ijk}^\top ( \widehat{\bs{\gamma}} - \bs{\gamma}^*)
+ g_{ijk},
\end{equation}
where
\begin{equation}\label{eq-definition-gijk}
g_{ijk}= \frac{1}{2} \begin{pmatrix}
\widehat{\beta}_i-\beta_i^* \\
\widehat{\beta}_j-\beta_j^* \\
\widehat{\bs{\gamma}} - \bs{\gamma}^*
\end{pmatrix}^\top
\begin{pmatrix}
\mu^{\prime\prime}( \tilde{\pi}_{ijk} ) & -\mu^{\prime\prime}( \tilde{\pi}_{ijk} )
& \mu^{\prime\prime}( \tilde{\pi}_{ijk} ) Z_{ijk}^\top \\
-\mu^{\prime\prime}( \tilde{\pi}_{ijk} ) & \mu^{\prime\prime}( \tilde{\pi}_{ijk} )
& -\mu^{\prime\prime}( \tilde{\pi}_{ijk} ) Z_{ijk}^\top \\
\mu^{\prime\prime}( \tilde{\pi}_{ijk} ) Z_{ijk}
& \mu^{\prime\prime}( \tilde{\pi}_{ijk} ) Z_{ijk} & \mu^{\prime\prime}( \tilde{\pi}_{ijk} ) Z_{ijk}Z_{ijk}^\top
\end{pmatrix}
\begin{pmatrix}
\widehat{\beta}_i-\beta_i^* \\
\widehat{\beta}_j-\beta_j^* \\
\widehat{\bs{\gamma}} - \bs{\gamma}^*
\end{pmatrix},
\end{equation}
and
\begin{equation}\label{eq-definition-gij}
g_{ij} = \sum_{k=1}^{m_{ij}} g_{ijk},~~g_i= \sum_{j=0,j\neq i}^n g_{ij},~i=0, \ldots, n,~~\mathbf{g}=(g_1, \ldots, g_n)^\top.
\end{equation}
In the above equation, $\tilde{\pi}_{ijk}$ lies between $\pi_{ijk}^*$ and $\widehat{\pi}_{ijk}$.
We reproduce \eqref{eqn-v-in-g} and \eqref{eqn-V-V-gamma} as follows:
\begin{eqnarray}
\label{eqn-v-in-g-b}
\|V^{-1} \mathbf{g}\|_\infty & = & O_p(\frac{\log n}{n}),  \\
\label{eqn-V-V-gamma-b}
\| V^{-1} V_{\gamma \beta} ( \widehat{\bs{\gamma}} - \bs{\gamma}^*) \|_\infty & = & O_p\left( \frac{\log n}{n} \right).
\end{eqnarray}

\begin{proof}[Proof of \eqref{eqn-v-in-g-b} and \eqref{eqn-V-V-gamma-b}]
By calculations, $g_{ijk}$ can be simplified as
\begin{eqnarray*}
g_{ijk} & = &  \mu^{\prime\prime}( \tilde{\pi}_{ijk} ) [(\widehat{\beta}_i-\beta_i)^2 +  (\widehat{\beta}_j-\beta_j)^2 - 2(\widehat{\beta}_i-\beta_i)(\widehat{\beta}_j-\beta_j)]
\\
&& + 2\mu^{\prime\prime}( \tilde{\pi}_{ijk} ) Z_{ijk}^\top ( \widehat{\bs{\gamma}} - \bs{\gamma}) (\widehat{\beta}_i-\beta_i-(\widehat{\beta}_j-\beta_j)) +
( \widehat{\bs{\gamma}} - \bs{\gamma})^\top  \mu^{\prime\prime}( \tilde{\pi}_{ijk} ) Z_{ijk}Z_{ijk}^\top( \widehat{\bs{\gamma}} - \bs{\gamma}).
\end{eqnarray*}
Note that $\kappa_n := \max_{i,j} \| Z_{ij} \|_\infty<\infty$ and $|\mu^{\prime\prime}(\pi_{ijk})|\le 1/4$.
By Theorem 1, 
we have
\begin{equation}
\label{inequality-gij}
\begin{array}{rcl}
|g_{ijk}| & \le & m_* \| \widehat{\bs{\beta}} - \bs{\beta}^*\|_\infty^2 + \frac{m_*}{2}\| \widehat{\bs{\beta}} - \bs{\beta}^*\|_\infty \| \widehat{\bs{\gamma}}-\bs{\gamma}^* \|_1 \kappa + \frac{m_*}{4} \| \| \widehat{\bs{\gamma}}-\bs{\gamma}^* \|_1^2 \kappa^2 \\
& = & O_p\left( \frac{\log n}{n} \right) +
O_p \left(  \frac{ (\log n)^{3/2}}{ n^{3/2}} \right)
+ O\left(  \frac{  (\log n)^2}{n^2}   \right)\\
& = &  O_p\left( \frac{\log n}{n} \right),
\end{array}
\end{equation}
where $m_*=\max_{i,j} m_{ij}$ is a fixed constant.
Because $g_i$ is a sum of $\sum_{j\neq i} m_{ij}$ terms on $g_{ijk}$,
\begin{equation}\label{eq-g-upper-bound}
\max_{i=0,\ldots, n} |g_i| =  O_p\left( \log n \right).
\end{equation}

Note that $ v_{ii} \asymp n$ and
\[
(S \mathbf{g})_i =  \frac{g_i}{v_{ii}} + \frac{1}{v_{00}} \sum_{i=1}^{n} g_i.
\]
Let $\mathbf{1}$ be a vector of length $n-1$ with all entries $1$.
We first bound $\|V^{-1} g\|_\infty$.
Since $\sum_{i=1}^n (\E d_i - d_i) =0$ and
\begin{equation}\label{eq-d-formu}
  \mathbf{d} - \E \mathbf{d}= V(\widehat{\bs{\beta}} - \bs{\beta}^*) + V_{\gamma\beta} (\widehat{\bs{\gamma}}-\bs{\gamma}^*) + \mathbf{g},
\end{equation}
we have
\[
\mathbf{1}( \mathbf{d} - \E \mathbf{d}) = \mathbf{1} V(\widehat{\bs{\beta}} - \bs{\beta}^*) + \mathbf{1}V_{\gamma\beta} (\widehat{\bs{\gamma}}-\bs{\gamma}^*) + \mathbf{1} \mathbf{g} =  \E d_0 - d_0 ,
\]
such that
\begin{equation}\label{eq-sumg-a}
\sum_{i=1}^{n} g_i = (d_0 - \E d_0) - \frac{ \partial H_n}{\bs{\beta}^\top}  (\widehat{\bs{\beta}} - \bs{\beta}^*) - \frac{ \partial H_n}{\partial \bs{\gamma}^\top} (\widehat{\bs{\gamma}}-\bs{\gamma}^*).
\end{equation}
Recall that
\begin{equation}\label{eq-sumg-b}
d_0 -  \E d_0 = \frac{ \partial H_n(\bs{\beta}^*, \bs{\gamma}^*)}{\partial \bs{\beta}^\top} (\widehat{\bs{\beta}} - \bs{\beta})
+ \frac{ \partial H_n}{\partial \bs{\gamma}^\top} (\widehat{\bs{\gamma}}-\bs{\gamma}^*) + g_0.
\end{equation}
such that
\begin{equation}\label{eq-sumg-upper}
|\sum_{i=1}^{n}g_i| = |g_0| = O( \log n).
\end{equation}
By \eqref{eq-g-upper-bound} and \eqref{eq-sumg-upper}, we have
\begin{eqnarray}\label{eq-sg-upperbound}
|S \mathbf{g}|_\infty  \le  \max_i \frac{|g_i|}{v_{ii}} + \frac{1}{v_{00}} |\sum_{i=1}^{n} g_i|
 =  O_p\left(  \frac{\log n}{n} \right).
\end{eqnarray}
Recall that  $W=V^{-1} - S$.
By Lemma 1, we have
\begin{equation*}
 \|W \mathbf{g} \|_\infty \le  n\|W\|_{\max} \|\mathbf{g}\|_\infty
= O_p\left(  \frac{ \log n}{n} \right).
\end{equation*}
So
\begin{equation*}
\|V^{-1} \mathbf{g}\|_\infty \le \|S \mathbf{g}\|_\infty + \|W \mathbf{g}\|_\infty = o_p(n^{-1/2}).
\end{equation*}
This shows \eqref{eqn-v-in-g-b}.

Now we bound $V^{-1}V_{\gamma\beta}(\widehat{\bs{\gamma}}- \bs{\gamma}^*)\|_\infty$.
Let $V_{\gamma\beta,i}$ be the $i$th row of $V_{\gamma\beta}$. Then
$V_{\gamma\beta,i}=\sum_{j=0,j\neq 1}^n \sum_k \mu^\prime_{ijk} Z_{ijk}^\top$.
So we have
\[
\| V_{\gamma\beta}(\widehat{\bs{\gamma}}-\bs{\gamma}^*) \|_\infty \le  m_{\max} \frac{\kappa}{4} \|\widehat{\gamma}-\gamma^*\|_1
= O_p(   \log n ).
\]
Since $\sum_{i=0}^n H_i(\bs{\beta}^*, \bs{\gamma}^*)=0$, $\partial \sum_{i=0}^n H_i/\partial \gamma_k=0$, i.e.,
\[
\sum_{i=0}^n \sum_{j=0,j\neq i}^n \sum_k \mu_{ijk}^\prime( \pi_{ijk} ) Z_{ijk} =0,
\]
such that
\[
\sum_{i=1}^{n} (V_{\gamma\beta})_{ik}=\sum_{i=1}^{n} \sum_{j=0,j\neq i}^n \sum_k \mu_{ijk}^\prime( \pi_{ijk} ) Z_{ijk} = - \sum_{j\neq 0} \sum_k \mu_{0jk}^\prime( \pi_{0jk} ) Z_{0jk}.
\]
By Lemma \ref{lemma-Q-upper-bound} and Theorem \ref{theorem-central-b}, we have
\begin{eqnarray}
\nonumber
&&\|V^{-1}V_{\gamma\beta} (\widehat{\bs{\gamma}}-\bs{\gamma}^*)\|_\infty   \\
\nonumber
& \le & \max_i \frac{1}{v_{ii}}  \|V_{\gamma\beta}  (\widehat{\bs{\gamma}}-\bs{\gamma}^*)\|_\infty +  \frac{1}{v_{00}}  \sum_{k=1}^p |\sum_{i=1}^{n-1} (V_{\gamma\beta})_{ik}|(\widehat{\gamma}_k-\gamma^*_k)
+ n \|W\|_{\max} \|V_{\gamma\beta} (\widehat{\bs{\gamma}}-\bs{\gamma}^*)\|_\infty \\
\nonumber
& \le & O_p\left( \frac{\log n}{n} \right).
\end{eqnarray}
This shows \eqref{eqn-V-V-gamma-b}.
\end{proof}

\section{Approximate expression of $\Sigma$}
\label{section-approximate-sigma}

In this section, we give the approximate expression of $\Sigma$.

\begin{lemma}\label{lemma-approximation}
If $\bs{\beta}^*\le C_1$ and $\|\bs{\gamma}^*\|_2 \le C_2$ for some constants $C_1$ and $C_2$, then
\begin{equation}\label{eq-approximation-Sigma}
\frac{1}{N}\Sigma = \frac{1}{N}\sum_{i<j} \sum_k Z_{ijk}Z_{ijk}^\top \mu_{ijk}^\prime -
\frac{1}{N} \sum_{i=0}^n  \frac{  (  \sum_{j\neq i} \sum_k Z_{ijk}\mu_{ijk}^\prime)
(\sum_{j\neq i} \sum_k Z_{ijk}^\top \mu_{ijk}^\prime) }{v_{ii}} + o(1).
\end{equation}
\end{lemma}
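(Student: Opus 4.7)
The strategy is to substitute the Simons--Yao approximation $V^{-1} \approx S$ into the definition $\Sigma = \partial Q/\partial \bs{\gamma}^\top - V_{\gamma\beta} V^{-1} V_{\beta\gamma}$ and match terms. Direct differentiation yields
\[
\frac{\partial Q(\bs{\beta}^*,\bs{\gamma}^*)}{\partial \bs{\gamma}^\top} = \sum_{i<j}\sum_k Z_{ijk}Z_{ijk}^\top \mu^\prime(\pi_{ijk}^*),
\]
which is exactly the first term on the right-hand side of the lemma. The task then reduces to showing $V_{\gamma\beta} V^{-1} V_{\gamma\beta}^\top = \sum_{i=0}^n \tilde Z_i \tilde Z_i^\top/v_{ii} + o(N)$ in the entry-wise maximum norm.

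I would first verify $V_{\gamma\beta} = (\partial H/\partial \bs{\gamma}^\top)^\top$ using $Z_{ijk} + Z_{jik} = 0$ together with the evenness $\mu^\prime(-x) = \mu^\prime(x)$, so that the $i$-th column of $V_{\gamma\beta}$ equals $\tilde Z_i = \sum_{j\neq i}\sum_k Z_{ijk}\mu^\prime(\pi_{ijk}^*)$. The same antisymmetry gives the key identity $\sum_{i=0}^n \tilde Z_i = 0$, i.e., $\sum_{i=1}^n \tilde Z_i = -\tilde Z_0$, obtained because each edge $(i,j)$ and $(j,i)$ cancel pairwise. Plugging the explicit form $s_{ij} = \delta_{ij}/v_{ii} + 1/v_{00}$ from \eqref{definition-s} into $V_{\gamma\beta} S V_{\gamma\beta}^\top$ then produces
\[
V_{\gamma\beta} S V_{\gamma\beta}^\top = \sum_{i=1}^n \frac{\tilde Z_i \tilde Z_i^\top}{v_{ii}} + \frac{1}{v_{00}}\Big(\sum_{i=1}^n \tilde Z_i\Big)\Big(\sum_{j=1}^n \tilde Z_j^\top\Big) = \sum_{i=0}^n \frac{\tilde Z_i \tilde Z_i^\top}{v_{ii}},
\]
where the $i=0$ term arises from the $1/v_{00}$ contribution via the identity above. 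This matches the second term exactly, so the identification of the two leading terms is algebraic.

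The delicate step is to control the residual $R := V_{\gamma\beta}(V^{-1}-S)V_{\gamma\beta}^\top$ in max norm by $o(N)$. The Simons--Yao bound \eqref{O-upperbound}, valid because Condition \ref{condi-para} keeps $V \in \mathcal{L}_n(b_0,b_1)$ with constant $b_0,b_1$, yields $\|V^{-1}-S\|_{\max} = O(1/n^2)$, while boundedness of $Z_{ijk}$ and $\mu^\prime$ yields $\|V_{\gamma\beta}\|_{\max} = O(n)$. A purely entry-wise estimate gives only $\|R\|_{\max} = O(n^2)$, i.e.\ $O(1)$ after dividing by $N$, which is insufficient. To recover the extra factor of $n$, I would decompose $V_{\gamma\beta} = -\tilde Z_0 \mathbf{1}_n^\top/n + V_{\gamma\beta}^\perp$, with $V_{\gamma\beta}^\perp$ having columns summing to zero. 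Expand $R$ as the four cross terms generated by this decomposition; the parallel--parallel term is controlled by $\mathbf{1}^\top W \mathbf{1}$ (a single sum of $O(n^2)$ entries of size $O(1/n^2)$, hence $O(1)$) multiplied by the rank-one matrix $\tilde Z_0 \tilde Z_0^\top/n^2$ of order $O(1)$, while on the orthogonal complement $W$ acts with operator norm $O(1/n)$ (because the rank-one correction $(1/v_{00})\mathbf{1}\mathbf{1}^\top$ built into $S$ absorbs the dominant rank-one component of $V^{-1}$). Each cross term is then $O(n)$ in max norm, so $\|R\|_{\max}/N = O(1/n) = o(1)$.

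The main obstacle is precisely this last bound: the Simons--Yao $\|\cdot\|_{\max}$ estimate is too crude by a factor of $n$, and the remedy lies in separating the $\mathbf{1}_n$-component of $V_{\gamma\beta}$ (where $S$ is designed to absorb the dominant direction of $V^{-1}$) from its orthogonal part (where $W$ is small in operator norm). A sanity check in the symmetric case $\mu^\prime_{ijk} \equiv 1/4$ with $Z_{ijk}=\pm 1$ gives $V_{\gamma\beta}WV_{\gamma\beta}^\top = -(n+2)/12$, confirming that the true order is $O(n)$ rather than the naive $O(n^2)$, and so $\|R\|_{\max}/N$ is of order $1/n$. The general case follows by the same projection bookkeeping applied to each coordinate of the $p\times p$ matrix.
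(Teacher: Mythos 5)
Your identification of the two leading terms is correct and is essentially the paper's route, executed more carefully: the paper also differentiates $Q$ to get $\sum_{i<j}\sum_k Z_{ijk}Z_{ijk}^\top\mu'(\pi_{ijk}^*)$, substitutes $S$ for $V^{-1}$, and reads off $V_{\gamma\beta}SV_{\gamma\beta}^\top$, but it states the result of that substitution with $\sum_{i=1}^n$ and then writes $\sum_{i=0}^n$ in the conclusion without comment; you supply the missing link, namely $\sum_{i=0}^n\tilde{Z}_i=0$ (from $Z_{ijk}=-Z_{jik}$ and $\mu'(-x)=\mu'(x)$), which converts the $1/v_{00}$ contribution of $S$ into the $i=0$ summand. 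That algebra is right.

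The gap is in your control of the residual $R=V_{\gamma\beta}WV_{\gamma\beta}^\top$, specifically the perp--perp term. You correctly diagnose that the naive entry-wise estimate gives only $\|R\|_{\max}=O(n^2)=O(N)$, which is insufficient (the paper's own one-line estimate of this term is the same crude entry-wise bound, asserted to give $O(n)$, so you have identified a genuine weak point rather than invented one). But your repair does not close the arithmetic. From the max-norm bound \eqref{O-upperbound} the best available operator-norm bound is $\|W\|_2\le n\|W\|_{\max}=O(1/n)$, and this already holds on all of $\mathbb{R}^n$, not merely on $\mathbf{1}^\perp$; meanwhile the rows of $V_{\gamma\beta}^\perp$ still have $n$ entries each of size $O(n)$, hence $\ell_2$-norm $O(n^{3/2})$. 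The term $V_{\gamma\beta}^\perp W(V_{\gamma\beta}^\perp)^\top$ is therefore bounded only by $O(1/n)\cdot O(n^{3/2})\cdot O(n^{3/2})=O(n^2)$, not the $O(n)$ you claim, so after dividing by $N$ you still get $O(1)$ rather than $o(1)$. To make your decomposition work you would need the strictly stronger statement that $\|W\|_2=O(1/n^2)$ on $\mathbf{1}^\perp$ — which does hold in your symmetric sanity-check example, where $V^{-1}$ and $\mathrm{diag}(1/v_{ii})$ agree to order $n^{-2}$ on $\mathbf{1}^\perp$, but which does not follow from the entry-wise Simons--Yao bound and would require a separate spectral argument about $V$. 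The sanity check confirms the target order of $R$ but does not substitute for that missing estimate, so the key $o(1)$ claim of the lemma remains unproved in your write-up.
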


\begin{proof}[Proof of Lemma \ref{lemma-approximation}]
By direct calculations, we have
\[
\mathrm{Cov}(Q, Q)= \sum_{i<j} \sum_k Z_{ijk} Z_{ijk}^\top \mathrm{Var}(a_{ijk})
= \sum_{i<j} \sum_k Z_{ijk}Z_{ijk}^\top \mu^\prime(\pi_{ijk}^*),
\]
and
\[
\mathrm{Cov}(Q, H)= V_{\gamma\beta}
= ( \sum_{j\neq 1} \sum_k Z_{1jk} \mu^\prime(\pi_{1jk}^*), \ldots,
\sum_{j\neq n} \sum_k Z_{njk} \mu^\prime(\pi_{njk}^*)).
\]
Because
\[
 \mathrm{Cov}( Q - V_{Q\beta} V^{-1} H) = \mathrm{Cov}( Q, Q) - 2\mathrm{Cov}( Q,  H ) V^{-1} V_{Q\beta}^\top
+ V_{Q\beta} V^{-1}\mathrm{Cov}(H, H)V^{-1} V_{Q\beta}^\top,
\]
we have
\[
\Sigma= \sum_{i<j} \sum_k Z_{ijk}Z_{ijk}^\top \mu^\prime(\pi_{ijk}^*) - V_{\gamma \beta}^{-1} V^{-1} V_{\gamma \beta}^\top.
\]
Recall that $W=V^{-1}-S$. Then,
\[
V_{\gamma \beta}^{-1} V^{-1} V_{\gamma \beta}^\top =  V_{\gamma \beta}^{-1} S  V_{\gamma \beta}^\top + V_{\gamma \beta}^{-1} W  V_{\gamma \beta}^\top.
\]
Recall that $\mu_{ijk}^\prime$ is a short notation of $\mu^\prime(\pi_{ijk}^*)$.
A direct calculation gives
\[
V_{\gamma \beta}^{-1} S  V_{\gamma \beta}^\top = \sum_{i=1}^n \frac{ (\sum_{j\neq i} \sum_k Z_{ijk}\mu^\prime_{ijk})(
\sum_{j\neq i} \sum_k Z_{ijk}^\top \mu_{ijk}^\prime) }{v_{ii}}.
\]
By \eqref{O-upperbound}, we have
\[
\|V_{\gamma \beta}^{-1} W  V_{\gamma \beta}^\top \|_{\max} \le \max_{i,j} \sum_{s,t}|V_{\gamma \beta,is} W_{st} V_{\gamma \beta,jt}|
\le O(\frac{1}{n^2}) \times O(n^3) = O(n).
\]
Then we have
\[
\frac{1}{N}\Sigma = \frac{1}{N}\sum_{i<j} \sum_k Z_{ijk}Z_{ijk}^\top \mu_{ijk}^\prime -
\frac{1}{N} \sum_{i=0}^n \frac{ (\sum_{j\neq i} \sum_k Z_{ijk}\mu_{ijk}^\prime)(\sum_{j\neq i} \sum_k Z_{ijk}^\top \mu_{ijk}^\prime) }{v_{ii}} + o(1).
\]
\end{proof}

\section{Proofs for Theorem \ref{section:extension}}
\label{sec-proof-theorem4}

In this section, we transform the merit parameter $\beta$ to $\theta$ by setting
\[
\theta_i = \beta_i - (\sum_{i=0}^n \beta_i )/(n+1),
\]
where the probability \eqref{model-b} under the covariate-Bradley-Terry model does not change.
If we show
\begin{equation}
\label{eq-error-theta}
\| \hat{\theta}_i - \theta_i \|_\infty = O\Big( \sqrt{ \frac{p_n \log n}{ (nq_n)} }  \Big),
\end{equation}
then we have
\begin{equation}
\label{eq-error-beta}
\| \hat{\beta}_i - \beta_i \|_\infty =  O\Big( \sqrt{ \frac{p_n \log n}{ (nq_n)} }  \Big.
\end{equation}
The claim is given in Lemma ???

Let $\mathcal{G}(n, q_n)$ denotes an Erd\"{o}s-R\'{e}nyi graph on $n$ nodes with connection probability $q_n$.
With some ambiguous of notation, we let $M=(m_{ij})$ be a realization of adjacency matrix from $\mathcal{G}(n, q_n)$, i.e., $M\sim \mathcal{G}(n,p)$.
If two subjects have comparisons, we assume that they are compared $L$ times for easy exposition.
Let $\mathcal{L}_M=D-M$ be the graph Laplacian of the adjacency matrix $M$, where
$D=\mathrm{diag}(m_1, \ldots, m_n)$ and $m_{i}=\sum_{j\in i}m_{ij}$.
The following lemma gives the lower and upper bounds for $\max_i m_i$ and $\min_i m_i$.
To simplify notation, we write $q$, instead of $q_n$.

\subsection{Some supported Lemmas}
\label{subsec:supported-lemma}

In this section, we present five supported lemmas that will be used in the proof of Lemma ???.

\begin{lemma}
\label{lem-con-M}
Suppose $q\geq 10c\log n/n$ with $c>1$.
Let $E_{n1}^\prime $ be the event
\begin{equation}
\label{definition-Fn1}
E_{n1}^\prime = \left\{ \frac{1}{2}nq\leq \min_{i\in[n]}\sum_{j\in[n]\backslash\{i\}}m_{ij} \leq \max_{i\in[n]}\sum_{j\in[n]\backslash\{i\}}m_{ij} \leq \frac{3}{2}nq
\right\}.
\end{equation}
Then, we have
\[
\mathbb{P}( E_{n1}^\prime ) \ge  1-2(n+1)/n^c.
\]
\end{lemma}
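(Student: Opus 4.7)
The plan is to reduce the statement to a standard Chernoff-type tail bound for each row sum and then apply a union bound over the $n+1$ nodes. Writing $D_i := \sum_{j \in [n]\setminus\{i\}} m_{ij}$, each $D_i$ is a sum of $n$ independent Bernoulli$(q)$ random variables (one for every potential opponent $j \neq i$), so $\E D_i = nq$ exactly. Thus the event $E_{n1}^\prime$ is precisely $\bigcap_{i} \{ (1/2)\E D_i \le D_i \le (3/2)\E D_i \}$, i.e., a simultaneous multiplicative deviation event by factor $1/2$ for every row of the adjacency matrix.

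The first step is therefore a classical Chernoff estimate. For $X \sim \mathrm{Binomial}(N, q)$ with $\mu = Nq$, I will invoke the one-sided bounds
\[
\P(X \ge (1+\delta)\mu) \le \exp\!\bigl(-\delta^2 \mu /(2+\delta)\bigr), \qquad
\P(X \le (1-\delta)\mu) \le \exp\!\bigl(-\delta^2 \mu/2\bigr),
\]
valid for $\delta \in (0,1)$. Taking $\delta = 1/2$ and $\mu = nq$ these yield, respectively, tail probabilities bounded by $\exp(-nq/10)$ and $\exp(-nq/8)$, so that $\P(|D_i - nq| > nq/2) \le 2\exp(-nq/10)$. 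Under the hypothesis $q \ge 10c\log n/n$ with $c > 1$, the exponent satisfies $nq/10 \ge c \log n$, and therefore each single-row failure probability is at most $2 n^{-c}$. This is exactly the place where the specific constant $10$ in the hypothesis is used, and it is what makes the proof go through with the stated failure bound rather than any weaker one.

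The second step is a union bound over the $n+1$ indices $i = 0, 1, \dots, n$:
\[
\P\bigl( (E_{n1}^\prime)^c \bigr) \le \sum_{i=0}^{n} \P\!\left( |D_i - nq| > \tfrac{1}{2} nq \right) \le 2(n+1) n^{-c},
\]
which is the claim. The only real bookkeeping item is to verify that the Chernoff exponent with $\delta = 1/2$ is at least $nq/10$ so that the $10$ in the hypothesis suffices; this follows immediately from $\delta^2/(2+\delta) = (1/4)/(5/2) = 1/10$ in the upper-tail bound (the lower tail gives the easier constant $1/8$). There is no substantive obstacle beyond this constant-tracking, since independence of $\{m_{ij}\}_{j \neq i}$ across $j$ is a defining property of the Erd\H{o}s--R\'enyi model and independence across rows is not required for the union bound.
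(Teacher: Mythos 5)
Your proposal is correct and follows essentially the same route as the paper: per-row Chernoff bounds with multiplicative deviation $\delta=1/2$ (yielding the exponents $nq/8$ for the lower tail and $nq/10$ for the upper tail, with the hypothesis $q\ge 10c\log n/n$ absorbing the worse constant), followed by a union bound over the $n+1$ subjects to obtain $1-2(n+1)/n^{c}$.
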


\begin{proof}[Proof of Lemma \ref{lem-con-M}]
Note that $\sum_{j\neq i} m_{ij}$ is the sum of $n$ independent and identically distributed (i.i.d.)
Bernoulli random variables, $Ber(q)$.
With the use of Chernoff bound \cite{chernoff1952} and the union bound, we have
\begin{eqnarray*}
&&\P\left( \min_{i=0, \ldots, n} \sum_{j=0,j\neq i}^n m_{ij} < (1-\tfrac{1}{2}) nq \right) \\
&\le & \sum_{i=0}^n \P\left( \sum_{j=0,j\neq i}^n m_{ij} < (1-\tfrac{1}{2}) nq \right) \\
& \le &  (n+1) \exp\left( - \tfrac{1}{8}nq \right).
\end{eqnarray*}
If $q\ge 8c\log n/n$, then the term of the above right-hand side is bounded above by $ (n+1)/n^c$ such that
\[
\P\left( \min_{i=0, \ldots, n} \sum_{j=0,j\neq i}^n m_{ij} \ge \tfrac{1}{2}nq \right) \ge 1- \frac{ (n+1)}{ n^c }.
\]
Analogously, with the use of Chernoff bound (\cite{chernoff1952}), we have
\begin{eqnarray*}
&&\P\left( \max_{i=0, \ldots, n} \sum_{j=0,j\neq i}^n m_{ij} > \tfrac{ 3}{2} nq \right) \\
& \le & \sum_{i=0}^n \P\left( \sum_{j=0,j\neq i}^n m_{ij} > \tfrac{3}{2}nq \right) \\
& \le & (n+1) \exp( - \tfrac{1}{10}nq ).
\end{eqnarray*}
If $q \ge 10c\log n/n$, then the term of the above right-hand side is bounded above by $(n+1)/n^c$ such that
\[
\P\left( \max_{i=0, \ldots, n} \sum_{j=0,j\neq i}^n m_{ij} \le \tfrac{ 3}{2} nq \right) \ge 1 - \frac{(n+1)}{n^c}.
\]
It completes the proof.
\end{proof}

\begin{lemma}
\label{lem-Laplacian-con}
Recall that $\mathcal{L}_M$ denotes the graph Laplacian of $M$.
Then, we have
\begin{eqnarray*}
\lambda_{\min,\perp}(\mathcal{L}_M) & = & \min_{v\neq 0: \mathbf{1}_{n+1}^\top v=0}\frac{v^\top \mathcal{L}_M v}{\|v\|^2_2} \geq \min_{i=0, \ldots, n} \sum_{j=0,j\neq i}^n m_{ij}, \\
\lambda_{\max}(\mathcal{L}_M) & = & \max_{v\neq 0}\frac{v^\top \mathcal{L}_M v}{\|v\|^2} \leq 2\max_{i=0, \ldots, n} \sum_{j=0,j\neq i}^n m_{ij}.
\end{eqnarray*}
\end{lemma}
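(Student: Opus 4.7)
The strategy is to treat both claims as Rayleigh-quotient estimates for the symmetric PSD matrix $\mathcal{L}_{M}=D-M$. For every $v\in\mathbb{R}^{n+1}$ one has the standard Dirichlet-form identity
\begin{equation*}
v^{\top}\mathcal{L}_{M}v \;=\; \sum_{i=0}^{n} m_{i} v_{i}^{2} - \sum_{i\neq j} m_{ij} v_{i}v_{j} \;=\; \frac{1}{2}\sum_{i\neq j} m_{ij}(v_{i}-v_{j})^{2},
\end{equation*}
and both inequalities reduce to bounding this quantity by the appropriate multiple of $\|v\|_{2}^{2}$ on the correct subspace, then maximizing (resp.\ minimizing) the Rayleigh quotient.

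For the upper bound on $\lambda_{\max}(\mathcal{L}_{M})$, I will apply $(v_{i}-v_{j})^{2}\le 2(v_{i}^{2}+v_{j}^{2})$ pointwise in the Dirichlet form, which collapses the edge sum into a weighted diagonal sum:
\begin{equation*}
v^{\top}\mathcal{L}_{M}v \;\le\; \sum_{i\neq j} m_{ij}(v_{i}^{2}+v_{j}^{2}) \;=\; 2\sum_{i=0}^{n} m_{i}v_{i}^{2} \;\le\; 2\bigl(\max_{i} m_{i}\bigr)\|v\|_{2}^{2}.
\end{equation*}
Dividing by $\|v\|_{2}^{2}$ and taking the supremum over $v\neq 0$ yields the stated bound.

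For the lower bound on $\lambda_{\min,\perp}(\mathcal{L}_{M})$, the plan is to exploit the orthogonality constraint $\mathbf{1}_{n+1}^{\top}v=0$, which upon squaring gives $\sum_{i\neq j}v_{i}v_{j}=-\|v\|_{2}^{2}$. This converts a uniform off-diagonal contribution into a positive multiple of $\|v\|_{2}^{2}$. I will start from the decomposition $v^{\top}\mathcal{L}_{M}v = \sum_{i}m_{i}v_{i}^{2} - \sum_{i\neq j}m_{ij}v_{i}v_{j}$ and observe
\begin{equation*}
\sum_{i}m_{i}v_{i}^{2}\;\ge\;(\min_{i}m_{i})\|v\|_{2}^{2},\qquad -(\min_{i\neq j}m_{ij})\sum_{i\neq j}v_{i}v_{j}\;=\;(\min_{i\neq j}m_{ij})\|v\|_{2}^{2}\ge 0,
\end{equation*}
so the ``baseline'' part already reaches $(\min_{i}m_{i})\|v\|_{2}^{2}$. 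The remaining task is to show that the fluctuation
\begin{equation*}
R(v) \;:=\; -\sum_{i\neq j}\bigl(m_{ij}-\min_{k\neq \ell}m_{k\ell}\bigr)v_{i}v_{j}
\end{equation*}
is nonnegative whenever $\mathbf{1}_{n+1}^{\top}v=0$, via a sum-of-squares rewriting that uses only $m_{ij}\ge 0$ and the orthogonality.

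The hard part will be precisely this last sum-of-squares verification for $R(v)$: it is where the substantive content of the stated inequality lives, and the natural pairing of cross terms into Dirichlet-style squares does not terminate cleanly without further bookkeeping (indeed, for very non-uniform weightings the algebraic connectivity can be strictly smaller than the minimum weighted degree, so the combinatorial structure of $M$ must enter). I expect to close this step by grouping edges into a spanning subgraph on $\{0,\dots,n\}$ along which the orthogonality $\mathbf{1}^{\top}v=0$ can be iteratively applied to absorb each cross term $-m_{ij}v_{i}v_{j}$ into a convex combination of $m_{ij}(v_{i}-v_{j})^{2}$ pieces, paying for the residual with the slack $\sum_{i}(m_{i}-\min_{k}m_{k})v_{i}^{2}\ge 0$; the bookkeeping will be the main technical obstacle.
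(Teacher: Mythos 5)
Your argument for the second inequality is fine: the Dirichlet-form identity $v^\top\mathcal{L}_M v=\tfrac12\sum_{i\neq j}m_{ij}(v_i-v_j)^2$ together with $(v_i-v_j)^2\le 2(v_i^2+v_j^2)$ gives $v^\top\mathcal{L}_M v\le 2\sum_i m_i v_i^2\le 2(\max_i m_i)\|v\|_2^2$, which is the standard proof of $\lambda_{\max}(\mathcal{L}_M)\le 2\max_i m_i$. The lower bound, however, cannot be closed the way you propose, because the inequality $\lambda_{\min,\perp}(\mathcal{L}_M)\ge\min_i\sum_{j\neq i}m_{ij}$ is \emph{false} as a deterministic statement about an arbitrary $0/1$ comparison matrix. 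Take $M$ the adjacency matrix of the path on four vertices: its Laplacian spectrum is $\{0,\,2-\sqrt{2},\,2,\,2+\sqrt{2}\}$, so $\lambda_{\min,\perp}=2-\sqrt{2}\approx 0.586$ while the minimum degree is $1$; equivalently, your residual $R(v)=-v^\top M v$ is strictly negative at the Fiedler vector even though $\mathbf{1}^\top v=0$. This is exactly the phenomenon you flag in your last paragraph (Fiedler's theorem gives $\lambda_{\min,\perp}\le\kappa(G)\le\min_i m_i$ for non-complete graphs, i.e.\ the inequality generically points the other way), so no amount of sum-of-squares bookkeeping for $R(v)$ can succeed; the step is not merely hard, it is unprovable.

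For comparison, the paper offers no argument either --- it dismisses the lemma as ``a standard property of graph Laplacian'' with a citation to Tropp (2015) --- and the standard fact that citation actually supports is probabilistic, not deterministic. For $M\sim\mathcal{G}(n,q)$ one has $\E\mathcal{L}_M=q\left((n+1)I-\mathbf{1}\mathbf{1}^\top\right)$, whose restriction to $\mathbf{1}^\perp$ is $(n+1)q\,I$, and matrix Bernstein gives $\|\mathcal{L}_M-\E\mathcal{L}_M\|_2=O(\sqrt{nq\log n})$ with probability $1-O(n^{-c})$; hence $\lambda_{\min,\perp}(\mathcal{L}_M)\ge (n+1)q-O(\sqrt{nq\log n})\ge\tfrac12 nq$ once $q\ge c_0\log n/n$. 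That high-probability bound, which agrees with $\min_i\sum_{j\neq i}m_{ij}$ only up to constants on the event $E_{n1}'$, is all that the downstream bound on $\lambda_{\min,\perp}(H(\theta))$ requires. The repair is therefore to replace your algebraic lower bound by this concentration argument (and to restate the first display of the lemma as a high-probability inequality rather than one holding for every realization of $M$).
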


\begin{proof}[Proof of Lemma \ref{lem-Laplacian-con}]
The above conclusion is a standard property of graph Laplacian \cite{Tropp2015}.
\end{proof}

\begin{lemma}\label{lem-bern-sum}
Suppose $q\geq c_0(\log n)/n$ for some sufficiently large $c_0>0$.
Let $E_{n2}^\prime$ and $E_{n3}^\prime$ be the events
\begin{equation}
\label{definition-Fn2}
E_{n2}^\prime = \left\{ \max_{i\in[n]}\sum_{j\in[n]\backslash\{i\}}w_{ij}^2(m_{ij}-q)^2 \leq c_1 nq \max_{i,j\in[n]}|w_{ij}|^2 \right\},
\end{equation}
and
\begin{equation}
\label{definition-Fn3}
E_{n3}^\prime = \left\{ \max_{i\in[n]}\left(\sum_{j\in[n]\backslash\{i\}}w_{ij}(m_{ij}-q)\right)^2\leq c_1 (\log n)^2\max_{i,j\in[n]}w_{ij}^2 + c_1 q\log n\max_{i\in[n]}\sum_{j\in[n]}w_{ij}^2 \right\}.
\end{equation}
For any fixed $\{w_{ij}\}$,
for some constant $C>0$,
\[
\mathbb{P}( E_{n2}^\prime ) \ge 1-O(n^{-10}), \quad \mathbb{P}( E_{n3}^\prime ) \ge 1-O(n^{-10}).
\]
where $c_1\ge 20$.
\end{lemma}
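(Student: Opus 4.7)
The plan is to prove each of the two concentration statements separately via Bernstein-type tail bounds on sums of independent bounded random variables, followed by a union bound over $i \in [n]$. Both sums involve only $\{m_{ij}\}_{j \neq i}$ for a fixed row index $i$, and since the $\{m_{ij}\}$ are independent $\mathrm{Bernoulli}(q)$ random variables, each fixed-$i$ sum is a sum of independent centered bounded terms. The constants in the tail bounds are sized so that, after taking a union bound, the overall failure probability is $O(n^{-10})$.

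For $E_{n3}^\prime$, I will apply Bernstein's inequality to $S_i := \sum_{j \neq i} w_{ij}(m_{ij} - q)$. The summands are independent and centered, each bounded in absolute value by $b_i := \max_j |w_{ij}|$, with $\mathrm{Var}(S_i) = q(1-q)\sum_{j} w_{ij}^2 \le q \sum_j w_{ij}^2$. Bernstein's inequality yields
\begin{equation*}
\mathbb{P}\bigl(|S_i| \ge t\bigr) \;\le\; 2\exp\!\left(-\frac{t^2/2}{q \sum_j w_{ij}^2 + b_i t/3}\right).
\end{equation*}
Choosing $t = t_i$ with $t_i^2 \ge C\bigl((\log n)^2 b_i^2 + q\log n \sum_j w_{ij}^2\bigr)$ for a sufficiently large absolute constant $C$ (which one may take to be any number $\ge 20$, matching the stated $c_1$) gives $\mathbb{P}(|S_i| \ge t_i) \le 2 n^{-11}$. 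Taking the union over $i \in [n]$ (and absorbing the further maximum over $i$ of $\sum_j w_{ij}^2$ and $b_i$ into the statement) yields the claim for $E_{n3}^\prime$ with probability at least $1 - O(n^{-10})$.

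For $E_{n2}^\prime$, I would avoid an extra Bernstein argument by reducing to Lemma \ref{lem-con-M}. Pulling out $\max_{j} w_{ij}^2 \le W_* := \max_{i,j} w_{ij}^2$, we have
\begin{equation*}
\sum_{j \neq i} w_{ij}^2 (m_{ij}-q)^2 \;\le\; W_* \sum_{j \neq i} (m_{ij}-q)^2.
\end{equation*}
Since $m_{ij} \in \{0,1\}$, a direct computation gives $\sum_{j\neq i}(m_{ij}-q)^2 = nq^2 + (1-2q)\sum_{j\neq i} m_{ij}$. On the event $E_{n1}^\prime$ of Lemma \ref{lem-con-M}, which has probability at least $1 - O(n^{-10})$ provided $c_0$ is large enough, we have $\sum_{j \neq i} m_{ij} \le \tfrac{3}{2}nq$ uniformly in $i$. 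Hence, uniformly in $i$,
\begin{equation*}
\sum_{j \neq i}(m_{ij}-q)^2 \;\le\; nq^2 + \tfrac{3}{2} nq \;\le\; \tfrac{5}{2} nq,
\end{equation*}
which gives $\max_i \sum_{j\neq i} w_{ij}^2(m_{ij}-q)^2 \le \tfrac{5}{2} \, nq \, W_*$, so any $c_1 \ge 3$ (and in particular $c_1 \ge 20$) works.

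The only real subtlety is calibrating the Bernstein tail parameter $t_i$ so that the quadratic bound $t_i^2$ produces exactly the two terms appearing in $E_{n3}^\prime$. This is routine once one separates the ``small deviation'' regime (where the variance term $q \sum_j w_{ij}^2$ dominates) from the ``large deviation'' regime (where the bounded term $b_i^2 (\log n)^2$ dominates); the prefactor $c_1 \ge 20$ in the statement is comfortably larger than the absolute constant coming out of Bernstein's inequality needed to beat $n^{-11}$ per row. No step requires any structural property of $\{w_{ij}\}$ beyond determinism, so the argument goes through for the intended application in which the weights themselves may depend on $n$.
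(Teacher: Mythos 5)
Your proposal is correct, and for $E_{n3}^\prime$ it coincides with the paper's argument: both apply Bernstein's inequality to $S_i=\sum_{j\neq i}w_{ij}(m_{ij}-q)$ with variance proxy $q(1-q)\sum_j w_{ij}^2$ and envelope $\max_j|w_{ij}|$, then square the resulting deviation bound and take a union bound over $i$. For $E_{n2}^\prime$, however, you take a genuinely different and more elementary route. The paper treats $\sum_j w_{ij}^2\{(m_{ij}-q)^2-\E(m_{ij}-q)^2\}$ as a new sum of independent centered variables and applies Bernstein a second time, which forces it to control the fourth moment $\E(m_{ij}-q)^4$ via the function $f(p)=p^3+(1-p)^3$ before adding back the mean $\sum_j w_{ij}^2 q(1-q)\le nq\max w_{ij}^2$. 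You instead pull out $\max_{i,j}w_{ij}^2$ and exploit the Bernoulli identity $(m_{ij}-q)^2=q^2+(1-2q)m_{ij}$, reducing the claim to the degree bound $\max_i\sum_{j\neq i}m_{ij}\le \tfrac32 nq$ already established in Lemma \ref{lem-con-M}; this avoids the second Bernstein application and the moment computation entirely, at the cost of making $E_{n2}^\prime$ conditional on $E_{n1}^\prime$ (harmless, since that event also holds with probability $1-O(n^{-10})$ once $c_0$ is large enough) and of discarding the weights inside the sum immediately --- which loses nothing here because the target bound only involves $\max_{i,j}|w_{ij}|^2$ anyway. One small caution on both your write-up and the paper's: the absolute constant that actually emerges from squaring the Bernstein deviation at the $n^{-11}$ level is larger than $20$, so the prefactor $c_1\ge 20$ should be read as ``some sufficiently large universal constant'' rather than a sharp value; your constant for $E_{n2}^\prime$ ($5/2$, after handling the sign of $1-2q$ as you do) is genuinely better than the paper's.
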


\begin{proof}[Proof of Lemma \ref{lem-bern-sum}]
Let $f(p)=p^3 + (1-p)^3$. Because $f^{\prime\prime}(p)=6$, $f(p)$ is a strictly convex function and is also symmetric on the interval $[0,1]$.
Therefore,
\[
\frac{1}{4} \le \min_{p\in [0,1] } f(p) \le \max_{p\in [0,1] } f(p)\le 1.
\]
This leads to
\[
\sum_{j\in[n]\backslash\{i\}} w_{ij}^4 \E(m_{ij}-q)^4 \le  \sum_{j\in[n]\backslash\{i\}} w_{ij}^4 q(1-q) \left( (1-q)^3 + q^3 \right) \le  \sum_{j\in[n]\backslash\{i\}} w_{ij}^4 q(1-q).
\]
By Bernstern's inequality, with probability at least $1-2n^{-a}$ with $a>0$, we have
\begin{eqnarray*}
&&\left| \sum_{j\in[n]\backslash\{i\}}w_{ij}^2\left\{ (m_{ij}-q)^2 -  \E(m_{ij}-q)^2 \right\}  \right | \\
& \leq & \max_{i,j\in[n]}|w_{ij}|^2 \sqrt{ 2a q(1-q)n\log n } + \frac{2a}{3} \log n \times \max_{i,j\in[n]}|w_{ij}|^2 \\
& \leq & \max_{i,j\in[n]}|w_{ij}|^2 \sqrt{ a n\log n } + \frac{2a}{3} \log n \times \max_{i,j\in[n]}|w_{ij}|^2 .
\end{eqnarray*}
Therefore, with probability at least $1-2(n+1)n^{-a}$, we have
\[
 \max_{i\in[n]}\sum_{j\in[n]\backslash\{i\}}w_{ij}^2 (m_{ij}-q)^2    \le
 \max_{i,j\in[n]}|w_{ij}|^2 \left( \sqrt{ 2a nq\log n } + \frac{2a}{3} \log n  + n q(1-q) \right).
\]
By setting $q\ge c_0\log n/n$ with $c_0\ge 10$ and $a=11$, with probability $1-O(n^{-10})$, we have
\[
\max_{i\in[n]}\sum_{j\in[n]\backslash\{i\}}w_{ij}^2 (m_{ij}-q)^2    \le  c_1 nq \max_{i,j\in[n]}|w_{ij}|^2,
\]
where $c_1 \ge 20$.

Analogously, by using Bernstern's inequality again, with probability at least $1-2n^{-a}$, we have
\begin{eqnarray*}
&&\left| \sum_{ j\in[n] \backslash\{i\} } w_{ij} (m_{ij}-q)   \right| \\
& \leq &  \sqrt{ 2a q(1-q)\log n \max_{j\in[n] \backslash\{i\}}w_{ij}^2} + \frac{2a}{3} \log n \times \max_{i,j\in[n]}|w_{ij}|.
\end{eqnarray*}
Therefore, with probability at least $1-O(n^{-10})$, we have
\[
 \max_{i\in[n]}\left( \sum_{j\in[n]\backslash\{i\}}w_{ij} (m_{ij}-q) \right)^2    \le
 c_1 q\log n \times \max_i  \sum_{j\neq i}w_{ij}^2  + c_1 (\log n)^2 \times \max_{i,j} w_{ij}^2.
\]
\end{proof}

The lemma below gives a lower bound for $\lambda_{\min,\perp}(H(\theta))$.

\begin{lemma}\label{lem:hessian-H}
Suppose that $q\geq c_0(\log n)/n$ and $\max_i \theta_i - \min_i \theta_i\leq \Delta$.
Let $E_{n4}^\prime$ be the event
\begin{equation}
\label{definition-Fn4}
E_{n4}^\prime = \left\{ \lambda_{\min,\perp}(H(\theta)) \geq \frac{1}{8}nqe^{-\Delta} \right\}.
\end{equation}
Then,  we have
\[
\mathbb{P}( E_{n4}^\prime ) \ge 1-O(n^{-10}).
\]
\end{lemma}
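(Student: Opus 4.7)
The plan is to recognize $H(\theta)$ as a weighted graph Laplacian and then separate the combinatorial content (concentration of the unweighted Erdős--Rényi Laplacian $\mathcal{L}_M$) from the analytic content (a pointwise lower bound on $\mu'(\pi_{ij})$).

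First, I will write out the Hessian explicitly. For the reparametrised model,
\[
H(\theta)_{ij} = -\,m_{ij}\,\mu'(\pi_{ij}),\qquad i\neq j,\qquad H(\theta)_{ii} = \sum_{j\neq i} m_{ij}\,\mu'(\pi_{ij}),
\]
so that for any $v\in\mathbb{R}^{n+1}$,
\[
v^\top H(\theta) v \;=\; \sum_{i<j} m_{ij}\,\mu'(\pi_{ij})\,(v_i-v_j)^2.
\]
This is the quadratic form of a positively weighted Laplacian, with edge weights $m_{ij}\,\mu'(\pi_{ij})$.

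Second, I will lower-bound the analytic weight $\mu'(\pi_{ij})$ uniformly. Since $\mu'(x)=1/(4\cosh^2(x/2))$ and $\cosh(x/2)\le e^{|x|/2}$, one has $\mu'(x)\ge e^{-|x|}/4$. Under the assumption $\max_i\theta_i-\min_i\theta_i\le\Delta$ and using $\sup_{i,j,k}\|Z_{ijk}\|_2\le\kappa$ together with the boundedness of $\|\gamma\|_2$, the argument satisfies $|\pi_{ij}|\le\Delta+O(1)$, so after absorbing the constant into the prefactor I get
\[
\mu'(\pi_{ij}) \;\ge\; c_*\,e^{-\Delta}
\]
for a universal constant $c_*>0$. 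Consequently, for every $v$,
\[
v^\top H(\theta) v \;\ge\; c_*\,e^{-\Delta}\sum_{i<j} m_{ij}(v_i-v_j)^2 \;=\; c_*\,e^{-\Delta}\,v^\top \mathcal{L}_M v.
\]

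Third, I will transfer the combinatorial lower bound. By Lemma~\ref{lem-Laplacian-con} (conditioned on $E_{n1}'$) together with Lemma~\ref{lem-con-M}, on an event of probability at least $1-O(n^{-10})$ we have
\[
\lambda_{\min,\perp}(\mathcal{L}_M)\;\ge\;\min_i\sum_{j\neq i}m_{ij}\;\ge\;\tfrac{1}{2}nq,
\]
provided $q\ge c_0\log n/n$ with $c_0$ large enough. Combining with the bound above gives
\[
\lambda_{\min,\perp}(H(\theta))\;\ge\;\tfrac{1}{2}c_*\,nq\,e^{-\Delta},
\]
and after choosing $c_0$ so that $\tfrac{1}{2}c_*\ge \tfrac{1}{8}$ (or adjusting the constant in the exponent), the stated bound $\tfrac{1}{8}nq\,e^{-\Delta}$ follows on an event of probability at least $1-O(n^{-10})$.

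The main obstacle is really just bookkeeping on constants: I need to make sure the additive $O(1)$ coming from $Z_{ij}^\top\gamma$ is absorbed cleanly into the prefactor so that the exponent is exactly $-\Delta$ rather than $-(\Delta+\mathrm{const})$, and to verify that the high-probability events controlling $\min_i\sum_j m_{ij}$ and (if needed) the algebraic connectivity of the Erdős--Rényi graph both hold simultaneously with the claimed $1-O(n^{-10})$ rate. Everything else is a one-line sandwich between the weighted and unweighted Laplacians.
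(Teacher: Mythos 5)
Your proof is correct and follows essentially the same route as the paper: write $v^\top H(\theta)v=\sum_{i<j}m_{ij}\mu'(\pi_{ij})(v_i-v_j)^2$, use $\mu'(x)\ge \tfrac14 e^{-|x|}$, and combine Lemmas \ref{lem-con-M} and \ref{lem-Laplacian-con} to lower-bound $\lambda_{\min,\perp}(\mathcal{L}_M)$ by $\tfrac12 nq$. You are in fact slightly more careful than the paper in flagging the $O(1)$ contribution of $Z_{ij}^\top\bs{\gamma}$ to $|\pi_{ij}|$ (the paper silently treats $|\pi_{ij}|\le\Delta$), though note that this extra constant cannot be repaired by enlarging $c_0$; it only changes the prefactor $\tfrac18$ by a bounded factor.
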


\begin{proof}[Proof of Lemma \ref{lem:hessian-H}]
For any $v\in\mathbb{R}^n$ such that $\mathbf{1}_{n+1}^\top v=0$,
\[
v^\top H(\theta) v=\sum_{0\leq i<j\leq n} m_{ij}\mu^{\prime}(\pi_{ij}) (v_i-v_j)^2.
\]
Because
\[
4e^{|x|} \ge e^{-|x|}( 1 + 2e^{|x|} + e^{2|x|} ),
\]
we have
\[
\mu^{\prime}( x ) \geq \frac{1}{4}e^{-|x|},
\]
such that
\[
\lambda_{\min,\perp}(H(\theta))\geq \frac{1}{4}e^{-\Delta} \cdot \lambda_{\min,\perp}(\mathcal{L}_M).
\]
By Lemmas \ref{lem-con-M} and \ref{lem-Laplacian-con}, we obtain the desired result.
\end{proof}

We give a few concentration inequalities.

\begin{lemma}\label{lem-Mij-sum3}
Suppose $\max_i \theta_i -\min_i \theta_i = O(1)$ and
$q\geq c_0(\log n)/n$ for some sufficiently large $c_0>0$.
Let $E_{n5}^\prime$, $E_{n6}^\prime$ and $E_{n7}^\prime$ be the events
\begin{eqnarray}
\label{definition-Fn5}
E_{n5}^\prime & = & \sum_{i=1}^n\left(\sum_{j\in[n]\backslash\{i\}}m_{ij}(\bar{a}_{ij}-\mu(\pi_{ij}^*) )\right)^2\leq C\frac{n^2q}{L},
\\
\label{definition-Fn6}
E_{n6}^\prime & = & \max_{i\in[n]}\left(\sum_{j\in[n]\backslash\{i\}}m_{ij}(\bar{a}_{ij}-\mu(\pi_{ij}^*) )\right)^2\leq C\frac{nq\log n}{L},
\\
\label{definition-Fn7}
E_{n7}^\prime & = & \max_{i\in[n]}\sum_{j\in[n]\backslash\{i\}}m_{ij}(\bar{a}_{ij}-\mu(\pi_{ij}^*))^2\leq C\frac{nq}{L},
\end{eqnarray}
where $C>0$ denotes some constant.
Then, for some constant $C>0$, we have
\[
\mathbb{P}(E_{n5}^\prime) \ge 1-O(n^{-10}),
\quad
\mathbb{P}(E_{n6}^\prime)\ge 1-O(n^{-10}),
\quad
\mathbb{P}(E_{n7}^\prime)\ge 1-O(n^{-10}).
\]
uniformly over all $\theta^*\in\Theta(k,0,\kappa)$.
\end{lemma}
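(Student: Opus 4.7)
The overall plan is to condition on the realization of the Erdős–Rényi graph $M$ (on the good event $E_{n1}^\prime$ from Lemma~\ref{lem-con-M}) and reduce each claim to a concentration inequality for the independent, bounded, mean-zero variables $\xi_{ij}:=\bar{a}_{ij}-\mu(\pi_{ij}^*)$, $0\le i<j\le n$. Because $\max_i\theta_i^*-\min_i\theta_i^*=O(1)$, we have $\mu(\pi_{ij}^*)(1-\mu(\pi_{ij}^*))\in[c_*,1/4]$ for some $c_*>0$, so $|\xi_{ij}|\le 1$, $\mathbb{E}\xi_{ij}=0$, and $\mathrm{Var}(\xi_{ij})\asymp 1/L$; and the $\xi_{ij}$ are independent across distinct unordered pairs $\{i,j\}$. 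On $E_{n1}^\prime$, every row sum satisfies $nq/2\le m_i\le 3nq/2$, and this is the only property of $M$ used below.

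For $E_{n7}^\prime$, fix $i$ and set $S_i:=\sum_{j\ne i}m_{ij}\xi_{ij}^2$. Conditional on $M$, this is a sum of $m_i$ independent nonnegative bounded variables with $\mathbb{E}[S_i\mid M]\le m_i/(4L)\lesssim nq/L$ and per-summand variance $O(1/L)$. Bernstein's inequality gives $\mathbb{P}(|S_i-\mathbb{E}[S_i\mid M]|\ge t\mid M)\le 2\exp\{-ct^2/(nq/L+t)\}$; choosing $t\asymp\sqrt{nq\log n/L}+\log n$ and a union bound over $i\in[n]$ yields (3). For $E_{n6}^\prime$, write $T_i:=\sum_{j\ne i}m_{ij}\xi_{ij}$, which is a sum of $m_i$ independent centered variables bounded by $1$ with conditional variance $\le m_i/(4L)\lesssim nq/L$; Bernstein again gives $\mathbb{P}(|T_i|\ge t\mid M)\le 2\exp\{-ct^2/(nq/L+t)\}$, and $t=C\sqrt{nq\log n/L}$ together with a union bound delivers (2).

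For $E_{n5}^\prime$ I would expand $\sum_i T_i^2=\sum_{i\ne j}m_{ij}\xi_{ij}^2+2\sum_i\sum_{j<k,\,j,k\ne i}m_{ij}m_{ik}\xi_{ij}\xi_{ik}$. The diagonal part equals $2\sum_{i<j}m_{ij}\xi_{ij}^2$ and is $\lesssim n^2q/L$ by the $\ell_1$-summed version of (3) (or by applying Bernstein directly to this single sum of $\sum_{i<j}m_{ij}\asymp n^2q$ independent variables with mean $\lesssim n^2q/L$). The off-diagonal part is a mean-zero degree-2 polynomial in the independent bounded variables $\{\xi_{ij}\}$; its conditional second moment equals $4\sum_i\sum_{j<k,\,j,k\ne i}m_{ij}m_{ik}\mathrm{Var}(\xi_{ij})\mathrm{Var}(\xi_{ik})=O(n^2q^2/L^2)$, which is $o(n^2q/L)$ since $q=o(1)$. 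Concentration at the scale $n^2q/L$ then follows from a Hanson–Wright-type tail bound for quadratic forms in independent bounded variables (e.g., a decoupling plus Bernstein argument applied after symmetrization).

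The main obstacle is claim (1): the straightforward approach $\sum_i T_i^2\le n\max_i T_i^2$ combined with (2) loses a factor of $\log n$, producing $n^2q\log n/L$ rather than $n^2q/L$. Removing this log requires genuine concentration of the $\ell_2$-norm of the vector $(T_1,\ldots,T_n)$, i.e., a quadratic-form (Hanson–Wright) estimate for the off-diagonal part above; the other two claims reduce to routine applications of Bernstein's inequality and union bounds. Throughout, absorbing the randomness of $M$ is harmless because all conditional bounds depend on $M$ only through the sizes $m_i$, which are already controlled on $E_{n1}^\prime$.
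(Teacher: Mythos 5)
Your treatment of $E_{n6}^\prime$ and $E_{n7}^\prime$ matches the paper's: conditional on the degree event from Lemma \ref{lem-con-M}, apply Hoeffding/Bernstein to each row sum and union bound over $i$. For $E_{n5}^\prime$ — correctly identified as the only nontrivial claim, since $n\max_i T_i^2$ loses a $\log n$ — you take a genuinely different route. The paper bounds $\bigl(\sum_i T_i^2\bigr)^{1/2}\le 2\max_{v\in\mathcal{V}}\sum_{i<j}m_{ij}(v_i-v_j)\xi_{ij}$ using a $\tfrac12$-net $\mathcal{V}$ of the unit ball with $|\mathcal{V}|\le 5^n$ (Vershynin's covering lemma), applies Hoeffding to each fixed direction with variance proxy $\sum_{i<j}m_{ij}(v_i-v_j)^2/L\le\lambda_{\max}(\mathcal{L}_M)/L\lesssim nq/L$ via Lemmas \ref{lem-con-M} and \ref{lem-Laplacian-con}, and absorbs the $5^n$ union bound into the exponent; this reduces everything to linear concentration and reuses the Laplacian eigenvalue bound already in hand. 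You instead split $\sum_i T_i^2$ into a diagonal part (handled by Bernstein) and an off-diagonal quadratic form handled by a Hanson--Wright-type bound. This works: the $\xi_{ij}$ are independent, mean-zero, and sub-Gaussian with parameter $O(L^{-1/2})$, the coefficient matrix has $\|B\|_F^2\asymp\sum_i m_i^2\asymp n^3q^2$ and $\|B\|\lesssim nq$ on the degree event, so the tail at scale $t\asymp n^2q/L$ is $e^{-cn}$, more than the required $n^{-10}$. Two small corrections: your conditional second moment of the off-diagonal part should be $O(n^3q^2/L^2)$, not $O(n^2q^2/L^2)$, since $\sum_i\sum_{j<k}m_{ij}m_{ik}\asymp\sum_i m_i^2\asymp n^3q^2$ (the conclusion survives, as $n^{3/2}q/L\ll n^2q/L$); and the second moment should be compared with $(n^2q/L)^2$ rather than with $n^2q/L$. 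The net argument buys a shorter, self-contained proof from tools already established; your route buys a sharper structural understanding of where the log is lost, at the cost of invoking (or proving) a quadratic-form concentration inequality that you currently leave as a sketch.
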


\begin{proof}[Proof of Lemma \ref{lem-Mij-sum3}]

Let $\mathcal{U}=\left\{u\in\mathbb{R}^n:\sum_{i\in[n]}u_i^2\leq1\right\}$ be the unit ball in $\mathbb{R}^n$.
By Lemma 5.2 in \cite{vershynin_2012}, there exists a subset $\mathcal{V}$ of $\mathcal{U}$
with its cardinality less than $5^n$ such that for any $u\in\mathcal{U}$, there is a
$v\in\mathcal{V}$ satisfying
\[
\| u-v\|_2 \leq \frac{1}{2}.
\]
Then for any $u\in\mathcal{U}$, with the corresponding $v\in\mathcal{V}$, we have
\begin{align*}
&\sum_{i=1}^nu_i\left(\sum_{j\in[n]\backslash\{i\}}m_{ij}\left\{\bar{a}_{ij}-\mu(\pi_{ij}^*)\right\}\right)\\
=&\sum_{i=1}^nv_i\left(\sum_{j\in[n]\backslash\{i\}}m_{ij}(\bar{a}_{ij}-\mu(\pi_{ij}^*))\right)
+\sum_{i=1}^n(u_i-v_i)\left(\sum_{j\in[n]\backslash\{i\}}m_{ij}(\bar{a}_{ij}-\mu(\pi_{ij}^*))\right)\\
\leq&\sum_{i=1}^nv_i\left(\sum_{j\in[n]\backslash\{i\}}m_{ij}(\bar{a}_{ij}-\mu(\pi_{ij}^*))\right)
+\frac{1}{2}\sqrt{\sum_{i=1}^n\left(\sum_{j\in[n]\backslash\{i\}}m_{ij}(\bar{a}_{ij}-\mu(\pi_{ij}^*))\right)^2}.
\end{align*}
Maximize $u$ and $v$ on both sides of the inequality, after rearrangement, we have
\begin{eqnarray*}
&&\sqrt{\sum_{i=1}^n\left(\sum_{j\in[n]\backslash\{i\}}m_{ij}(\bar{a}_{ij}-\mu(\pi_{ij}^*))\right)^2}\\
&\leq & 2\max_{v\in\mathcal{V}}\sum_{i=1}^nv_i\left(\sum_{j\in[n]\backslash\{i\}}m_{ij}(\bar{a}_{ij}-\mu(\pi_{ij}^*))\right)\\
&= & 2\max_{v\in\mathcal{V}}\sum_{i<j}m_{ij}(v_i-v_j)(\bar{a}_{ij}-\mu(\pi_{ij}^*)),
\end{eqnarray*}
where maximizing $u$ gives
\[
\max_u \sum_{i=1}^nu_i\left(\sum_{j\in[n]\backslash\{i\}}m_{ij}\left\{\bar{a}_{ij}-\mu(\pi_{ij}^*)\right\}\right)
= \sqrt{\sum_{i=1}^n \left(\sum_{j\in[n]\backslash\{i\}}m_{ij}\left\{\bar{a}_{ij}-\mu(\pi_{ij}^*)\right\}\right)^2 }.
\]
Conditional on $M$, applying Hoeffding's inequality and the union bound, we have
\begin{align*}
\sum_{i=1}^n\left(\sum_{j\in[n]\backslash\{i\}}m_{ij}(\bar{a}_{ij}-\mu(\pi_{ij}^*))\right)^2&\leq C^{\prime\prime}\frac{(\log n+n)\max_{v\in\mathcal{V}}\sum_{i<j}A_{ij}(v_i-v_j)^2}{L}\\
&\leq C^{\prime\prime}\frac{(\log n+n)\lambda_{\max}(\mathcal{L}_A)}{L}
\end{align*}
with probability at least $1-O(n^{-10})$.
By Lemmas \ref{lem-con-M} and \ref{lem-Laplacian-con}, we obtain the desired bound for the first conclusion.

The second conclusion is a direct application of Hoeffding's inequality and a union bound argument.

We bound $\sum_{j\neq i} m_{ij} \left\{ \bar{a}_{ij} - \mu(\pi_{ij}^*) \right\}^2$ via Bernstein's inequality.
Note that
\[
\max_{i,j} | \bar{a}_{ij} - \mu(\pi_{ij}^*) | \le \max\{ \max_{i,j}p_{ij}, \max_{i,j}(1-p_{ij} ) \} \le 1.
\]
A direct calculation gives that
\begin{eqnarray*}
 \E ( \bar{a}_{ij} -  \mu( \pi_{ij}^* ) )^4 & = & \frac{1}{L^4} \E ( \sum_{k=1}^L \bar{a}_{ijk} )^4  \\
 & = &  \frac{1}{L^4} \left\{  \sum_{k=1}^L \E \bar{a}_{ijk}^4 +  \sum_{k,l=1,k\neq l}^L \E \bar{a}_{ijk}^2 \bar{a}_{ijl}^2 \right\} \\
 & = & \frac{1}{L^3} \left\{ p_{ij}(1-p_{ij})[ p_{ij}^3 + (1-p_{ij})^3] + (L-1) p_{ij}(1-p_{ij}) \right\} \\
 & \le & \frac{1}{L^2} p_{ij}(1-p_{ij}),
\end{eqnarray*}
and
\begin{eqnarray*}
 \E ( \bar{a}_{ij} -  \mu( \pi_{ij}^* ) )^2 & = & \frac{1}{L^2} \E ( \sum_{k=1}^L \bar{a}_{ijk} )^2  \\
 & = &  \frac{1}{L^2} \left\{  \sum_{k=1}^L \E \bar{a}_{ijk}^2  \right\} \\
 & = & \frac{1}{L} \left\{ p_{ij}(1-p_{ij}) \right\}.
\end{eqnarray*}
Conditional on $M$, with probability $1-2n^{-a}$, we have
\begin{eqnarray*}
&&\left|  \left\{\sum_{j\neq i} m_{ij} \left\{ \bar{a}_{ij} - \mu(\pi_{ij}^*) \right\}^2
- \E \sum_{j\neq i} m_{ij} \left\{ \bar{a}_{ij} - \mu(\pi_{ij}^*) \right\}^2 \right\}  \right| \\
&\le & \sqrt{ 2a\log m_i \times \sum_{j\neq i} m_{ij}\frac{1}{L^2} p_{ij}(1-p_{ij}) } + \frac{2a}{3}\log m_i.
\end{eqnarray*}
With the use of the union bound, we have
\[
\max_{i\in[n]}\sum_{j\in[n]\backslash\{i\}}m_{ij}(\bar{a}_{ij}-\mu(\pi_{ij}^*))^2\leq C_1\frac{\log n+\max_{i\in[n]}\sum_{j\in[n]\backslash\{i\}}m_{ij}}{L},
\]
with probability at least $1-O(n^{-10})$.
Finally, applying Lemma \ref{lem-con-M}, we obtain the desired bound for the third conclusion.
It completes the proof.
\end{proof}

\begin{lemma}\label{lemma-Q-upper-bound-b}
Assume $q\ge c \log n/n$ for a sufficiently large constant $c$.
Let $E_{n8}^\prime$ and $E_{n9}^\prime$ denote the events
\begin{eqnarray}
\label{def-En8}
E_{n8}^\prime := \left\{ 
\max\limits_{i=0, \ldots, n} |d_i-\E d_i| = O( \sqrt{ nq \log n} ) \right\},
\\
\label{def-En9}
E_{n9}^\prime : = \Big\{ \|Q(\bs{\beta}^*, \bs{\gamma}^*) \|_2 = O(  \kappa (n^2 q\log n)^{1/2} ) \Big\}.
\end{eqnarray}
For large $n$, we have
\begin{eqnarray}
\label{ineq-En8}
\P( E_{n8}^\prime ) &  \ge & 1 - O( (nq)^{-1} ),
\\
\label{ineq-En9}
\P( E_{n9}^\prime ) & \ge &  1 -  O(  \frac{ 2p }{ (nq)^2 } ).
\end{eqnarray}
\end{lemma}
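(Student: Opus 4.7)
The plan is to mimic the structure of Lemma \ref{lemma-Q-upper-bound}, but with an additional conditioning step that handles the randomness of the comparison graph $M$ generated from $\mathcal{G}(n, q_n)$. Conditional on $M$, the outcomes $\{a_{ijs}\}$ are independent Bernoulli random variables, so the usual concentration machinery (Hoeffding/Bernstein) applies; the role of $q$ enters through the typical size of the graph statistics, which is controlled by Lemma \ref{lem-con-M}.

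For $E_{n8}^\prime$, the first step is to observe that, conditional on $M$,
\[
d_i - \E[d_i\mid M]=\sum_{j\neq i}\sum_{s=1}^{m_{ij}}(a_{ijs}-\mu^*_{ijs})
\]
is a sum of $m_i=\sum_{j\neq i} m_{ij}$ independent mean-zero random variables, each bounded by $1$. Hoeffding's inequality then gives $\Pr(|d_i-\E[d_i\mid M]|>t\mid M)\le 2\exp(-2t^{2}/m_i)$. On the event $E_{n1}^\prime$ of Lemma \ref{lem-con-M}, $m_i\le \tfrac{3}{2}nq$ uniformly in $i$, so choosing $t=C\sqrt{nq\log n}$ with a suitable constant $C$ drives the tail below $n^{-c}$ for any desired $c$. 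A union bound over $i\in\{0,\ldots,n\}$ together with $\Pr(E_{n1}^\prime)\ge 1-O(n^{-c})$ (absorbing both contributions) then delivers the claimed $O((nq)^{-1})$ failure rate, since $n^{-c}\le (nq)^{-1}$ whenever $q\le 1$ and $c\ge 2$.

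For $E_{n9}^\prime$, I would again condition on $M$ and treat each coordinate separately. Writing $Q_k(\bs{\beta}^*,\bs{\gamma}^*)=\sum_{i<j}\sum_{s=1}^{m_{ij}}z_{ijs,k}(\mu^*_{ijs}-a_{ijs})$ and using $|z_{ijs,k}|\le\|Z_{ijs}\|_2\le\kappa$, the summands are independent centered random variables bounded by $\kappa$, and the number of terms is $\sum_{i<j}m_{ij}=O(n^2q)$ on $E_{n1}^\prime$. Hoeffding then yields $\Pr(|Q_k|>t\mid M)\le 2\exp(-2t^{2}/(\kappa^{2}\sum m_{ij}))$; taking $t\asymp \kappa\sqrt{n^{2}q\log n}$ makes each coordinate tail of order $(nq)^{-2}$, and a union bound over $k=1,\ldots,p$ gives the stated $O(p/(nq)^{2})$ failure rate. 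The $\ell_{2}$ bound in the statement then follows either from the coordinatewise bound after absorbing factors into the constants, or, if a dimension-free $\ell_{2}$ bound is desired, from a direct application of a vector Hoeffding/Bernstein inequality to $\sum_{i<j,s}Z_{ijs}(\mu^*_{ijs}-a_{ijs})$ whose variance proxy is $\sum m_{ij}\|Z_{ijs}\|_{2}^{2}=O(\kappa^{2}n^{2}q)$.

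The main obstacle, and where bookkeeping has to be careful, is the interplay between the two sources of randomness: the conditional bounds depend on $M$ through $m_i$ and $\sum m_{ij}$, and one must integrate out $M$ using the high-probability event $E_{n1}^\prime$ so that the final failure probability is controlled by $q$ rather than a constant. The assumption $q\ge c_1\log n/n$ with $c_1$ sufficiently large is what makes $\Pr(E_{n1}^\prime)^{c}$ small enough to be absorbed into the stated $O((nq)^{-1})$ and $O(p/(nq)^{2})$ error terms without degrading the concentration rates.
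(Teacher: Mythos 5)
Your proposal is correct and takes essentially the same route as the paper, which omits this proof with the remark that it is "similar to Lemma \ref{lemma-Q-upper-bound}" — i.e., Hoeffding's inequality applied coordinatewise plus a union bound. Your added step of conditioning on the comparison graph $M$ and invoking the event $E_{n1}^\prime$ of Lemma \ref{lem-con-M} to replace $m_i$ and $\sum_{i<j}m_{ij}$ by their orders $nq$ and $n^2q$ is exactly the bookkeeping the paper's "similar and omitted" proof requires, and your choice of thresholds makes the residual failure probabilities absorbable into the stated $O((nq)^{-1})$ and $O(p/(nq)^2)$ rates.
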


\begin{proof}
The proofs are similar to those for proving Lemma \ref{lemma-Q-upper-bound} and are omitted.
\end{proof}

\subsection{Error bound for $\widehat{\theta}_\gamma$}
\label{subsec}

Recall that the log-likelihood function is
\renewcommand{\arraystretch}{1.3}
\begin{equation}\label{eq:likelihood-BTL}
\ell(\bs{\theta}, \bs{\gamma})
 =   \sum\limits_{0 \le i < j \le n} \sum\limits_{k=1}^{L} m_{ij}\{ a_{ijk}(\theta_i - \theta_j + Z_{ijk}^\top \bs{\gamma} )
- \log ( 1 + e^{\theta_i - \theta_j + Z_{ijk}^\top \bs{\gamma} } ) \}.
\end{equation}
Let $\ell_\gamma(\bs{\theta})$ be the value of $\ell(\bs{\theta}, \bs{\gamma})$ with $\bs{\gamma}$ as a fixed variable and
$\bs{\theta}_\gamma$ be
\[
\widehat{\bs{\theta}}_\gamma := \mathrm{arg}\max_{\bs{\theta}} \ell_\gamma(\bs{\theta}).
\]

\begin{lemma}
\label{lemma-theta}
Suppose that $p \ge c_0 \log n/n$ for a sufficiently large $c_0$, $\kappa= \sup_{i,j,k} \| Z_{ijk} \|_2\le c_1$, $\|\bs{\beta}^*\|_\infty \le c_2$
and $\|\bs{\gamma}^*\|_2 \le c_3$ for some constants $c_1$, $c_2$ and $c_3$.
Conditional on the events $E_{n1}^\prime, \ldots, E_{n9}^\prime$,
for any $\bs{\gamma}\in B( \bs{\gamma}^*, \epsilon_{n2})$ with $\epsilon_{n2}=o(1)$,
 we have
\begin{equation}
\label{eq:finally-done}
\|\widehat{\bs{\theta}}_\lambda - \bs{\theta}^*\|_{\infty} = O\Big( \sqrt{\frac{\log n}{nqL}} \Big).
\end{equation}
Further, it is unique.
\end{lemma}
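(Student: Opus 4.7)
The strategy mirrors the dense-case argument of Lemma \ref{lemma-consistency-beta}: run the Newton iteration for $H_\gamma(\bs{\theta}) = 0$ with initial point $\bs{\theta}^*$ and invoke Lemma \ref{lemma:Newton:Kantovorich}. The two quantities that drive the conclusion are
\[
\eta = \big\| [H'_\gamma(\bs{\theta}^*)]^{-1} H_\gamma(\bs{\theta}^*) \big\|_\infty,
\qquad
K = \sup_{\mathbf{x},\mathbf{y} \in B(\bs{\theta}^*,\epsilon_{n1})} \frac{\| [H'_\gamma(\bs{\theta}^*)]^{-1}(H'_\gamma(\mathbf{x}) - H'_\gamma(\mathbf{y})) \|_\infty}{\|\mathbf{x}-\mathbf{y}\|_\infty},
\]
and the target rate will emerge from $\eta$. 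Existence and the $\ell_\infty$-bound then follow from Kantorovich's theorem once $K\eta \le 1/2$, while uniqueness comes from the strict concavity of $\ell_\gamma$ on $\{\bs{\theta}: \mathbf{1}^\top\bs{\theta}=0\}$.

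First I would bound $\| H_\gamma(\bs{\theta}^*)\|_\infty$. Decompose
\[
H_\gamma(\bs{\theta}^*) \;=\; H_{\gamma^*}(\bs{\theta}^*) \;+\; \bigl\{ H_\gamma(\bs{\theta}^*) - H_{\gamma^*}(\bs{\theta}^*) \bigr\}.
\]
The first summand is $L \sum_{j\ne i} m_{ij}(\mu(\pi_{ij}^*) - \bar{a}_{ij})$ coordinate-wise, so the $E_{n6}'$ bound (or the analog of Lemma \ref{lemma-Q-upper-bound-b}) yields $\|H_{\gamma^*}(\bs{\theta}^*)\|_\infty \lesssim \sqrt{L\,nq\log n}$. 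For the $\bs{\gamma}$-perturbation, a mean-value expansion combined with $|\mu'|\le 1/4$, $\|Z_{ijk}\|_2 \le \kappa$, and the degree bound $\max_i \sum_{j\ne i} m_{ij} \le \tfrac{3}{2}nq$ from $E_{n1}'$ gives the bound $\tfrac14 L \cdot \tfrac32 nq \cdot \kappa \|\bs{\gamma}-\bs{\gamma}^*\|_2 = O(nqL\epsilon_{n2})$, which is of strictly smaller order.

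The main obstacle is inverting $V := H'_\gamma(\bs{\theta}^*)$ in $\ell_\infty$: under sparse sampling the off-diagonal entries $-Lm_{ij}\mu'(\pi_{ij}^*)$ are either $0$ or of order $L$, so $V \notin \mathcal{L}_{n+1}(b_0,b_1)$ with $b_0>0$ and the Simons--Yao surrogate is not directly available. I would bypass this by comparing $V$ to its ``population'' counterpart $\bar V$, obtained by replacing $m_{ij}$ with $q$; then $\bar V \in \mathcal{L}_{n+1}(c_0\, qL, c_1\, qL)$ and Lemma 1 supplies an explicit $\bar S$ with $\|\bar V^{-1} - \bar S\|_{\max} = O(1/(n^2 qL))$ and $\bar S_{ii} \asymp 1/(nqL)$. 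Writing $V^{-1} = \bar V^{-1} - V^{-1}(V-\bar V)\bar V^{-1}$ and controlling the perturbation row-wise via the Bernstein-type events $E_{n2}'$ and $E_{n3}'$ applied with weights $w_{ij} = L\mu'(\pi_{ij}^*)$, one obtains $\|V^{-1}-\bar S\|_\infty = o(1/(nqL))$, so
\[
\eta \;\le\; \bigl\|\bar S\,H_\gamma(\bs{\theta}^*)\bigr\|_\infty + o(\cdot) \;\lesssim\; \frac{\sqrt{L\,nq\log n}}{nqL} \;=\; O\!\left(\sqrt{\tfrac{\log n}{nqL}}\right).
\]
The Lipschitz constant $K$ is routine: the row sums of the Hessian of $H_{\gamma,i}$ are $\lesssim nqL$ on $E_{n1}'$ by the second-derivative inequality \eqref{inequ:second:deri} adapted to the sparse case, and multiplication by the $\ell_\infty$ operator bound on $V^{-1}$ (of order $1/(nqL)$, to leading order, by the $\bar S$ analysis) gives $K = O(1)$. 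Hence $K\eta = o(1)$, the Kantorovich hypothesis $K\eta \le 1/2$ is met for $n$ large, and Lemma \ref{lemma:Newton:Kantovorich} produces the unique MLE $\widehat{\bs{\theta}}_\gamma$ with the claimed $\ell_\infty$ rate.
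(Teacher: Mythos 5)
Your proposal takes a genuinely different route from the paper. The paper's own proof of this lemma is essentially a reduction: it observes that $\sup_{i,j,k}|Z_{ijk}^\top\bs{\gamma}|\le C$ for $\bs{\gamma}\in B(\bs{\gamma}^*,\epsilon_{n2})$, so the covariate term does not alter the orders of any derivative of $\ell_\gamma(\bs{\theta})$, and then invokes the argument of Theorem 3.1 of \cite{chen2020partial} conditional on the events $E_{n1}^\prime,\ldots,E_{n9}^\prime$ (which are precisely the concentration events that analysis needs). That cited analysis obtains the $\ell_\infty$ rate in the sparse regime via a leave-one-out construction, not by inverting the empirical Hessian in the $\ell_\infty$ operator norm. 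You instead attempt a self-contained Newton--Kantorovich argument with a population-surrogate inverse, which is more in the spirit of the paper's dense-case Lemma \ref{lemma-consistency-beta}.

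The difficulty is that your key step --- $\|V^{-1}-\bar S\|_\infty = o(1/(nqL))$ via $V^{-1}=\bar V^{-1}-V^{-1}(V-\bar V)\bar V^{-1}$ --- does not follow from the tools you cite. The $\ell_\infty$ operator norm of the perturbation is an \emph{absolute} row sum: the off-diagonal contribution to row $i$ is $L\sum_{j\ne i}|m_{ij}-q|\,\mu'(\pi_{ij}^*)$, and since $|m_{ij}-q|$ equals $1-q$ or $q$, this sum is of order $(1-q)\sum_j m_{ij} + q(n-\sum_j m_{ij}) \asymp nq$ on $E_{n1}^\prime$. Hence $\|V-\bar V\|_\infty \asymp nqL$, the same order as $\|\bar V\|_\infty$, and submultiplicativity only yields $\|V^{-1}(V-\bar V)\bar V^{-1}\|_\infty = O(\|V^{-1}\|_\infty)$, with no decay. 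The events $E_{n2}^\prime$ and $E_{n3}^\prime$ control \emph{signed} weighted sums $\sum_j w_{ij}(m_{ij}-q)$ and quadratic forms, which is not what the operator norm of the perturbation requires; this is exactly the obstruction that forces the leave-one-out machinery in the sparse-comparison literature, and your sketch does not overcome it. A secondary issue: your bound $O(nqL\,\epsilon_{n2})$ on the $\bs{\gamma}$-perturbation of $H_\gamma(\bs{\theta}^*)$ is only ``of strictly smaller order'' than $\sqrt{nqL\log n}$ if $\epsilon_{n2}=o(\sqrt{\log n/(nqL)})$, which is far stronger than the hypothesis $\epsilon_{n2}=o(1)$; as written, after multiplying by $\|V^{-1}\|_\infty\asymp 1/(nqL)$ that term contributes $O(\epsilon_{n2})$ to $\eta$ and would dominate the claimed rate.
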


\begin{proof}
Since we assume $\kappa= \sup_{i,j,k} \| Z_{ijk} \|_2$ and $\|\bs{\gamma}^*\|_2$ are bounded above by a constant,
\[
\sup_{i,j,k} | Z_{ijk}^\top \bs{\gamma} | \le C,
\]
for any $\bs{\gamma}\in B( \bs{\gamma}^*, \epsilon_{n2})$.
This does not have influence on the orders of the derivatives of $\ell(\bs{\theta}, \bs{\gamma})$, in contrast to
the log-likelihood function $\ell(\bs{\theta})$ without the covariates.
Therefore, conditional on the events $E_{n1}^\prime, \ldots, E_{n9}^\prime$,
with the similar arguments as  in the proof of Theorem 3.1 of \cite{chen2020partial}, we have \eqref{eq:finally-done}.
\end{proof}

\subsection{Error bound for $\widehat{\gamma}_\theta$}
\label{subsec}

Recall that the log-likelihood function is
\begin{equation*}
\ell(\bs{\theta}, \bs{\gamma})
 =   \sum\limits_{0 \le i < j \le n} \sum\limits_{k=1}^{L} m_{ij}\{ a_{ijk}(\theta_i - \theta_j + Z_{ijk}^\top \bs{\gamma} )
- \log ( 1 + e^{\theta_i - \theta_j + Z_{ijk}^\top \bs{\gamma} } ) \}.
\end{equation*}
Let $\ell_\theta(\bs{\gamma})$ be the value of $\ell(\bs{\theta}, \bs{\gamma})$ with $\bs{\theta}$ as a fixed variable and
$\bs{\gamma}_\theta$ be
\[
\widehat{\bs{\gamma}}_\theta := \mathrm{arg}\max_{\bs{\gamma}} \ell_\theta(\bs{\gamma}).
\]

\begin{lemma}
\label{lemma-con-gamma-c}
Conditional on the events $E_{n8}^\prime$  and $E_{n9}^\prime$,
for any $\beta \in B(\beta^*, \epsilon_{n1})$ with $\epsilon_{n1}=O( (\log n)^{1/2}/(nq_n)^{1/2})$, if $p_n^2 = o( \log n/(nq_n))$ and
\begin{equation}
\label{condition-design}
\lambda_{\min}(\sum_{i<j}\sum_k Z_{ijk}Z_{ijk}^\top) \ge c_0 (nq_n)^2,
\end{equation}
then there exists a unique solution $\bs{\hat{\gamma}}_\beta$ to the equation $Q_\beta(\bs{\gamma})=0$ and it satisfies
\[
\| \bs{\hat{\gamma}}_\theta - \bs{\gamma}^* \|_2 =  O\left( \sqrt{\frac{p_n\log n}{nq_n}} \right)= o(1).
\]
\end{lemma}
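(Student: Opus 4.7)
The plan is to mimic the argument used for Lemma \ref{lemma-con-gamma-b} in the dense case, replacing every occurrence of $n$ in the denominators by the effective graph size $nq_n$ and treating the random adjacency matrix via the high-probability events $E_{n1}^\prime,\ldots,E_{n9}^\prime$ assembled in Section \ref{subsec:supported-lemma}. By Lemma \ref{lemma-root}, the existence of the root $\widehat{\bs{\gamma}}_\theta$ in the ball of radius $r_n := \Delta\sqrt{p_n\log n/(nq_n)}$ around $\bs{\gamma}^*$ is implied by
\[
\sup_{\|\bs{\gamma}-\bs{\gamma}^*\|_2 = r_n} (\bs{\gamma}-\bs{\gamma}^*)^\top Q_\beta(\bs{\gamma}) \; > \; 0
\]
for a sufficiently large $\Delta$. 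Uniqueness then follows from the strict convexity of $\bs{\gamma}\mapsto -\ell_\theta(\bs{\gamma})$, which is guaranteed by condition \eqref{condition-design} combined with $\min_{i,j,k}\mu^\prime(\pi_{ijk})\ge c>0$ on the ball (the latter holds because $\|\bs{\beta}^*\|_\infty$, $\|\bs{\gamma}^*\|_2$, and $\kappa$ are all bounded, and the ball radius is $o(1)$ under $p_n^2=o(nq_n/\log n)$).

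The core step is to reproduce the decomposition $(\bs{\gamma}-\bs{\gamma}^*)^\top Q_\beta(\bs{\gamma})=I_1+I_2+I_3$ from \eqref{eq-le-g-I123}, with $I_2$ further split into $I_{21}+I_{22}$ as in \eqref{eq-lem-g-I21-I22}. For $I_1$, Cauchy--Schwarz and $E_{n9}^\prime$ supply
$|I_1|\le \|\bs{\gamma}-\bs{\gamma}^*\|_2\,\|Q(\bs{\beta}^*,\bs{\gamma}^*)\|_2 \lesssim r_n\cdot (n^2q_n\log n)^{1/2}$. For the deterministic quadratic part $I_{21}$, the mean-value theorem gives
\[
I_{21} \;\ge\; \min_{i,j,k}\mu^\prime(\pi_{ijk}^*)\cdot(\bs{\gamma}-\bs{\gamma}^*)^\top\!\!\sum_{i<j}\sum_k Z_{ijk}Z_{ijk}^\top (\bs{\gamma}-\bs{\gamma}^*)\;\ge\; c\, c_0 (nq_n)^2 r_n^2 \;=\; c\,c_0 \Delta^2 nq_n p_n\log n,
\]
using condition \eqref{condition-design} in the sparse regime. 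For $I_{22}$, I would use a Lipschitz-type bound on $\mu^\prime$ together with $\sup_{i,j,k}\|Z_{ijk}\|_2\le c_1$ to get $|I_{22}|\lesssim r_n^3 \cdot (nq_n)^2 = \Delta^3 p_n^{3/2}(\log n)^{3/2}(nq_n)^{1/2}$, exactly as in \eqref{eq-lem-g-I222} modulo replacing $n$ by $nq_n$. For $I_3$, inserting $\|\bs{\beta}-\bs{\beta}^*\|_\infty = O(\sqrt{\log n/(nq_n)})$ (this is the analogue of the bound used in \eqref{eq-ga-I3}) gives $|I_3|\lesssim (nq_n)^2 r_n\|\bs{\beta}-\bs{\beta}^*\|_\infty = \Delta nq_n p_n^{1/2}\log n$.

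Dividing each term by the benchmark $nq_n p_n\log n$ provided by $I_{21}$ yields the ratios
\[
\frac{|I_1|}{nq_n p_n\log n} \lesssim \frac{1}{\Delta\sqrt{nq_n p_n\log n}},\quad
\frac{|I_{22}|}{nq_n p_n\log n} \lesssim \Delta \sqrt{\frac{p_n^2\log n}{nq_n}},\quad
\frac{|I_3|}{nq_n p_n\log n} \lesssim \frac{1}{\sqrt{p_n}}.
\]
All three are $o(1)$ provided $p_n^2=o(nq_n/\log n)$ (which is exactly the assumption), so choosing $\Delta$ large makes $(\bs{\gamma}-\bs{\gamma}^*)^\top Q_\beta(\bs{\gamma})>0$ on the sphere with probability $1-O(n^{-1})$ on $\bigcap_{k=8}^{9}E_{nk}^\prime$. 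The main obstacle is a bookkeeping one: one has to carry the factor $q_n$ consistently through every bound --- in particular, verifying that the concentration inequalities in Lemma \ref{lem-bern-sum} and Lemma \ref{lem-Mij-sum3} deliver a Bernstein-type tail that allows the $\|Q\|_2$ bound in $E_{n9}^\prime$ to hold uniformly --- and to check that the threshold $q_n\ge c_1\log n/n$ is large enough to guarantee $\min_i m_i\asymp nq_n$ via $E_{n1}^\prime$, which is what makes $\lambda_{\min,\perp}(H(\bs{\beta}))\ge cnq_n$ and, in turn, validates the lower bound on $I_{21}$.
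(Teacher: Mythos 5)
Your overall strategy is exactly what the paper intends: its own ``proof'' of this lemma is the single sentence that the argument is similar to that of Lemma \ref{lemma-con-gamma-b}, and your plan (Lemma \ref{lemma-root} on the sphere of radius $r_n=\Delta\sqrt{p_n\log n/(nq_n)}$, the decomposition \eqref{eq-le-g-I123} with $I_2$ split as in \eqref{eq-lem-g-I21-I22}, and the events $E_{n8}^\prime,E_{n9}^\prime$ in place of $E_{n1},E_{n2}$) is the intended transcription. So there is no divergence of method; the issue is whether the transcription actually closes.

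It does not, as written, and the gap sits precisely in the bookkeeping you flag. First, the number of observed comparisons on $\mathcal{G}(n,q_n)$ is of order $n^2q_n$, not $(nq_n)^2$, so $\lambda_{\max}\bigl(\sum_{i<j}\sum_k Z_{ijk}Z_{ijk}^\top\bigr)\lesssim \kappa^2 m_* n^2q_n$; your bounds $|I_{22}|\lesssim r_n^3(nq_n)^2$ and $|I_3|\lesssim (nq_n)^2 r_n\|\bs{\beta}-\bs{\beta}^*\|_\infty$ should carry $n^2q_n$ instead, and with the corrected count $|I_3|/(nq_np_n\log n)\asymp 1/(q_n\sqrt{p_n})$, which diverges as $q_n\to0$ rather than being $O(p_n^{-1/2})$. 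Second, and more fundamentally, the assumed eigenvalue floor $\lambda_{\min}\ge c_0(nq_n)^2$ is a factor $q_n$ below the size $n^2q_n$ of the observed design, so the benchmark $I_{21}\ge c\,c_0\Delta^2 nq_np_n\log n$ is too weak: redoing your $I_1$ ratio with $\|Q(\bs{\beta}^*,\bs{\gamma}^*)\|_2$ from $E_{n9}^\prime$ gives $|I_1|/I_{21}\asymp 1/\bigl(\Delta\sqrt{nq_n^2p_n}\bigr)$, not $1/\bigl(\Delta\sqrt{nq_np_n\log n}\bigr)$, and $nq_n^2p_n\to 0$ when $q_n\asymp\log n/n$ under $p_n^2=o(nq_n/\log n)$, so no choice of $\Delta$ makes $I_{21}$ dominate near the connectivity threshold. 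The argument does close if one uses the natural scaling $\lambda_{\min}\bigl(\sum_{i<j}\sum_k Z_{ijk}Z_{ijk}^\top\bigr)\gtrsim n^2q_n$, which is what holds for nondegenerate covariates on an Erd\H{o}s--R\'{e}nyi design and what Lemmas \ref{lem-con-M}--\ref{lem-Laplacian-con} are set up to deliver: then $I_{21}\gtrsim \Delta^2 np_n\log n$ and the three ratios become $O(\Delta^{-1}n^{-1/2})$, $O\bigl(\Delta\sqrt{p_n\log n/(nq_n)}\bigr)$ and $O(\Delta^{-1}p_n^{-1/2})$, all of which vanish (or can be made small by taking $\Delta$ large) under the stated hypotheses, yielding the claimed rate. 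You should therefore either strengthen the eigenvalue hypothesis to the $n^2q_n$ scaling or restrict to $q_n\gtrsim n^{-1/2}$; as proposed, the dominance step fails for sparse graphs.
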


\begin{proof}
The proofs are similar to those for proving Lemma \ref{lemma-con-gamma-b} and are omitted.
\end{proof}

\subsection{Proof of Theorem \ref{Theorem-con-incr}}
\label{subsec-proof-theorem-5}

\begin{proof}[Proof of Theorem \ref{Theorem-con-incr}]
In view of Lemma \ref{lemma-Q-upper-bound-b}, Lemma  \ref{lemma-theta} and Lemma \ref{lemma-con-gamma-c},
the arguments for proving Theorem \ref{Theorem-con-incr} are similar to those in the proof of Theorem \ref{Theorem:con}
and omitted.
\end{proof}

\end{document}